\numberwithin{equation}{section}
\numberwithin{figure}{section}
\theoremstyle{plain}
\newtheorem{thm}{\protect\theoremname}[section]
\theoremstyle{remark}
\newtheorem{claim}[thm]{\protect\claimname}
\theoremstyle{plain}
\newtheorem{lem}[thm]{\protect\lemmaname}
\theoremstyle{plain}
\newtheorem{cor}[thm]{\protect\corollaryname}
\theoremstyle{plain}
\newtheorem{prop}[thm]{\protect\propositionname}
\theoremstyle{definition}
\newtheorem{defn}[thm]{\protect\definitionname}
\theoremstyle{remark}
\newtheorem{rem}[thm]{\protect\remarkname}
\theoremstyle{definition}
\theoremstyle{plain}
\tikzstyle{startstop} = [rectangle, rounded corners, 
\tikzstyle{decision} = [rectangle,
\tikzstyle{arrow} = [thick,->,>=stealth]
\global\long\def\Z{\mathbb{Z}}%
\global\long\def\N{\mathbb{N}}%
\global\long\def\M{\mathcal{M}}%
\global\long\def\U{\mathbb{U}}%
\global\long\def\diff{\,d}%
\global\long\def\tendsto#1#2{\underset{#1\to#2}{\longrightarrow}}%
\global\long\def\A{\mathcal{A}}%
\global\long\def\G{\mathcal{G}}%
\global\long\def\F{\mathcal{F}}%
\global\long\def\X{\mathcal{X}}%
\global\long\def\define#1{\textit{#1}}
\providecommand{\corollaryname}{Corollary}
\providecommand{\definitionname}{Definition}
\providecommand{\claimname}{Claim}
\providecommand{\examplename}{Example}
\providecommand{\lemmaname}{Lemma}
\providecommand{\propositionname}{Proposition}
\providecommand{\remarkname}{Remark}
\providecommand{\theoremname}{Theorem}
\providecommand{\questionname}{Question}
\begin{document}

\title{A perturbed cellular automaton with two phase transitions for the ergodicity}

\author{Hugo MARSAN}
\address{Institut de Mathématiques de Toulouse, Université Paul Sabatier, Toulouse, France}
\email{hugo.marsan@ens-paris-saclay.fr}

\author{Mathieu SABLIK}
\address{Institut de Mathématiques de Toulouse, Université Paul Sabatier, Toulouse, France}
\email{Mathieu.Sablik@math.univ-toulouse.fr}

\author{Ilkka TÖRMÄ}
\address{Department of Mathematics and Statistics, University of Turku, Turku, Finland}
\email{iatorm@utu.fi}
\thanks{Ilkka Törmä was supported by the Academy of Finland grant 359921, and a visiting researcher grant from Paul Sabatier University. M.~Sablik acknowledges the support of the ANR “Difference” project (ANR-20-CE40-0002).}

\begin{abstract}
    The positive rates conjecture states that a one-dimensional probabilistic cellular automaton (PCA) with strictly positive transition rates must be ergodic.
    The conjecture has been refuted by Gács, whose counterexample is a cellular automaton that is non-ergodic under uniform random noise with sufficiently small rate.
    For all known counterexamples, non-ergodicity has been proved under small enough rates.
    Conversely, all cellular automata are ergodic with sufficiently high-rate noise.
    No other types of phase transitions of ergodicity are known, and the behavior of known counterexamples under intermediate noise rates is unknown.
    
    We present an example of a cellular automaton with two phase transitions.
    Using Gács's result as a black box, we construct a cellular automaton that is ergodic under small noise rates, non-ergodic for slightly higher rates, and again ergodic for rates close to 1.
\end{abstract}

\maketitle

\tableofcontents{}

\section{Introduction}

A configuration is a coloring with elements of $\A$, a finite alphabet, of the sites of the lattice $\Z$. A cellular automaton (CA) is a dynamical system on such configurations, obtained by iterating a local update rule simultaneously at each site of the lattice. This simple model has a wide variety of behaviors and it is used to represent locally interacting phenomena. It is natural to study their perturbed counterpart, Probabilistic Cellular Automata (PCA),  to understand the robustness of their computation to noise. One natural way to define a perturbation is to state that after each iteration of a given cellular automaton, each cell is independently modified with probability~$\epsilon$. In this article the new value is uniformly chosen over the alphabet $\A$. Thus this noise has positive rates: every symbol has a positive probability of appearing regardless of the previous state.

The most natural question about the asymptotic behavior of a probabilistic cellular automaton is its ergodicity. A probabilistic cellular automaton is said to be ergodic if  its action on probability measures has a unique fixed point that attracts all the other measures. This means that it asymptotically “forgets” its initial condition since the distribution of the initial configuration always converges to the same distribution. 

Most cellular automata seem to be ergodic under a positive rates perturbation, and a lot of methods have been developed to prove the ergodicity of probabilistic cellular automata~\cite{Vasilyev-78,Gray-1982,Gacs-Torma-2022}. In \cite{MST19} the authors show that large classes of perturbed CA are ergodic. Constructing a cellular automaton robust to noise in the sense that its trajectories remain distinguishable under the influence of noise is a notoriously difficult problem.

The first examples of robust CA, given by A. Toom~\cite{Toom80}, were two-dimensional; the most famous one is the majority vote with neighborhood $\{(0,0), (0,1), (1,0\}$. In dimension one the \define{positive rates conjecture}, i.e. the conjecture that all positive rates PCA were ergodic, stood on several arguments, including the fact that under various assumptions, one-dimensional Ising models are ergodic, while some simple higher-dimensional Ising models admit phase transitions \cite[Section~IV.3]{Liggett05}. %\todo{More examples?} 

In 1986, the positive rates conjecture was refuted by Peter G\'acs in \cite{Gacs86} with an extremely intricate construction of a CA that involves an infinite hierarchy of simulations between increasingly complex generalized cellular automata.
G\'acs later published the extended article \cite{Gacs01} with more detailed proofs and additional results in continuous time (see \cite{Gray01} for a simplified overview), and most recently a further expanded preprint \cite{Gacs24}.
See \cite{Gacs-2024} for a more detailed history of the problem.

In all known examples of CA robust to noise, there is critical value $\epsilon_c$ such that the perturbed cellular automaton is ergodic for any $\epsilon<\epsilon_c$. For $\epsilon\geq\epsilon_c$ the nature of the probabilistic cellular automaton is not known except for $\epsilon$ close to $1$. Indeed, thanks to a percolation argument, it is shown in~\cite{MST19} that given any cellular automaton, there is a constant $\epsilon'$ (which depends only of the radius of the cellular automata) such that the probabilistic cellular automaton is ergodic for any $\epsilon\geq\epsilon'$. Thus, at high noise levels, these cellular automata cannot distinguish trajectories, but if the noise is low enough, these cellular automata are robust to errors. Therefore these examples have at least one transition phase considering ergodicity when the noise varies. 

Those examples raise a question: is it possible for a perturbed cellular automaton to admit several phase transitions? In particular, is there a perturbed cellular automaton which is not ergodic for some $\epsilon$ but which becomes ergodic in the low-noise regime? This seems counter-intuitive since when the noise is small it is easier to keep the trajectories distinguishable. The contribution of this article is to answer these questions: we construct a cellular automaton such that its perturbed version performs at least two phase transitions when the noise varies. It is ergodic when the noise is close to 1 (as any perturbed cellular automata), not ergodic at a fixed $\epsilon\in(0,1)$, and ergodic again when the noise is close to 0 (Theorem~\ref{thm:TransitionDePhase}).  

The idea of the construction is to use the perturbed cellular automaton of Gács which is not ergodic for a small enough noise (it admits a critical value $\epsilon_c$ such that it is not ergodic for $\epsilon\leq\epsilon_c$) and add a layer which controls the quantity of mistakes appearing in the first layer. The second layer is composed by cells with a value between $0$ and $k-1$ which is incremented by $1$ modulo $k$ at each time step, with the value $0$ producing a mistake on the first layer. There are also arrows which move at speed one toward the right, synchronizing the counters they pass over. Those signals cannot live for more than $ak$ time steps. This cellular automaton is described in Section~\ref{section-Description}. 

In Section~\ref{sec:non-ergodicity} we show that if $k$ is large enough and $\epsilon = \frac{1}{\ln\ln\ln(k)}$, for this rate of noise the errors produced by the second layer on the first one are comparable to uniform noise under the critical value $\epsilon_c$, the Gács CA can correct the mistakes. Thus our CA is not ergodic.

On the other hand, when $\epsilon$ goes to $0$, the arrows produced by noise have a high probability to survive a long time before disappearing  and thus synchronize large parts of the spacetime diagram. Such synchronized regions block all information flow on the Gács layer, as the force all cells to make an error simultaneously every $k$ time steps. In Section~\ref{sec:low-noise-ergodicity} we study the dependence cone of the perturbed CA where the borders can be associated to Markov additive chains~\cite{Asmussen03}. If $a$ is large enough, the arrows synchronize sufficiently large zones that the left and the right border collide almost surely.
Hence the initial configuration is forgotten, that is to say the perturbed CA is ergodic.

\section{Perturbations of cellular automata}

\subsection{Definitions}

Let $\A$ be a finite alphabet of symbols. Define $\X = \A^{\Z}$ to be the space of \define{configurations}, endowed with the product topology. A basis of open sets for this topology is the set of cylinders: for $n \in \N^*$, $u\in \A^n$ and $\U = \{i_1, \dots, i_n\} \subset \Z$, we define $[u]_\U$ as 
\[
	[u]_\U = \{ x\in \X \mid \forall j \in \llbracket 1, n \rrbracket, x_{i_j} = u_j \}.
\] 
This topology makes $\X$ compact and metrizable. For a given configuration $x\in\X$, $i\in \Z$ and $r \in \N$, we denote by $x_{i + \llbracket -r, r \rrbracket}$ the $(2r+1)$-uple $(x_{i-r}, x_{i-r+1}, \dots, x_{i+r})$.

A function $F$ on $\X$ is a \define{Cellular Automaton} (CA) of \define{radius} $r\in\N$ if there exists a local rule $f : \A^{2r+1} \to \A$ such that
\[ 
\forall x\in\X, \forall i\in\Z, \quad F(x)_i = f(x_{i + \llbracket -r, r \rrbracket}).
\]
A probability kernel $\Phi$ is a \define{Probabilistic Cellular Automaton} (PCA) of radius $r$ if there exists a stochastic matrix (also called the local rule) $\varphi : \A^{2r+1} \times \A \to [0,1]$ such that 
\[
\forall n\in \N^*, \forall u \in \A^n, \forall \U = \{i_1, \dots, i_n\} \subset \Z, \quad \Phi(x, [u]_\U) = \prod_{j=1}^{n} \varphi( x_{i_j + \llbracket -r, r \rrbracket}, u_j).
\]

Given a CA $F$ and $\epsilon >0$, a PCA $\Phi$ is called an \define{$\epsilon$-perturbation of $F$} if they have the same radius and their respective local rule verify $\forall a \in \A^{2r+1}, \, \varphi(a, f(a)) \geq 1-\epsilon.$
Denote by $F_\epsilon$ (of local rule $f_\epsilon$) the perturbation of $F$ by a uniform noise of size $\epsilon$, defined by:
\[
	\forall a\in \A^{2r+1}, \forall b\in \A, \quad f_\epsilon (a,b) = \begin{cases}
		1 - \epsilon + \frac{\epsilon}{|\A|} & \text{if }b = f(a) \\
		\frac{\epsilon}{|\A|} & \text{otherwise}
	\end{cases}.
\]
It is obviously an $\epsilon$-perturbation of $F$.

Denote by $\M(\X)$ the set of probability measures on $\X$. By compactness of $\X$ this set is also compact and metrizable. The respective actions of a CA $F$ and a PCA $\Phi$, defined for $\mu\in \M(\X)$ and observable $B$ as 
\begin{align*}
F\mu(B) &= \mu(F^{-1} B) \\
\Phi\mu(B) &= \int \Phi(x,B) \diff \mu(x)
\end{align*}
are continuous on $\M(\X)$. By standard ergodic theory arguments, the sets of invariant measures $\M_F = \{ \mu\in\M(\X) \mid F\mu = \mu \}$ and $\M_\Phi$ are not empty.

A PCA is said to be \define{ergodic} if  it has a unique invariant measure $\pi$ that attracts every initial measure $\mu\in\M(\X)$, in the sense that the sequence $\left(\Phi^n\mu\right)_{n\in\N}$ converges weakly toward $\pi$ (i.e. $\Phi^n\mu([u])\underset{n\to\infty}{\longrightarrow}\pi([u])$ for any word $u$). Thus the set of invariant measure is a singleton and the unique measure is attracting.

A PCA $\Phi$ is said to have positive rates if for all $a\in\A^{2r+1}$ and $b\in \A$, $f(a,b)>0$: any neighborhood can give all results, with positive probability. For example, $F_\epsilon$ has positive rates for all $\epsilon > 0$.

\subsection{Interpretation of the G\'acs paper}

In this article we are going to use a direct  consequence of the main result of \cite{Gacs01}. Indeed, it is possible to say that the Gács cellular automaton is ergordic if the distribution of errors is quite near a uniform perturbation under a certain small rate.  If $F$ is a fixed CA, for  a trajectory under the action of the perturbed cellular automaton denoted $(x^t)_{t\in\N}$, we say that there is an \define{error} at cell $(i,t) \in \Z^2$, or at cell $i$ at time $t$, if $x^t_i \neq F(x^{t-1})_i$.

\begin{thm}
\label{thm:Gacs}
There exist n finite alphabet $\G$, a CA $G:\G^\Z\to\G^\Z$  of radius $1$ and a rate $\epsilon_c>0$ verifying the following property: if perturbed by a noise such that for all finite $S\subset \Z$ and all events $H$ in the past,
\[
P(\text{error in each cell of }S \mid H) \leq \epsilon_c^{|S|},
\]
then the associated stochastic dynamical system is not  ergodic.
\end{thm}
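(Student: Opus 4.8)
The plan is to treat Theorem~\ref{thm:Gacs} as a repackaging of the main result of \cite{Gacs01}: the overwhelming majority of the work is \emph{citing} Gács's construction in the right form, and the only genuinely new step is a routine reduction of the radius to~$1$, which I carry out at the end.

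First I would recall the precise output of the construction of \cite{Gacs01}: a finite alphabet $\G_0$, a CA $G_0\colon\G_0^\Z\to\G_0^\Z$ of some radius $r_0\in\N$, and a rate $\epsilon_0>0$ such that \emph{any} stochastic dynamical system $(x^t)_{t\in\N}$ on $\G_0^\Z$ whose error process satisfies, for every finite $S\subset\Z$ and every event $H$ in the past,
\[
P\bigl(x^t_i\neq G_0(x^{t-1})_i \text{ for all } i\in S \bigm| H\bigr)\leq \epsilon_0^{|S|},
\]
fails to be ergodic. The point that must be extracted from \cite{Gacs01} (see also the overview \cite{Gray01} and the recent account \cite{Gacs24}) is that the non-ergodicity proof never uses independence of the noise across cells or across time steps; it only uses that a prescribed finite pattern of errors has conditional probability decaying geometrically in the number of cells involved — which is exactly the hypothesis above. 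I expect this to be the main obstacle: Gács's argument is long and layered, built from an infinite hierarchy of simulations, and identifying the exact place where the noise enters, checking that the only property invoked is this Bernoulli-type domination, and citing it responsibly, requires care. Granting it, the theorem holds for $G_0$ with $\epsilon_c=\epsilon_0$, except that $r_0$ need not be~$1$.

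It then remains to reduce the radius. Fix an integer $\ell\geq r_0$, set $\G=\G_0^{\,\ell}$, and let $\pi\colon\G^\Z\to\G_0^\Z$ be the blocking homeomorphism defined by $\pi(y)_{i\ell+k}=(y_i)_k$ for $0\le k<\ell$. Since $r_0\le\ell$, one step of $G_0$ on a block of $\ell$ cells depends only on the three consecutive blocks containing it, so $\pi$ conjugates $G_0$ to a radius-$1$ CA $G\colon\G^\Z\to\G^\Z$; as a homeomorphic conjugacy preserves ergodicity of stochastic dynamical systems, it suffices to prove non-ergodicity for perturbations of $G$. Given such a perturbation obeying the hypothesis of the theorem with a rate $\epsilon_c$ still to be chosen, transport it to $\G_0^\Z$ through $\pi$. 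A $G$-error at block $i$ and time $t$ is the same event as a $G_0$-error at some cell inside block $i$ at time $t$; hence, if a $G_0$-error occurs at every cell of a finite set $S\subset\Z$, then a $G$-error occurs at every block meeting $S$, and there are at least $\lceil |S|/\ell\rceil$ such blocks, so for every past event $H$ (past events for $G$ and for $G_0$ coincide under $\pi$) and using $\epsilon_c<1$,
\[
P\bigl(x^t_i\neq G_0(x^{t-1})_i \text{ for all } i\in S \bigm| H\bigr)\leq \epsilon_c^{\lceil |S|/\ell\rceil}\leq \bigl(\epsilon_c^{1/\ell}\bigr)^{|S|}.
\]
Choosing $\epsilon_c=\epsilon_0^{\,\ell}>0$ makes the right-hand side at most $\epsilon_0^{|S|}$, so the transported system meets Gács's hypothesis and is non-ergodic, whence so is the perturbation of $G$. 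Everything past the citation — the conjugacy, the displayed inequality, and the choice of $\epsilon_c$ — is bookkeeping; the substance is entirely in extracting the correct statement from \cite{Gacs01}.
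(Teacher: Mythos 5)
Your proposal takes essentially the same stance as the paper: Theorem~\ref{thm:Gacs} is not proved there at all but imported verbatim as a consequence of the main result of \cite{Gacs01}, with the key observation (which you also emphasize) that Gács's non-ergodicity argument only requires the conditional Bernoulli-type domination $P(\text{errors on }S\mid H)\le\epsilon_c^{|S|}$ rather than genuine i.i.d.\ noise. The one thing you add is the block-coding reduction to radius~$1$; this is correct as written (in particular the inequality $\epsilon_c^{\lceil|S|/\ell\rceil}\le(\epsilon_c^{1/\ell})^{|S|}$ and the choice $\epsilon_c=\epsilon_0^{\ell}$ are right, and Gács's hypothesis is about general stochastic dynamics so shift-commutation of the transported process is not needed), but the paper silently assumes a radius-$1$ version of the Gács CA and makes no such reduction, so your extra step is a harmless piece of additional rigor rather than a divergence in approach.
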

Observe that the condition on the noise is more general than our definition of an $\epsilon$-perturbation. In particular it is always verified for $G_\epsilon$, the perturbation of $G$ by a uniform noise of size $\epsilon < \epsilon_c$.

In the article \cite{Gacs01}, the author describes those admitted trajectories as created by an ``adversary" trying to defeat the conclusion of the theorem. This can sum up the idea behind the following CA: we couple $G$ with an adversary CA $F$ which can send errors on the trajectories of $G$. When perturbed by a uniform noise of size $\epsilon = \epsilon_2$, the adversary sends patches of errors (adding to the ones of the uniform noise) that are small enough that they can be corrected by $G$. For $\epsilon < \epsilon_1$ however, those additional errors are too large to be corrected by any CA with radius $1$.

\section{Description of the cellular automaton and main result}\label{section-Description}

Fix two parameters, $k\in\N$ and $a\in\N$. The alphabet is $\A=\G\times\F$, with $\G$ the alphabet of the G\'acs cellular automaton, and $\F$ is composed of all the integers from $0$ to $k-1$ and particles $\nearrow_s$, with $0\leq s< ak$. We denote by $0_\G$ a fixed symbol in $\G$ for the rest of the paper.

We first define a cellular automaton $F$ on $\F^\Z$ of radius $1$ as the following: for any configuration $z\in\F^{\Z}$ and $i\in\Z$,
\[
F(z)_i = \begin{cases}
         	\nearrow_{s+1} & \text{if } z_{i-1} = {\nearrow_s}, \text{ with } 0\leq s < ak \\
         	s + 1 \mod k & \text{if } z_{i-1} \notin \{{\nearrow_t} \mid 0\leq t < ak\} \text{ and } z_i = {\nearrow_s}  \\
         	z_i + 1 \mod k & \text{otherwise} \\
         \end{cases}
\]
with the convention ${\nearrow_{ak}} \coloneqq 0$. Essentially, the arrows are particles going to the right with speed $1$ which synchronizes the values of the cell they go through, with a limited lifetime of $ak$ time steps. On the other cells $F$ acts as the operation $+1$ on the finite group $\Z / k\Z$.

\begin{figure}[!h]
    \centering
    \includegraphics[scale=0.7]{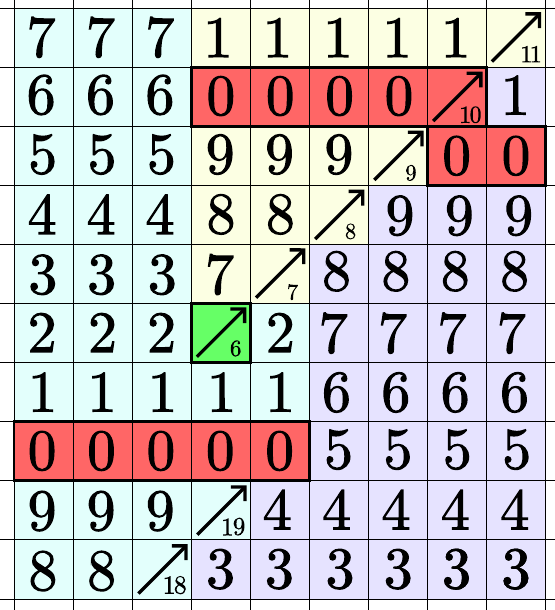}
    \caption{Some steps for $k=10$, $a=2$. Time goes up and only the $\mathcal{F}$ layer is shown. The pale colors represent the different synchronized zones created by the arrows. In green, an error creates a $\nearrow_6$. In red, the cells where a $0$ is projected onto the $\G$ layer.}
    \label{fig:GeneralSchema}
\end{figure}

% Description of the quasi-product
Finally, let us define a 1-dimensional cellular automaton $T$ on $\A$ of radius $1$ as the following: for any configuration $x = (x_i)_{i\in\Z}$, with $x_i = (y_i,z_i) \in \A=\G\times\F$, we have
\[
T(x)_i = \begin{cases}
         	(G(y_{i-1}, y_i, y_{i+1}), F(z)_i) & \text{if }F(z)_i \notin \{0\}\cup\{\nearrow_s \mid s = 0 \mod k\} \\
         	(0_{\G}, F(z)_i) & \text{otherwise.}
         \end{cases}
\]
Intuitively, $T$ acts as the product cellular automaton of $G$ and $F$, except on the cells where $F$ gives a $0$ or a $\nearrow_s$ with $s=0\mod k$, at which point it ``projects" the $0$ onto the $\G$ layer.

\begin{thm}\label{thm:TransitionDePhase}
For all $\epsilon>0$, denote by $T_\epsilon$ the probabilistic cellular automaton defined as the perturbation of $T$ by a uniform noise on $\A$ of size $\epsilon$. There exists $k,a\in\N$ and $0 < \epsilon_1 < \epsilon_2 < \epsilon_3 < 1$ such that:
\begin{enumerate}
    \item for all $0 < \epsilon < \epsilon_1$, $T_\epsilon$ is uniformly ergodic (low noise regime);
    \item $T_{\epsilon_2}$ is not ergodic;
    \item for all $\epsilon >  \epsilon_3$, $T_\epsilon$ is uniformly ergodic (high noise regime).
\end{enumerate}
\end{thm}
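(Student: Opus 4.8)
The three items should be attacked separately, since they use quite different mechanisms.

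Item~(3) is the easiest: it is a direct invocation of the high-noise ergodicity result of \cite{MST19}. Since $T$ is a cellular automaton of radius $1$ on the finite alphabet $\A=\G\times\F$, there is a universal threshold $\epsilon'$ depending only on $|\A|$ and the radius such that $T_\epsilon$ is (uniformly) ergodic for all $\epsilon\geq\epsilon'$. Setting $\epsilon_3:=\epsilon'$ finishes this part. The only thing to check is that the cited percolation argument applies verbatim to our $T_\epsilon$, which it does since $T_\epsilon$ is by construction an $\epsilon$-perturbation by uniform noise.

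Item~(2) is the heart of the construction and should be proved by \emph{coupling to the G\'acs CA via Theorem~\ref{thm:Gacs}}. Fix $\epsilon_2:=1/\ln\ln\ln(k)$ (for $k$ large). Run $T_{\epsilon_2}$ and project onto the $\G$-layer. The plan is to show that the induced perturbation of $G$ satisfies the hypothesis of Theorem~\ref{thm:Gacs}: for every finite $S\subset\Z$ and every event $H$ in the past of $S$,
\[
P(\text{$\G$-error in each cell of }S \mid H)\leq \epsilon_c^{|S|}.
\]
A $\G$-error at $(i,t)$ occurs either because the uniform noise directly hit that cell (probability $\le\epsilon_2$, and these events are independent across cells at a fixed time), or because $T$ projected a $0$ onto the $\G$-layer there, i.e.\ $F(z^{t-1})_i=0$ or $F(z^{t-1})_i={\nearrow_s}$ with $s\equiv 0\bmod k$. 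The combinatorial core is to bound the conditional probability that the $\F$-layer forces simultaneous projections on a large set $S$: because the $\F$-layer acts autonomously (it does not read the $\G$-layer), one can analyze it on its own. In an unperturbed $\F$-evolution the projection events at a fixed time are spread out --- roughly one cell in $k$ is ``at $0$'' and arrows only create extra projections once every $k$ steps along their worldline --- so to get projections at \emph{many} prescribed cells of $S$ at the same time one needs many independent noise events (to create arrows or shift counters into phase), which costs a factor exponentially small in $|S|$ with base a function of $k$. Since $\epsilon_2\to 0$ and $k\to\infty$, for $k$ large the worst of the two error sources (direct noise at rate $\epsilon_2$, or noise-induced $\F$-projection) is still below $\epsilon_c$, so $\epsilon_c^{|S|}$ dominates. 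This is the step I expect to be the main obstacle: one must carefully set up the conditioning so that ``events in the past'' for the $\G$-dynamics are compatible with the conditioning allowed in Theorem~\ref{thm:Gacs}, and one must quantify precisely how noise in the $\F$-layer can conspire to create many aligned $0$'s, controlling the dependencies introduced by the rightward-moving arrows. This is carried out in Section~\ref{sec:non-ergodicity}.

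Item~(1) is the genuinely new phenomenon and should be proved by a \emph{coupling / dependence-cone} argument showing that information cannot propagate at low noise. Fix $a$ large (after $k$ is fixed). When $\epsilon\to 0$, arrows ${\nearrow_s}$ are created only by rare noise events, but once created they survive for $ak$ steps, and each arrow synchronizes the $\F$-counters of all cells it passes; a synchronized region forces \emph{all} its cells to project a $0$ onto the $\G$-layer simultaneously every $k$ steps, which erases all $\G$-information carried across it. The plan is: (i) describe the left and right boundaries of the dependence cone of a cell $(0,t)$ under $T_\epsilon$ and observe that, whenever the cone boundary meets a synchronized region, the boundary jumps; (ii) model the motion of each boundary as a Markov additive chain in the sense of \cite{Asmussen03}, where the additive part tracks horizontal displacement and the Markov part tracks the local $\F$-environment (current counter value, presence/age of an arrow); (iii) show that for $a$ large enough the synchronizing events are frequent and large enough that the drift of the right boundary exceeds that of the left boundary, so that the two boundaries collide almost surely in finite time; (iv) conclude that $T_\epsilon^n\mu([u])$ becomes independent of $\mu$ as $n\to\infty$, uniformly in $\mu$, giving uniform ergodicity. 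Set $\epsilon_1$ to be a threshold below which this collision argument works. The delicate point here is establishing the positive drift gap for the Markov additive chains with the right $a$, and upgrading almost-sure collision to the \emph{uniform} convergence of $(T_\epsilon^n\mu)$; this is the content of Section~\ref{sec:low-noise-ergodicity}.

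Finally, one fixes the parameters in the correct order --- first choose $k$ large enough for item~(2) (which also forces $\epsilon_2=1/\ln\ln\ln(k)$ small), then choose $a$ large enough for item~(1), then read off $\epsilon_1<\epsilon_2$ from the low-noise analysis and $\epsilon_3$ from \cite{MST19}, checking $\epsilon_2<\epsilon_3$ --- and the theorem follows by combining the three parts.
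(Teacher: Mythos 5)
Your proposal follows essentially the same three-pronged strategy as the paper: invoke \cite{MST19} for the high-noise regime, verify the Gács hypothesis on the $\G$-layer with $\epsilon_2=1/\ln\ln\ln(k)$ for non-ergodicity, and run a dependence-cone / Markov-additive-chain argument with boundaries $L_n$ and $R_n$ colliding almost surely for low-noise ergodicity, fixing $k$ before $a$. The only small divergence is cosmetic: you propose letting the phase of the MAC track the local $\F$-environment, whereas the paper uses the two-state phase ``recently moved left or not'' purely to enforce independence of successive checks, but this does not change the architecture of the argument.
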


\begin{rem}
The third point is a direct consequence of Proposition 3.6 of \cite{MST19}, and is true for any perturbed cellular automaton.
\end{rem}

Our intuition behind the main result is the following.
In the absence of noise, at each cell the $\F$-layer takes the value $0$ -- and thus the $\G$-layer is forced to take the value $0_\G$ -- once every $k$ time steps.
We think of this as simulating an error on the $\G$-layer.
The automaton $G$ can easily repair individual errors, but if a contiguous interval of more than $k$ cells is synchronized on the $\F$-layer, every $k$ steps their $\G$-layers are erased, preventing all flow of information across the interval.

\begin{figure}[h]
	\begin{centering}
	\begin{tabular}{|c|c|c|c|}
		\hline 
		 & $\epsilon=10^{-5}$ & $\epsilon=10^{-4}$ & $\epsilon=10^{-3}$\tabularnewline
		\hline 
		\hline 
		$a=50$ & \includegraphics[scale=0.35]{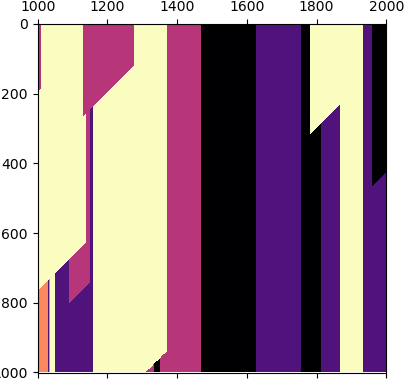} & \includegraphics[scale=0.35]{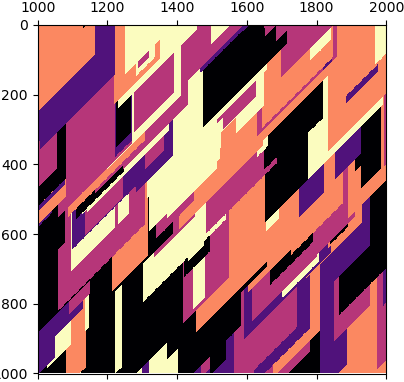} & \includegraphics[scale=0.35]{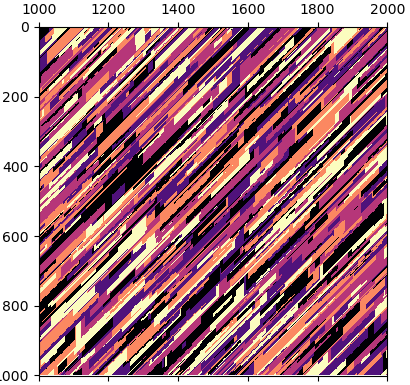}\tabularnewline
		\hline 
		$a=100$ & \includegraphics[scale=0.35]{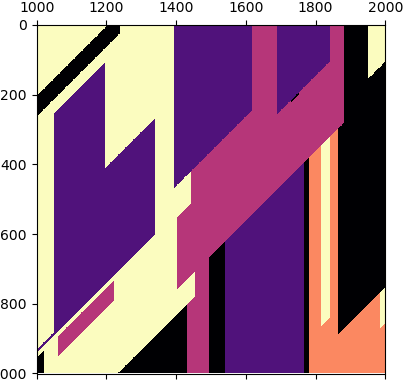} & \includegraphics[scale=0.35]{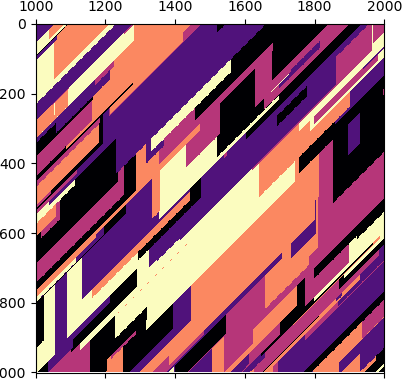} & \includegraphics[scale=0.35]{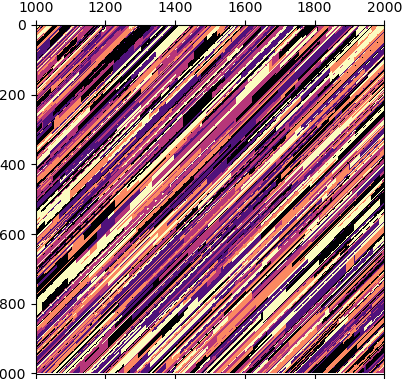}\tabularnewline
		\hline 
		$a=200$ & \includegraphics[scale=0.35]{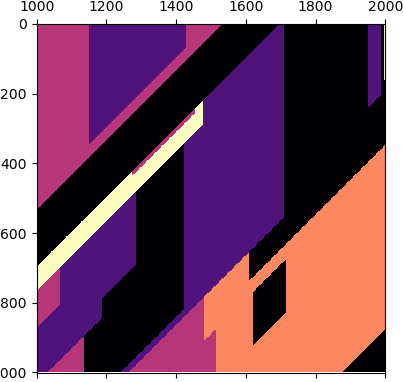} & \includegraphics[scale=0.35]{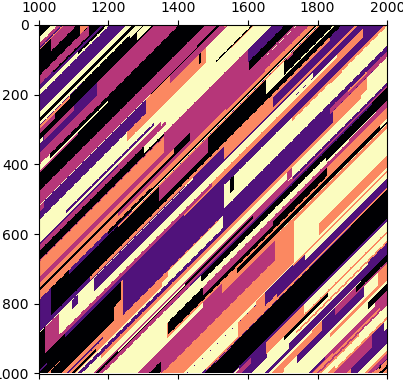} & \includegraphics[scale=0.35]{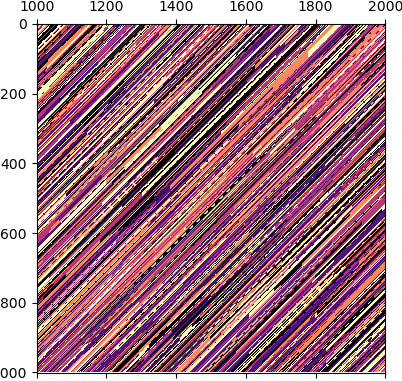}\tabularnewline
		\hline 
	\end{tabular}
	\par\end{centering}
	\caption{\label{fig:Illustrations-d'orbites}Illustrations of $\epsilon$-perturbed orbits of $F$, for $k=5$ and several values of $a$ and $\epsilon$. In each, we represent the subtraction of the value in a given cell by the time, mod $k$: the blocks of same color have synchronized values. A smaller $\epsilon$ increases the lifetime of the synchronized blocks, while a larger $a$ decreases it (there is more arrows) but increases their size (the lifetime of the arrows increases).}
\end{figure}

In the low noise regime $0 < \epsilon < \epsilon_1$, if the second parameter $a$ is large enough, random errors will create arrows with a relatively long lifetime, which produce long synchronized intervals that in turn persist long enough to prevent information flow and imply ergodicity.
At $\epsilon = \epsilon_2$, there is enough noise for the synchronized zones to de-synchronize and decorrelate in the $\F$ layer, producing small zones of errors.
Meanwhile, $\epsilon_2$ is small enough that for large $k$, the G\'acs CA $G$ has time to correct itself from such errors and retain information indefinitely, implying non-ergodicity.
Lastly, in the high noise regime $\epsilon > \epsilon_3$, there is so much noise that no information can navigate. We illustrate the effects of $a$ and $\epsilon$ in Figure \ref{fig:Illustrations-d'orbites}.

The parameters $k$ and $a$ and the rates $\epsilon_1$ and $\epsilon_2$ are chosen as follows.
We first let $k$ be large but arbitrary, and show in Section~\ref{sec:non-ergodicity} that there always exists an $\epsilon_2 > 0$, depending on $k$ but not $a$, such that $T_{\epsilon_2}$ is non-ergodic.
In Section~\ref{sec:low-noise-ergodicity}, we fix a large $k$, and show that there exists an $a$ such that $T_\epsilon$ is ergodic for all small enough $\epsilon$.

\section{Non-ergodicity}
\label{sec:non-ergodicity}

In this section, we look for an error rate $\epsilon_2$ such that $T_{\epsilon_2}$ is non-ergodic. To achieve this, we find a good parameter $k$ to use Theorem \ref{thm:Gacs} on the $\G$ layer: if $T_{\epsilon_2}$ is non-ergodic on this layer, then it is non-ergodic globally. We thus need to bound the probability of having an error on all sets $S$ of cells by $\epsilon_c^{|S|}$ for a fixed perturbation size $\epsilon_2$. There are two sources of errors on the $\G$ layer: the $0$ projected from the $\mathcal{F}$, and the errors from the noise itself. We can study these two sources relatively independently.

In this section the parameter $k$ is large and even but otherwise arbitrary, while $a$ is completely arbitrary.
We will find a suitable value for $\epsilon_2$ that depends on $k$ but not on $a$.

\subsection{Definitions and goals}

For a given $S\subset\Z$, define $Z_{S}$ as the event ``all the cells in $S$ have an $0$ in the $\mathcal{F}$-layer at time $t = 0$''. To show the non-ergodicity of $T_\epsilon$ for a suitable $\epsilon > 0$, we prove that for all finite $S\subset\Z$ and $H$ event in the past of the $\G$-layer, we have 
\[
P\left(Z_{S}\mid H\right)\leq \left(\frac{\epsilon_{c}}{2}\right)^{|S|}
\]
where $\epsilon_{c}$ is the critical value for the G\'acs CA from Theorem \ref{thm:Gacs}. We use $\frac{\epsilon_c}{2}$ for the bound to have some room to maneuver for the errors in the $\G$ layer at $t=0$, which we deal with in Section \ref{sec:errorsGLayer}.

We model the uniform $\epsilon$-perturbation $T_\epsilon$ as follows.
Let $(E^t_i)_{(i,t) \in \Z^2}$ be an ensemble of independent random variables $E^t_i$, each of which takes values in $\A \cup \{\bot\}$ with $P(E^t_i = \bot) = 1 - \epsilon$ and $P(E^t_i = a) = \epsilon / |\A|$ for each $a \in \A$.
We call it the \define{error field}, and the event $E^t_i \in \A$ is called a \define{potential error} at $(i,t)$.
Now the invariant measures of $T_\epsilon$ correspond exactly to the distributions on $(x^t_i, E^t_i)_{(i,t) \in \Z^2}$ that are invariant under vertical shifts and such that $x^t_i = T(x^{t-1})_i$ if $E^t_i = \bot$ and $x^t_i = E^t_i$ if $E^t_i \in \A$.

Let $H$ be an event in the past of the $\G$-layer, and let $(x^t)_{t \in \Z} = (y^t, z^t)_{t \in \Z}$ be a random trajectory of $T_\epsilon$ with error field $(E^t_i)_{(i,t) \in \Z^2}$. In order to bound $P(Z_{S}\mid H)$, we can suppose that more information is known.
Let $C = S \times [-k/2,-1]$ and $C' = S \times [-k/2,0]$, and let $K$ be an event in which the following data are fixed:
\begin{itemize}
	\item the contents on the $\F$-layer of all cells in $S \times \{-k/2\}$, which we call the \emph{initial configuration};
	\item the positions of all $0$-symbols on the $\mathcal{F}$-layer inside $C$;
	\item the positions (but not types) of all arrow symbols on the $\mathcal{F}$-layer inside $C$;
	\item the positions of all arrow symbols $\nearrow_s$ with $s = 0 \mod k$ inside $C$;
	\item the positions of all errors on the $\G$-layer (meaning cells $(i,t)$ such that $y^t_i \neq G(y^{t-1})_i$) inside $C$.
\end{itemize}
The number of such events is finite and they form a partition of the phase space, so if we can show $P(Z_S \mid H, K) \leq \left(\frac{\epsilon_{c}}{2}\right)^{|S|}$ for all such $K$, then
\[
P(Z_{S}\mid H) = \sum_K P(Z_{S}\mid H,K)P(K) \leq \left(\frac{\epsilon_{c}}{2}\right)^{|S|}\sum_K P(K) = \left(\frac{\epsilon_{c}}{2}\right)^{|S|}.
\]
For the sake of notational simplicity, we replace $H$ by $H \cap K$ for a fixed but arbitrary such $K$, and denote $P_H \coloneqq P({\cdot} \mid H)$.

Let $d \in \N$ be a parameter whose value we fix later.
Define an \emph{incident} as a cell of $C$ containing a $0$-symbol or an arrow on the $\mathcal{F}$-layer, or an error on the $\G$-layer.
The \emph{last incident} of a cell $s \in S$ is the incident $(s, t) \in C$ with the highest value of $t$, if one exists.
Denote by $C_s = \{s\} \times [t+1, 0]$ the \define{post-incident column} of $s$; if $s$ has no last incident, define $C_s = \{s\} \times [-k/2,0]$.
We decompose $S$ into $S=S_{1}\sqcup S_{2}\sqcup S_{3}\sqcup S_{4}\sqcup S_{5}$ as follows.
\begin{itemize}
    \item $S_{1}$ are the cells without a last incident.
    \item $S_{2}$ are the cells whose last incident is a $0$ on the $\mathcal{F}$-layer. 
    \item $S_{3}$ are the cells whose last incident is an arrow less than $d$ steps in the past.
    \item $S_{4}$ are the cells whose last incident is an arrow at least $d$ steps in the past.
    \item $S_{5}$ are the cells whose last incident is an error on the $\G$-layer.
\end{itemize}
If a cell satisfies more than one condition, we include it only in $S_i$ with the lowest possible $i$.

\begin{figure}[!h]
    \centering
    \includegraphics[scale=0.6]{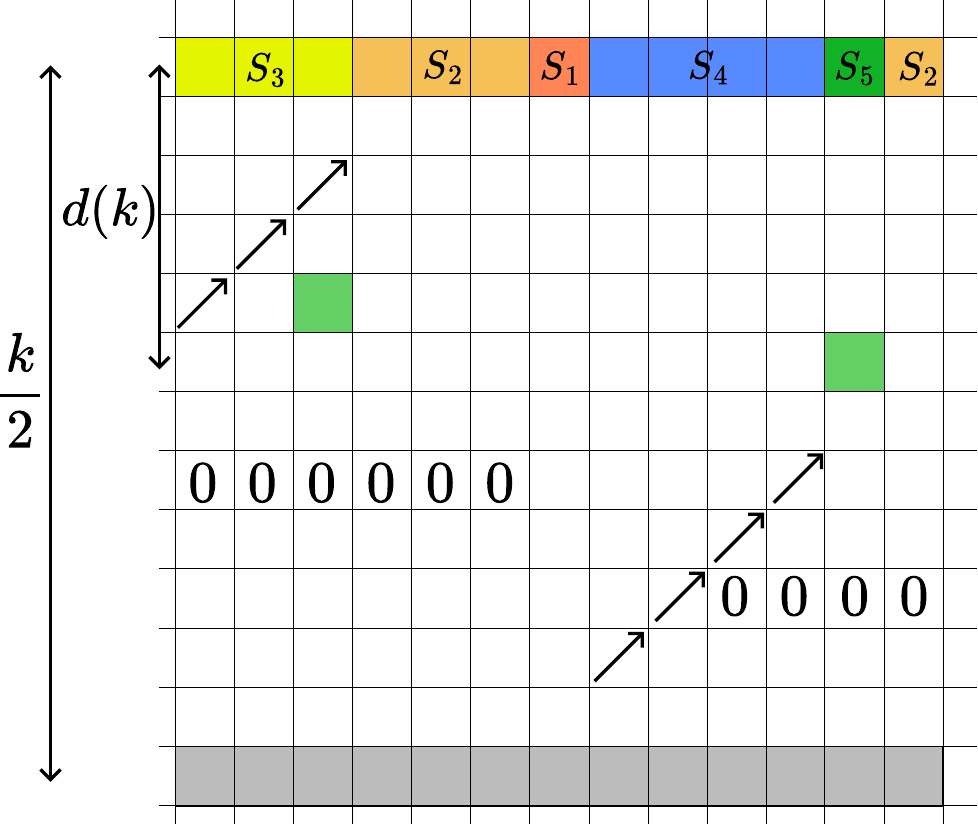}
    \caption{Decomposition of a given $S$ into each $S_i$. In grey, the known \define{initial configuration}. In green, the visible errors on the $\G$ layer.}
    \label{fig:SchemaS}
\end{figure}

Our goal is to find $\epsilon\coloneqq\epsilon(k)$ and $d \coloneqq d(k)$ such that, for all $i\in\left\{ 1,2,3,4,5\right\} $, there is an $\epsilon^{i} > 0$ such that $P_H\left(Z_{S_{i}}\right)\leq\left(\epsilon^{i}\right)^{\left|S_{i}\right|}$. Then, we would have 
\begin{align*}
P_H(Z_{S}) & =\sum_{\{ 1,2,3,4,5\}^{|S|}} \underset{\leq P_H(Z_{S_i})\text{ where }\left|S_{i}\right|\geq\frac{|S|}{5}}{\underbrace{P_H\left(Z_{S_{1}}\cap Z_{S_{2}}\cap Z_{S_{3}}\cap Z_{S_{4}}\cap Z_{S_{5}}\right)}}\\
	& \leq\max\left(\epsilon^{i}\right)^{\frac{|S|}{5}}\cdot5^{|S|}\\
	& \leq\left(5\max\left(\epsilon^{i}\right)^{\frac{1}{5}}\right)^{|S|}
\end{align*}
and we can prove the non-ergodicity of $T_\epsilon$ with Theorem \ref{thm:Gacs} if for all $i\in\left\{ 1,2,3,4,5\right\} $,  
\begin{equation}
5\left(\epsilon^{i}\right)^{\frac{1}{5}} \leq \frac{\epsilon_{c}}{2}.
\label{eq:epsilon_i}
\end{equation}

\subsection{Probability of a potential error}

The event $H$ can contain any information on what happened in the last $\frac{k}{2}$ iterations on the $\G$-layer. However, a transition which seems to respect the restriction on the $\G$-layer of the local deterministic rule of $T$ on a given cell does not necessarily mean that there were no errors. As we perturb the cellular automaton with a uniform noise, an error can place the expected symbol on the $\G$-layer and a random symbol on the $\F$-layer. As we study in this section the distribution of the $0$-symbols on the $\F$-layer, we need the probability of having an error on a cell given that its $\G$-value is correct.

The complete alphabet is $\A=\G\times\F$, with $\left|\F\right|=(a+1)k$. As the perturbation is uniform, when there is a potential error in a cell (with probability $\epsilon$), the symbol is chosen uniformly from $\A$. Therefore for a given cell, if we denote by $E$ (resp. ${VE}_{\G}$) the event ``there is a potential error on this cell'' (resp. ``an error on the $\G$ layer on this cell''), we have 
\begin{align*}
P\left(E\mid\overline{VE_{\G}}\right) & =\frac{P\left(E\cap\overline{{VE}_{\G}}\right)}{P\left(\overline{{VE}_{\G}}\right)}\\
	& \geq P\left(\overline{{VE}_{\G}}\mid E\right)P\left(E\right)\\
	& =\frac{\epsilon}{\left|\G\right|}.
\end{align*}

Thus, if there is no error in the $\G$-layer of a given cell, the probability of having no potential error in the $\mathcal{F}$-layer is less than $1-\frac{\epsilon}{\left|\G\right|}$. As the variables $E^t_i$ are mutually independent, the probability of not having any potential error on $n$ given cells is less than $\left(1-\frac{\epsilon}{\left|\G\right|}\right)^{n}$.

\subsection{Probability of a 0 on the $\F$ layer.}
For each case $S_i$, we study the probability of having $0$-symbols on the $\F$ layer of $S_i$ at time $t=0$.

\subsubsection{Cells of $S_{1}$: no last incident}

The $\F$-layer of each column $C_s = \{s\} \times [-k/2,0]$ for $s \in S_1$ is independent of the other columns of $C$, since no arrow crosses the column.
Hence we have $P_H\left(Z_{S_{1}}\right)=P_H(Z_s)^{\left|S_{1}\right|}$ for any $s \in S_1$.

The event $Z_s$ can occur for one of two reasons: either the column $C_s$ contains no potential errors and $s$ contains the symbol $k/2$ in the initial configuration, or $C_s$ contains at least one potential error and the last one, happening at time $t \in [-k/2,0]$, produces the symbol $t \mod k$ on the $\F$-layer.
Since $s \in S_1$, the last potential error $E^t_s$ of the column cannot produce an arrow and cannot cause a $0$ to occur on the $\F$-layer before time $0$.
There are no other constraints, so $E^t_s$ has at least $k/2$ possible values for the $\F$-layer that are consistent with $H$, each of which is equally likely, and hence the probability of producing $t \mod k$ is at most $2/k$.
Thus,
\begin{align*}
P_H(Z_s) & \leq P_H(\text{no potential errors in $C_s$})+P_H(\text{potential errors in $C_s$})\cdot\frac{2}{k}\\
	& \leq\left(1-\frac{\epsilon}{2\left|\G\right|}\right)^{\frac{k}{2}}+\frac{2}{k}.
\end{align*}
Thus $P_H(Z_s) \leq 2\max\left(\left(1-\frac{\epsilon}{2\left|\G\right|}\right)^{\frac{k}{2}},\frac{2}{k}\right)\eqqcolon\epsilon^1$ and $P_H\left(Z_{S_{1}}\right) \leq \left(\epsilon^1\right)^{|S_1|}$. Equation \ref{eq:epsilon_i} will be satisfied for $k$ large enough if $1/\epsilon = o(k)$, so that $\left(1-\frac{\epsilon}{2\left|\G\right|}\right)^{\frac{k}{2}}\tendsto k{\infty}0$. 

\subsubsection{Cells of $S_{2}$: $0$ on $\F$-layer}
As in the case of $S_1$, the post-incident column $C_s$ of each $s \in S_2$ is independent of the other columns of $C$.
For $s$ to contain a $0$ at time $t = 0$, we now need a potential error in $C_s$.
A similar but even simpler computation as in the $S_1$ case leads to
\[
P_H\left(Z_{S_{2}}\right)\leq\left(\frac{2}{k}\right)^{\left|S_{2}\right|}.
\]
and Equation \ref{eq:epsilon_i} will be satisfied for $k$ large enough.

\subsubsection{Cells of $S_{3}$: arrow less than $d$ steps ago}
A cell $s \in S_3$ has as its last incident an arrow at some time $t_s \in [-d(k)+1, 0]$.
We divide $S_3$ into (not necessarily contiguous) blocks of cells of size at most $d(k)$ as follows: two cells $s, s+m \in S_3$ are in the same block if $t_{s+m} = t_s + m$ and for each $j = 1, \ldots, m-1$ there is an arrow at $(s+j, t_s+j)$; thus, their last incidents are the same arrow at different times.
Note that the arrows' timers may not be consistent, as there might be an error between them that produces a new arrow in place of the old one.
For a block $B \subseteq S_3$, denote $C_B = \bigcup_{s \in B} C_s$.
The $\F$-layers of the sets $C_B$ for different blocks $B$ are independent of each other, as they come from different arrows.
Once the timer of the arrow that produces the block $B$ is fixed, the $\F$-layers of its own columns are mutually independent.

Take a block $B$.
We consider several possibilities and show that in each case, with high probability there exists at least one coordinate $(i,t) \in C$ such that when the other $E$-variables are kept constant, the conditional probability for $E^t_i$ to have a value that results in $Z_B$ is at most $8/k$.
In particular, $E^t_i = \bot$ will be inconsistent with either $H$ or $Z_B$.

Let $(j_i, u_i)$ for $i = 0, \ldots, N$ be the maximal set of coordinates of $C$ that have arrows, contain the last incident of each element of $B$, and satisfy $j_{i+1} = j_i+1$ and $u_{i+1} = u_i+1$ for all $0 \leq i < N$.
This is the entire known path of the arrow of $B$.

For some $(j_i, u_i)$ on the path, there exists $h \geq 1$ such that $(j_i, u_i+h)$ contains a 0 on the $\F$-layer, but none of $(j_i, u_i+h')$ for $1 \leq h' < h$ contains a 0, an arrow, or an error on the $\G$-layer.
For each such position, we condition on whether any of the coordinates $(j_i, u_i+h')$ for $1 \leq h' \leq h$ contain potential errors.
If any of them do, we can ignore the 0 at $(j_i, u_i+h)$, since it does not affect the timer of the arrow at $(j_i, u_i)$.
If none of them do, then we know the timer of the arrow at $(j_i, u_i)$ modulo $k$, and we know that it is not $u_i \mod k$.
Then the arrow at $(j_i, u_i)$ becomes a \define{good known arrow}.
Any $(j_i, u_i)$ that contains an arrow with a timer that is $0 \mod k$ is also a good known arrow, as is an arrow at time $-k/2$ unless its timer is $-k/2 \mod k$.

The good arrows partition the path into independent sub-paths that we analyze separately.
Each sub-path begins at a good known arrow or the start of the path, and ends just before the next one, or at the end of the path.

\begin{itemize}
\item[Case 1]
  Suppose there exists a sub-path that either begins at a good known arrow or is the first sub-path that begins after time $-k/2$, and contains the last incident $(s, t_s)$ of some $s \in B$.
  For $Z_s$ to hold, there must be a potential error either on the column $C_s$, or between the start of the sub-path and $(s, t_s)$.
  Note that it might be the potential error at $(j_0, u_0)$ that creates the arrow.

  \begin{itemize}
  \item[Case 1.1]
    Suppose the column $C_s$ contains a potential error.
    The value of the last potential error of $C_s$ determines whether $Z_s$ holds, and does not affect any other part of the configuration.
    It has at least $ak/2-k/2$ values that are consistent with the current context.
    With the same computation as before, the probability of $Z_s$ is at most $4/k$.
    
  \item[Case 1.2]
    Suppose $C_s$ contains no potential errors.
    For $Z_s$ to hold, there must be at least one potential error on the sub-path, on or before $(s, t_s)$.
    Let $E_{j_i}^{u_i}$ be the latest such potential error.
    Consider now the end of the sub-path.

    \begin{itemize}
    \item[Case 1.2.1]
      If the sub-path extends to time $t = 0$, then as before, $E_{j_i}^{u_i}$ has at least $ak/2 - k/2$ values that are consistent with the current context, and the probability of $Z_s$ is at most $4/k$.
    \item[Case 1.2.2]
      Otherwise the sub-path ends in a known good arrow, or is the last sub-path and does not extend to time $t = 0$.
      Suppose now that there is a potential error at the endpoint of the path (which destroys the arrow or produces the known good arrow).
      Then, as before, $E_{j_i}^{u_i}$ has at least $ak/2 - k/2$ consistent values and we are done.
    \item[Case 1.2.2]
      Suppose then that the endpoint does not contain a potential error.
      Then the value modulo $k$ of the timer of the arrow produced by $E_{j_i}^{u_i}$ is fixed, since there are not potential errors between $E_{j_i}^{u_i}$ and the end of the sub-path, where either the value of the timer modulo $k$ is known or the arrow dies naturally.
      In both cases, the timer is not $u_i \mod k$, so $Z_s$ surely does not hold.
    \end{itemize}
  \end{itemize}

  \item[Case 2]
    If Case 1 does not hold, then the path begins at time $t = -k/2$ with an arrow whose timer is $-k/2 \mod k$, and there are no good known arrows between this point and the last incident of any element of $B$.
    Let $E_{\mathrm{path}}$ be the event that there is a potential error on the path of the arrow between time $t = -k/2$ and $t = -d(k)$.
    If $E_{\mathrm{path}}$ holds, then considering the last such potential error, we once again obtain an upper bound of $4/k$.
    The converse case is handled by the following claim.
    
    \begin{claim} \label{clm:conditionalProbability}
      If $d(k) \leq k/4$ and $\epsilon(k) \geq 2 |\G| \left(1 - (4/k)^{4/k} \right)$, then $P_H(\overline{E_{\mathrm{path}}} \mid \mathrm{Case~2}) \leq 4/k$.
    \end{claim}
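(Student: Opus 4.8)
\noindent\textbf{Proof plan for Claim~\ref{clm:conditionalProbability}.}

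The plan is to identify at least $k/4$ cells on the path that must all be free of potential errors on the event $\overline{E_{\mathrm{path}}}$, and to bound the probability of this event by a product of per-cell estimates. On $\mathrm{Case~2}$ the path begins at time $-k/2$, hence contains every cell at times $-k/2, -k/2+1, \dots, -d(k)$; since $d(k)\le k/4$ this is a set $W$ of at least $k/4$ cells, each lying strictly between the start of the path and the last incident $(s,t_s)$ of every $s\in B$ (recall $t_s\ge -d(k)+1$). By definition of $\mathrm{Case~2}$ there are no good known arrows in this range, so in particular no cell of $W$ carries an arrow with timer $\equiv 0\bmod k$. Consequently no projection onto the $\G$-layer occurs at a cell of $W$, so a $\G$-error there would force a potential error there. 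Since $K$ fixes all $\G$-error positions inside $C\supseteq W$, either some cell of $W$ carries a prescribed $\G$-error -- in which case $\overline{E_{\mathrm{path}}}$ is impossible and the claim holds vacuously -- or no cell of $W$ has a $\G$-error, which we assume henceforth.

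The next step is to check that conditioning on $\mathrm{Case~2}$ does not bias the potential errors on $W$. Once $K$ is fixed, the event $\mathrm{Case~2}$ is determined by data disjoint from $W$: the timer of the arrow at $(j_0,-k/2)$ and the positions of all arrows with timer $\equiv 0\bmod k$ are recorded in $K$, and the only remaining requirement -- that above every path cell whose column first meets a $0$ there is a potential error among the cells of that column between the arrow and that $0$ -- concerns cells lying above the path in the columns it traverses, none of which belongs to $W$. Together with the fact that $H$ records the correct $\G$-values along $W$, this lets us estimate, for each cell of $W$ separately, the conditional probability of no potential error. That estimate is the one from the subsection on the probability of a potential error, used exactly as in the analysis of $S_1$: conditioned on the $\G$-value of a cell being correct and on all the other data, the probability of no potential error there is at most $1-\epsilon(k)/(2|\G|)$, the factor $2$ absorbing the conditioning on the recorded past of the $\G$-layer. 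As the $E^t_i$ are independent and $|W|\ge k/4$,
\[
P_H\!\left(\overline{E_{\mathrm{path}}}\mid \mathrm{Case~2}\right)\ \le\ \left(1-\frac{\epsilon(k)}{2|\G|}\right)^{k/4}.
\]
The hypothesis $\epsilon(k)\ge 2|\G|\bigl(1-(4/k)^{4/k}\bigr)$ rearranges to $1-\epsilon(k)/(2|\G|)\le (4/k)^{4/k}$, so the right-hand side is at most $\bigl((4/k)^{4/k}\bigr)^{k/4}=4/k$, proving the claim.

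I expect the second step to be the main obstacle. The subtlety is that ``good known arrow'' is a property of the space--time diagram, and the diagram above a cell of the path does in principle depend on whether that cell was perturbed; one has to exploit that, once $K$ has pinned down the positions of all $0$'s, arrows and $\G$-errors in $C$, the location of the first incident above each path cell is already fixed, so that only the presence of potential errors in the intervening off-path cells is still random -- which decouples $\mathrm{Case~2}$ from the error field on $W$. The remaining ingredients -- the per-cell bound with $1/(2|\G|)$ in place of $1/|\G|$, exactly as for $S_1$, and the elementary rearrangement at the end -- are routine; the only other things to keep in mind are the trivial case where $W$ meets a prescribed $\G$-error and the precise count showing $|W|\ge k/4$.
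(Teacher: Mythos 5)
Your overall strategy matches the paper's: identify roughly $k/2-d(k)\ge k/4$ cells on the path, bound the per-cell probability of no potential error by $1-\epsilon/(2|\G|)$, multiply, and rearrange the hypothesis on $\epsilon(k)$ to get $4/k$. The corner case (a prescribed $\G$-error on a path cell) and the count $|W|\ge k/4$ are both fine.

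The gap is in the per-cell estimate, which is the crux of the argument. You cite the ``probability of a potential error'' subsection and attribute the extra factor $2$ to ``the conditioning on the recorded past of the $\G$-layer,'' but that is not where it comes from. That subsection yields $P(E\mid\overline{VE_\G})\ge \epsilon/|\G|$ from the $\G$-layer alone; the factor $1/2$ in the paper's proof is an $\F$-layer argument. Concretely, $K$ fixes that the cell $(j_i,u_i)$ carries an arrow whose timer is not $\equiv 0\bmod k$ and that the arrow path continues to be consistent with $K$, so a potential error there must produce a specific arrow symbol $\nearrow_m$ compatible with all of this. The paper's key observation is that in Case~2 every $0$-symbol above the path is screened off by a potential error above the path, so none of those $0$'s pins down the timer $m$; consequently every $m$ with $m\bmod k\le k/2$ (hence at least $ak/2$ of the $ak$ possible timers) is consistent, yielding $P(E_{j_i}^{u_i}\in\A\mid\cdots)\ge\frac{\epsilon}{|\G|}\cdot\frac{ak/2}{ak}=\frac{\epsilon}{2|\G|}$. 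Without this counting of admissible timers, the claimed bound $1-\epsilon/(2|\G|)$ is unjustified: the $K$-conditioning on the $\F$-layer is precisely what you must contend with, and it is not handled by the $\G$-layer computation you invoke. Relatedly, the paper makes the argument rigorous via the chain rule (conditioning successively on the absence of earlier path errors), rather than asserting independence of $(E_{j_i}^{u_i})_{i}$ given $K$, $H$ and Case~2; your ``decoupling'' observation is the right intuition, but because $K$ itself constrains the arrow timers at the path cells, the independence you assert does not literally hold, and the chain rule together with the admissible-timer count is what closes the argument.
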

    
    Supposing that the preconditions hold, the claim directly implies that the conditional probability of $Z_B$ in Case 2 as at most $8/k$.
\end{itemize}

\begin{proof}[Proof of Claim \ref{clm:conditionalProbability}]
  % Let $j$ be the position of the arrow in the initial configuration. Denote by $C_{j+1}$, $C_{j+2}$, $\dots$ the successive columns above the arrow's path.
  Since Case 2 implies $u_0 = -k/2$, the event $\overline{E_{\mathrm{path}}}$ is the intersection of $E_{j_i}^{u_i}=\bot$ for all $1 \leq i\leq k/2-d\left(k\right)$.  We then have
\[
  P_{H}\left(E_{\mathrm{path}} \mid \mathrm{Case~2}\right) =
  \prod_{i=1}^{k/2-d(k)} P_{H}\left(E_{j_i}^{u_i}=\bot \mid \bigcap_{1 \leq l < i} E_{j_l}^{u_l}=\bot, \mathrm{Case~2}\right).
\]

Let $i\in\N$ and suppose there are no potential errors on $(j_l, u_l)$ for $1 \leq l < i$.
Recall that in Case 2, every 0-symbol above the path has a potential error between it and the path, and hence does not constrain the timers on the path.
For a potential error on $(j_i, u_i)$ to be consistent with the known context even without another potential error on the path, it suffices to create a value $\nearrow_{m}$ with $(m \mod k)\leq\frac{k}{2}$: in the worst case, there is no $0$-symbol on the $\F$-layer of any column above the path. Thus,
\[
  P\left(E_{j_i}^{u_i}\in\A\mid\bigcap_{l<i}E_{j_l}^{u_l}=\bot, \mathrm{Case~2}\right) \geq \frac{\epsilon}{\left|\G\right|}\cdot\frac{a\frac{k}{2}}{ak}=\frac{\epsilon}{2\left|\G\right|}.
\]
% Then $P\left(E_{j+i}^{-\frac{k}{2}+i}=\bot\mid\bigcap_{r<i}E_{j+r}^{-\frac{k}{2}+r}=\bot\right)\leq1-\frac{\epsilon}{2\left|\G\right|}$.

Applying the above result to all $i$, we get 
\[ P_{H}\left(E_{\mathrm{path}} \mid \mathrm{Case~2}\right)
  \leq
  \left(1-\frac{\epsilon}{2\left|\G\right|}\right)^{\frac{k}{2}-d\left(k\right)} \leq \left(1-\frac{\epsilon}{2\left|\G\right|}\right)^{\frac{k}{4}}
  \leq 4/k.
\]
Here, the last two steps follow from the assumptions on $d(k)$ and $\epsilon(k)$.
\end{proof}

Finally, we have for each block $B$ the bound $P_H(Z_B) \leq \frac{8}{k}$. The values on the $\F$ layer being mutually independent with the other blocks, and all of them having size at most $d(k)$, we obtain
\[
	P_H(Z_{S_3}) \leq P(Z_B)^{\frac{|S_3|}{d(k)}} \leq \left(\left(\frac{8}{k}\right)^{\frac{1}{d(k)}} \right)^{|S_3|}
\]
and Equation \ref{eq:epsilon_i} will be satisfied for $k$ large enough if $d(k)=o(\ln k)$.

\subsubsection{Cells of $S_{4}$: arrow at least $d$ steps ago}

The idea is to take $d(k)$ large enough so that we can make the same computations as in $S_{1}$. For each $s\in S_4$, the potential errors in the columns post-incident are mutually independent. As there is no $0$ symbols after the arrow's passage, there can be a $0$-symbol in $s\in S_4$ only if there was no potential error after the arrow, or if the last potential error gave the right symbol. As is must be the case on all $s\in S_4$, the same analysis as before gives 
\[
	P_H\left(Z_{S_{4}}\right)\leq\left(2\max\left(\left(1-\frac{\epsilon}{\left|\G\right|}\right)^{d(k)},\frac{2}{k}\right)\right)^{\left|S_{4}\right|}.
\]
Equation \ref{eq:epsilon_i} will be satisfied for $k$ large enough if $1/d(k)=o(\epsilon)$ so that $\left(1-\epsilon\right)^{d(k)}\tendsto k{\infty}0$. 

\subsubsection{Cells of $S_{5}$: error on $\G$-layer}
Once again, each post-incident column for $s\in S_5$ are mutually independent of each other as no arrow goes through them. As the last incident was an error in the $\G$ layer, there was also an error at this step in the $\F$ layer (which did not produce an arrow nor a $0$). We then have the same computations as for $S_2$, the probability that the last error produced the right symbol is bonded by $\frac{2}{k}$. Thus, 
\[
	P_H\left(Z_{S_{5}}\right)\leq \left(\frac{2}{k}\right)^{\left|S_{5}\right|}.
\]
and Equation \ref{eq:epsilon_i} will be satisfied for $k$ large enough.

\subsubsection{Conclusion}

In order to satisfy Equation \ref{eq:epsilon_i} for all $i \in \{1,2,3,4,5\}$, we have the following sufficient conditions on $\epsilon$ and $d$, when $k$ goes to $+\infty$: 
\begin{itemize}
    \item $1/\epsilon(k)=o(k)$ and $1/d(k)=o(\epsilon(k))$: we need enough noise to be sure to have errors on each colmuns with high probability.
    \item $\lim_{k\to\infty}\epsilon(k)=0$ and $\lim_{k\to\infty}d(k)=+\infty$.
    \item $d(k)=o(\ln k)$ and $\frac{d(k)}{\epsilon(k)}=o(k)$: we need the synchronized blocks to not be too large.
    \item $d(k) \leq k/4$ and $\epsilon(k) \geq 2 |\G| \left(1 - (4/k)^{4/k} \right)$: we need these to apply Claim \ref{clm:conditionalProbability}.
\end{itemize}
A solution would be $d(k)\coloneqq\ln\ln k$ and $\epsilon(k)\coloneqq\frac{1}{\ln\ln\ln k}$, and we have the result for the noise $\epsilon(k)$, for a $k$ large enough.
For the last condition, this follows from $1 - (4/k)^{4/k} \sim 4/k \ln(k/4)$ for large $k$.

\subsection{The errors on the $\G$ layer}
\label{sec:errorsGLayer}
We add here the second source of errors on $S$ at time $0$, the noise itself. By independence of the perturbation, if $E_S$ denotes the event ``having errors due to noise on all cells of $S$", we have
\begin{align*}
P_H(\text{all errors in }S) & = \sum_{R\subset S} P_H(Z_{R}\cap E_{S\backslash R}) \\
	& =\sum_{R\subset S} P_H(Z_{R})P_H(E_{S\backslash R}) \\
	& \leq\sum_{R\subset S} \left(\frac{\epsilon_{c}}{2}\right)^{|R|} \cdot \epsilon^{|S|-|R|}
\end{align*}
so as long as $\epsilon(k) \leq \frac{\epsilon_{c}}{2}$, we have $P_H(\text{all errors in S})\leq \left(\frac{\epsilon_{c}}{2}\right)^{|S|}\cdot2^{|S|}=\epsilon_{c}^{|S|}$.

Finally, with $\epsilon(k) = \frac{1}{\ln\ln\ln k}$ and $d(k) = \ln\ln k$, we can fix $k$ large enough such that $\epsilon(k)\leq \frac{\epsilon_c}{2}$ and Equation \ref{eq:epsilon_i} is verified for each $i\in \{1,2,3,4,5\}$. We can then use Theorem \ref{thm:Gacs} to conclude that at $\epsilon_2 = \epsilon(k)$, the PCA $T_\epsilon$ is not ergodic.

\section{Ergodicity in the low-noise regime}
\label{sec:low-noise-ergodicity}

In this section, we want to show that $T_\epsilon$ is ergodic when $\epsilon$ is small enough. For the $\G$ layer, we only use the fact that it is of radius $1$, and show that no information can go through because of the projection of $0_\G$ symbols. As it is illustrated in Figure \ref{fig:Illustrations-d'orbites}, when $a$ increases and $\epsilon$ decreases, the synchronized zones are larger and stay synchronized for a longer time.

For that, consider a trajectory simulated arbitrarily far in the past. We introduce two random walks $L_{n}$ (Section \ref{subsec:L_n}) and $R_{n}$ (Section \ref{subsec:Bordure-Droite}) that bound in a certain sense $l_t$ and $r_t$, the borders of the dependence cone of the cell at position $0$ at time $0$, $t$ steps in the past (the cells whose values can influence the value of the cell at $(0,0)$). We then show that almost surely, $L_{n}\tendsto n{\infty}+\infty$ and $R_{n}\tendsto n{\infty}-\infty$, and thus $l_t$ and $r_t$ ``cross" (Proposition \ref{prop:Croisement p.s.}): the value at $0$ at time $0$ is independent of the starting configuration and depends only on the noise, the simulated probabilistic cellular automata is ergodic.

\subsection{Update maps and dependence cones}

To show the ergodicity of $T_{\epsilon}$, we use the notion of update maps as in~\cite{MST19}. We define it here for a PCA of radius $1$. All random variables in the rest of the article are defined on a common universe $\Omega$.
\begin{defn}
$\psi: \A^3\times[0,1]\to\A$ is a local update map of $T_\epsilon$ if 
\[
	P\left(\psi(a_{-1}a_0a_1, U) = b\right) = f_\epsilon (a_{-1}a_0a_1,b)
\]
where $U$ is uniformly distributed on $[0,1]$ and $f_\epsilon$ is the local rule of $T_\epsilon$. The global update map associated is the function $\Psi: \A^\Z \times [0,1]^\Z \to \A^\Z$ defined by 
\[
	\Psi(x;u)_k = \psi(x_{k-1}x_kx_{k+1}, u_k).
\]
Its iterations can be recursively defined by 
\[
	\Psi^{t+1}(x; u^1, \dots, u^{t+1}) = \Psi(\Psi^t(x; u^1,\dots, u^t) ; u^{t+1}),
\]
so that $\Psi^t(x; U^1, \dots, U^t)$ (with $U^t_i$ a family of uniformly distributed on $[0,1]$ independent random variables) is distributed according to $T_\epsilon^t(x, \cdot)$.
\end{defn}

In the following we suppose that such a map $\Psi$ is fixed, along with a family $(U^{-t}_{i})_{t\in\N,i\in\Z}$. From \cite{MST19} we take the following criterion for the ergodicity of $T_\epsilon$ using coupling from the past:
\begin{prop}[{\cite[Proposition 3.3]{MST19}}]
Define $p_t(T_\epsilon) = P\left(x \mapsto \Psi^t\left(x; U^{-t}, \dots, U^0  \right)_0 \text{ is constant}\right)$.

If $p_t(T_\epsilon) \tendsto{t}{+\infty}1$, then $T_\epsilon$ is uniformly ergodic.
\end{prop}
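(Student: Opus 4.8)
The plan is to run a coupling-from-the-past argument. The two ingredients are: (i) for every fixed $x$, the random configuration $\Psi^t(x;U^{-t},\ldots,U^0)$ is distributed as $T_\epsilon^t(x,\cdot)$ (and, by stationarity of the error field, the particular time labels are irrelevant), and (ii) by hypothesis, with probability $p_t(T_\epsilon)\tendsto t\infty 1$ the map $x\mapsto\Psi^t(x;U^{-t},\ldots,U^0)_0$ is constant, i.e.\ the symbol read at the origin after $t$ steps does not depend on the starting configuration. First I would upgrade (ii) to every coordinate and every finite window. For $i\in\Z$ let $A_t^{(i)}$ be the event that $x\mapsto\Psi^t(x;U^{-t},\ldots,U^0)_i$ is constant; since $\Psi$ is translation-equivariant and the law of the error field $(U^{-s}_j)_{s\in\N,j\in\Z}$ is translation invariant, $P(A_t^{(i)})=p_t(T_\epsilon)$ for all $i$. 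For a finite $W\subset\Z$, set $A_t^W=\bigcap_{i\in W}A_t^{(i)}$; a union bound gives $P(\overline{A_t^W})\le|W|(1-p_t(T_\epsilon))\tendsto t\infty 0$.

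Next I would compare the iterates of an arbitrary measure on cylinders. Fix a finite $W\subset\Z$, a word $u\in\A^{|W|}$ and $\mu\in\M(\X)$. Using (i),
\[
T_\epsilon^t\mu([u]_W)=\int P\bigl(\Psi^t(x;U^{-t},\ldots,U^0)_W=u\bigr)\diff\mu(x).
\]
On the event $A_t^W$ the value of $\mathbf 1\{\Psi^t(x;U^{-t},\ldots,U^0)_W=u\}$ does not depend on $x$, so the number
\[
c_t\coloneqq\int P\bigl(\Psi^t(x;U^{-t},\ldots,U^0)_W=u,\ A_t^W\bigr)\diff\mu(x)
\]
is in fact independent of $\mu$, and splitting the first display according to whether $A_t^W$ holds gives $c_t\le T_\epsilon^t\mu([u]_W)\le c_t+P(\overline{A_t^W})$ for every $\mu$. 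Hence, for all $\mu,\mu'\in\M(\X)$,
\[
\bigl|T_\epsilon^t\mu([u]_W)-T_\epsilon^t\mu'([u]_W)\bigr|\le P(\overline{A_t^W})\le|W|(1-p_t(T_\epsilon)),
\]
a bound uniform in $\mu,\mu'$ and depending only on $|W|$.

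To conclude, recall that $\M_{T_\epsilon}\neq\emptyset$; fix any $\pi\in\M_{T_\epsilon}$, so $T_\epsilon^t\pi=\pi$ for all $t$. Taking $\mu'=\pi$ above, for every $\mu\in\M(\X)$ and every cylinder $[u]_W$,
\[
\bigl|T_\epsilon^t\mu([u]_W)-\pi([u]_W)\bigr|\le|W|(1-p_t(T_\epsilon))\tendsto t\infty 0,
\]
with a rate depending only on $|W|$. Since cylinders form a basis of the topology of $\X$, this means $T_\epsilon^t\mu\to\pi$ weakly, uniformly in $\mu$. In particular any invariant measure $\pi'$ satisfies $\pi'=T_\epsilon^t\pi'\to\pi$, whence $\pi'=\pi$; thus $\pi$ is the unique invariant measure and it attracts every initial measure at a uniform rate, i.e.\ $T_\epsilon$ is uniformly ergodic.

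The only genuinely delicate step is the middle one: one must observe that, conditioned on the constancy event $A_t^W$, the law of $\Psi^t(\,\cdot\,;U^{-t},\ldots,U^0)_W$ is the same for every starting configuration, and combine this with the identification of $\Psi^t(x;U^{-t},\ldots,U^0)$ with $T_\epsilon^t(x,\cdot)$, which requires invoking stationarity of the error field to discard the time labels. The translation-equivariance bookkeeping in the first step and the passage from cylinders to weak convergence at the end are routine.
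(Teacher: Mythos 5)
Your argument is correct: translation invariance of the error field gives $P(A_t^{(i)})=p_t(T_\epsilon)$ for every $i$, the union bound then gives $P(\overline{A_t^W})\le|W|(1-p_t(T_\epsilon))$, and since on $A_t^W$ the window $\Psi^t(x;U^{-t},\dots,U^0)_W$ is independent of the starting configuration, any two iterated measures differ on $[u]_W$ by at most $P(\overline{A_t^W})$, which combined with the existence of an invariant measure yields the uniform bound $|T_\epsilon^t\mu([u]_W)-\pi([u]_W)|\le|W|(1-p_t(T_\epsilon))$. The paper itself supplies no proof here and simply cites \cite{MST19}; your grand-coupling argument is the standard route to this proposition and is what one expects that reference to do.
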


\subsubsection{Dependence cone}

To study $p_t(T_\epsilon)$, we use the dependence cone for semi-configurations. For $m\in\Z$, as $T_\epsilon$ is of radius $1$, for a fixed family $(U^{-t}_i)_{i,t}$ the value $\Psi^t(x; U^{-t}, \dots, U^0)_{[m;+\infty)}$ only depends on the value of $x_{[m-t,+\infty)}$. To simplify the notations, define $\Psi^t_{m^+} : \A^\N \times [0,1]^{\Z^2} \to \A^\N$ as
\[
	\Psi^t_{m^+} (\alpha; u) = \Psi^t \left(x; u^{-t}, \dots, u^0 \right)_{[m,+\infty)}
\]
for any  $x\in\A^\Z$ such that $x_{[m-t,+\infty)} = \alpha$. Similarly, define $\Psi^t_{m^-} : \A^{\Z^-} \times [0,1]^{\Z^2} \to \A^{\Z^-}$ as
\[
	\Psi^t_{m^-} (\alpha; u) = \Psi^t \left(y; u^{-t}, \dots, u^0 \right)_{(-\infty,m]}
\]
for any  $y\in\A^\Z$ such that $y_{(-\infty,m+t]} = \alpha$.

\begin{defn}
For a given $\omega\in\Omega$, the left border of the dependence cone at time $-t$ is 
\[
	l_t(\omega) \coloneqq \min\left\{n\geq -t \mid \exists \beta \in \A^\N, \, \begin{array}{ccl}
	\A^{t+n+1} & \to & \A^\N \\
	\alpha & \mapsto & \Psi^t_{0^+}(\alpha\beta,U(\omega))
\end{array} \text{ is not constant}\right\}
\]
while the right border is 
\[
	r_t(\omega) \coloneqq \max\left\{ n \leq t \mid \exists \alpha \in \A^{\Z^-}, \, \begin{array}{ccl}
	\A^{t-n+1} & \to & \A^{\Z^-} \\
	\beta & \mapsto & \Psi^t_{0^-}(\alpha\beta,U(\omega))
\end{array} \text{ is not constant}\right\}
\]
with the conventions $\min(\emptyset) = +\infty$ and $\max(\emptyset) = -\infty$.
\end{defn}

We can interpret $[l_t,+\infty)$ as the cells that can influence the result of the simulation after $t$ steps on the right semi-configuration, as illustrated on Figure \ref{fig:l_t}. 

\begin{figure}[!h]
\centering{}\includegraphics[scale=0.6]{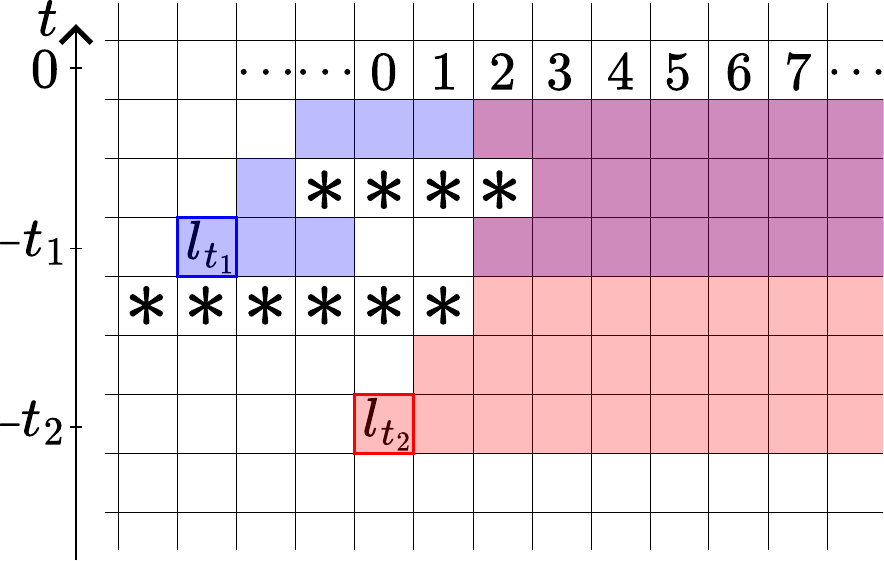} \caption{Illustration of $l_{t}$. The stars represents the errors, where the values on these cells depend only on the noise. In blue, the information flow from the cells in the configuration at $t=-t_{1}$ that can influence the values on the right semi-configuration at $t=0$. In red, the information flow for the configuration at $t=-t_{2}$. The cells at position $\left(l_{t_{1}},-t_{1}\right)$ and $\left(l_{t_{2}},-t_{2}\right)$ are framed. \label{fig:l_t}}
\end{figure}

\begin{lem}
For $\omega\in\Omega$ and $t\in\N$ fixed, if $l_t > r_t$ then $x\mapsto \Psi^t \left(x ; U(\omega)\right)_0$ is constant.
\label{lem:emptyDependenceCone}
\end{lem}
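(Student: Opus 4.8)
The statement is essentially a tautology once the definitions of $l_t$ and $r_t$ are unwound, so the plan is to argue directly from the definitions rather than to invoke anything deep. The key observation is that the value $\Psi^t(x;U(\omega))_0$ at the single cell $0$ depends only on $x_{[-t,t]}$ (since $T_\epsilon$ has radius $1$), and I want to show that under the hypothesis $l_t > r_t$ this dependence collapses entirely, i.e. the map $x \mapsto \Psi^t(x;U(\omega))_0$ takes the same value regardless of $x$.

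The main step is to interpolate between an arbitrary configuration $x$ and a fixed reference configuration $x^0$ by changing coordinates one at a time, or rather block by block, and track when the cell $0$ at time $t$ can change. Concretely: write $\Psi^t(x;U)_0$ as a function of $x_{[-t,t]} \in \A^{2t+1}$. For $n \in [-t, t+1]$ let $g_n$ be the map sending a word $\alpha \in \A^{t+n+1}$ (the coordinates in $[-t, n-1]$... more precisely the relevant window) to $\Psi^t(x;U)_0$ where the coordinates $[-t,n-1]$ are set to $\alpha$ and coordinates $[n, t]$ are frozen to those of $x^0$; equivalently use the semi-configuration maps $\Psi^t_{0^+}$ and $\Psi^t_{0^-}$ from the definitions. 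By definition of $l_t$, for every $n < l_t$ the map $\alpha \mapsto \Psi^t_{0^+}(\alpha\beta, U(\omega))$ is constant for every $\beta$, so in particular the value at coordinate $0$ does not depend on coordinates in $[-t, l_t - 1]$; symmetrically, by definition of $r_t$, the value at coordinate $0$ does not depend on coordinates in $[r_t+1, t]$. Since $l_t > r_t$ means $l_t - 1 \geq r_t$, the two "free" half-ranges $[-t, l_t-1]$ and $[r_t+1, t]$ together cover all of $[-t,t]$ (they overlap in $[r_t+1, l_t-1]$, which is fine). Hence $\Psi^t(x;U(\omega))_0$ does not depend on any coordinate of $x$, which is exactly the claim.

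The one genuinely fiddly point — and the place I'd be most careful — is the quantifier structure in the definitions of $l_t$ and $r_t$: "constant" there means constant as a function of the varying half $\alpha$ (resp. $\beta$), for *some* choice of the other half $\beta$ (resp. $\alpha$). To chain the left-side and right-side statements I need to make sure that "independent of the left coordinates, for the specific right half I care about" combines cleanly with "independent of the right coordinates, for the specific left half I care about." The clean way is: fix any target configuration; independence on the right half (from $r_t$) lets me replace the coordinates in $[r_t+1,t]$ by those of $x^0$ without changing the cell-$0$ value; then on the resulting configuration, independence on the left half (from $l_t$) lets me replace coordinates in $[-t, l_t-1]$ by those of $x^0$; since every coordinate in $[-t,t]$ lies in one of these two ranges, I have transformed $x$ into $x^0$ (on the relevant window) without changing $\Psi^t(x;U(\omega))_0$. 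So $\Psi^t(\cdot;U(\omega))_0 \equiv \Psi^t(x^0;U(\omega))_0$ is constant.

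I expect no real obstacle here; the "hard part" is purely bookkeeping: being careful that the $\min$/$\max$ definitions with the $\exists$-quantified complementary half really do yield the monotone "no dependence below $l_t$ / above $r_t$" statements, and that $l_t = +\infty$ or $r_t = -\infty$ (the empty-set conventions) are handled — but in those cases the hypothesis $l_t > r_t$ still gives the same conclusion a fortiori, since then the cone never becomes non-constant on one side at all.
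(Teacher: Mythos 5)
Your proof is correct and takes essentially the same approach as the paper: interpolate between an arbitrary $x$ and a fixed reference configuration by swapping the left and right half-configurations in two steps, invoking the definition of $l_t$ (resp.\ $r_t$) to show the value at cell $0$ is unchanged. You are a bit more careful than the paper about the off-by-one in the split point — the paper swaps at $l_t$ (which strictly speaking uses $n = l_t$, where the definition does not directly give constancy), whereas your ranges $[-t, l_t-1]$ and $[r_t+1,t]$, which cover $[-t,t]$ since $l_t - 1 \geq r_t$, are exactly what the definitions deliver — but this is the same argument.
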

\begin{proof}
Fix two configurations $x = x_{(-\infty; l_t]} x_{[l_t+1;+\infty)}$ and $y = y_{(-\infty; l_t]} y_{[l_t+1;+\infty)}$. We then have
\begin{align*}
\Psi^t \left(x ; U(\omega)\right)_0 &= \Psi^t \left(x_{(-\infty; l_t]} x_{[l_t+1;+\infty)} ; U(\omega)\right)_0\\
	&= \Psi^t \left(y_{(-\infty;  l_t]} x_{[l_t+1;+\infty)} ; U(\omega)\right)_0\\
	&= \Psi^t \left(y ; U(\omega)\right)_0
\end{align*}
where the first swap is by definition of $l_t$ (and $0\in [0,+\infty)$), and the second by definition of $r_t$  (and $0\in (-\infty,0])$) and $r_t < l_t$.
\end{proof}

As the cellular automaton we study is of radius $1$, the information cannot go faster than $1$ cell per step: this is what the following lemma illustrates.

\begin{lem}
For all $\omega\in\Omega$ and $t,s\in\N$, $l_{t+s} \geq l_t -s$ and $r_{t+s}\leq r_t +s$.
\label{lem:passageOfTime}
\end{lem}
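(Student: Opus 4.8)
The plan is to combine finite speed of propagation with the semigroup property of the global update map. I treat $l_{t+s}\ge l_t-s$; the inequality $r_{t+s}\le r_t+s$ is the mirror image, obtained by running the same argument with left semi-configurations and the base point $t$ in place of $0$.

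The two ingredients I would isolate first are a \emph{composition identity} and a \emph{finite-speed estimate}. For the composition identity: since the right semi-configuration on cells $[0,+\infty)$ at time $0$ depends only on cells $[-t,+\infty)$ at time $-t$, which in turn depend only on cells $[-t-s,+\infty)$ at time $-t-s$, one gets
\[
\Psi^{t+s}_{0^+}(\,\cdot\,;U)=\Psi^{t}_{0^+}\bigl(\Psi^{s}_{(-t)^+}(\,\cdot\,;U);U\bigr),
\]
where $\Psi^{s}_{(-t)^+}(\gamma;U)\in\A^{[-t,+\infty)}$ denotes the right semi-configuration reached at time $-t$ after $s$ steps from $\gamma\in\A^{[-t-s,+\infty)}$ at time $-t-s$, using the noise layers at times in $(-t-s,-t]$. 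The crucial point is that the last $t$ noise layers feeding $\Psi^{t+s}$ are exactly those feeding $\Psi^{t}$, so the map $\Psi^{t}_{0^+}(\,\cdot\,;U)$ on the right is literally the one used to define $l_t(\omega)$. For the finite-speed estimate: since $T_\epsilon$ has radius $1$, for any $c\ge -t$ the value $\Psi^{s}_{(-t)^+}(\gamma;U)_c$ depends on $\gamma$ only through its restriction to cells $[c-s,c+s]$, and $c\ge -t$ forces $c-s\ge -(t+s)$, so no semi-configuration boundary is ever crossed.

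With these in hand, fix $\omega$ and an integer $n$ with $-(t+s)\le n$ and $n+s<l_t(\omega)$, and fix $\beta\in\A^{\N}$ occupying cells $[n+1,+\infty)$. I claim the map $\alpha\mapsto\Psi^{t+s}_{0^+}(\alpha\beta;U(\omega))$, with $\alpha$ ranging over $\A^{(t+s)+n+1}$ (cells $[-(t+s),n]$), is constant; as this holds for every such $n$, no position below $l_t-s$ can witness non-constancy, which gives $l_{t+s}\ge l_t-s$ (and covers $l_t=+\infty$ automatically). Set $n':=n+s$, so that $-t\le n'<l_t$, and let $\zeta:=\Psi^{s}_{(-t)^+}(\alpha\beta;U)\in\A^{[-t,+\infty)}$. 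By the finite-speed estimate, $\zeta_c$ can depend on $\alpha$ only when $c\le n+s=n'$; hence $\zeta_{[n'+1,+\infty)}$ is a function of $\beta$ alone. Since $n'<l_t$, minimality in the definition of $l_t$ says that for every $\gamma\in\A^{\N}$ the map $\A^{t+n'+1}\ni\alpha'\mapsto\Psi^{t}_{0^+}(\alpha'\gamma;U)$ is constant; applying this with $\alpha'=\zeta_{[-t,n']}$ and $\gamma=\zeta_{[n'+1,+\infty)}$ and invoking the composition identity shows that $\Psi^{t+s}_{0^+}(\alpha\beta;U)=\Psi^{t}_{0^+}(\zeta;U)$ depends only on $\zeta_{[n'+1,+\infty)}$, hence only on $\beta$. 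This is the desired constancy in $\alpha$.

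I expect the only real friction to be bookkeeping rather than ideas: keeping straight which spatial interval and which noise layers each of $\Psi^{t+s}_{0^+}$, $\Psi^{t}_{0^+}$, $\Psi^{s}_{(-t)^+}$ acts on, and checking that the composition identity is an exact equality of functions for the fixed noise field, so that the very map defining $l_t$ reappears. Once that is pinned down, both the finite-speed estimate and the final deduction are short; the conceptual content is simply that a radius-$1$ rule enlarges the dependence cone by at most one cell per time step.
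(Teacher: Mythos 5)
Your proof is correct and uses essentially the same two ingredients as the paper's argument: the finite speed of propagation coming from radius $1$, and the semigroup structure of the global update map. The only cosmetic difference is that the paper reduces to $s=1$ and iterates, whereas you treat general $s$ directly via the composition identity; both are valid and equivalent in content.
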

\begin{proof}
Without loss of generality, we can restrict ourselves to the case $s=1$. As the result is immediate for $l_t = -t$, we can suppose that $l_t > -t$ and fix a $n$ such that $-t-1 \leq n < l_t -1$ and $\beta\in\A^\N$. 

For $\alpha, \alpha^\prime \in \A^{t+n+2}$, fix $x,y\in\A^\Z$ such that $x_{[-t-1,+\infty)} = \alpha\beta$ and $y_{[-t-1,+\infty)} = \alpha^\prime\beta$. As $x$ and $y$ agree on $[n; +\infty)$,
\[
	\Psi(x; U^{-t-1}(\omega))_{[n+1;+\infty)} = \Psi(y; U^{-t-1}(\omega))_{[n+1;+\infty)} \eqqcolon \delta(\omega)
\]
so we can denote $\gamma,\gamma^\prime \in \A^{t+n+1}$ such that $\Psi(x; U^{-t-1}(\omega))_{[-t,+\infty)} = \gamma\delta$ and $\Psi(y; U^{-t-1}(\omega))_{[-t,+\infty)} = \gamma^\prime\delta$.

Thus by definition of $l_t$,
\begin{align*}
\Psi^{t+1}_{0^+}(\alpha\beta; U(\omega)) &= \Psi^{t}_{0^+}(\gamma\delta; U(\omega))\\
	&= \Psi^{t}_{0^+}(\gamma^\prime\delta; U(\omega))\\
\Psi^{t+1}_{0^+}(\alpha\beta; U(\omega)) &= \Psi^{t+1}_{0^+}(\alpha^\prime\beta; U(\omega)).
\end{align*}

Therefore, for all $n \in [-t-1; l_t -1 )$ and $\beta \in \A^\N$, the function $\begin{array}{ccl}
	\A^{t+n+1} & \to & \A^\N \\
	\alpha & \mapsto & \Psi^t_{0^+}(\alpha\beta,U)
\end{array}$ is constant. The proof is analog for $r_t$.
\end{proof}

\subsection{Markov Additive Chains}

To bound $l_{t}$ and $r_{t}$ (and force them to verify the hypotheses of Lemma \ref{lem:emptyDependenceCone}), we define two auxiliary random walks $L_{n}$, $R_{n}$.
The sequence $L_{n}$ satisfies an approximate version of $L_{n} \leq l_{t_{n}}$ for some sequence $t_{n}$ such that $t_{n+1} - t{n}$ is bounded (Proposition \ref{prop:Ln-lt}), and $L_{n} \tendsto{n}{\infty} +\infty$ a.s.
The sequence $R_{n}$ satisfies the symmetric criterion.

\subsubsection{High-level explanation}

The behavior of the chains $L_{n}$ and $R_{n}$ is governed by the positions of large synchronized zones on the $\F$-layer.
The pair $(L_{n}, t_{n})$ represents a position in a random trajectory $(x^t)_{t \in \Z^-}$ of the CA $T_\epsilon$.
To determine how it evolves, we check whether the $\F$-layer of the segment $[L_{n}-2k, L_{n}]$ is synchronized by an arrow created at time $t_{n}-2k-m$ for some (potentially large) $m \geq 0$, and whether this synchronization survives until time $t_{n}$.
If this is the case, the synchronization blocks all information flow from the left to $(L_{n}, t_{n})$ on the $\G$-layer, and we can set $L_{n+1} = L_n + s$ for some $s > 0$ that depends on the extent of the synchronized segment (that is, how far to the right of $L_{n}$ it reaches).
On the other hand, if such a synchronized segment does not exist, or if its synchronization is broken by errors before time $t_{n}$, we set $L_{n+1} = L_n - 4ak$.

In both cases the jump must be large enough to ensure that the next check is independent of the last one.
For a move to the right, this is rather simple.
A move to the left, on the other hand, typically happens because for some $m \geq 0$ (which we allow to be large in order to increase the probability of a move to the right) the segment $x^{t_{n}-2k-m}_{[L_{n}-2k, L_{n}]}$ is synchronized by an arrow, but this synchronization is broken by errors between time $t_{n}-2k-H$ and $t_{n}$.
As we have now conditioned on the contents of the region $B = [L_{n}-2k, L_{n}] \times [t_{n}-2k-m, t_{n}]$, in order to maintain independence we cannot allow the Markov chain $(L_{n}, t_{n})$ to pass through it.
Now, instead of setting $L_{n+1} = L_n - r$ for a large enough $r > 0$ that the chain cannot possibly return to $L_{n}-2k$ before time $-n-2k-m$, we pick $r = 4ak$ and allow the chain to remain close to the left border of $B$, replacing moves to the right by stationary steps until time $t_{n}-2k-m$.
For this, the chain $L_{n}$ must ``remember'' whether it has recently moved to the left.
See Figure \ref{fig:MAC} for a diagram of the evolution of $L_{n}$.

The chain $R_{n}$ behaves more or less symmetrically.

\subsubsection{Markov Additive Chains}

A bivariate Markov chain $(L_n,J_n)$ on $\Z\times E$ with $E$ discrete is called a Markov Additive Chain (MAC) if the \define{phase}  $(J_n)$ is a Markov chain on $E$ with transition matrix $\mathbf{P}$ influences the increments of the \define{level} $(L_n)$ in the following sense: for any $e\in E$ and $T\in\N$, the law of $(L_{T+n} - L_n, J_{T+n})$ given $\{J_T = e\}$ is independent of $((L_0, J_0), \dots, (L_T, J_T))$ and has the same distribution as $(L_n - L_0, J_n)$ given $\{J_0 = e\}$. In other terms: the future steps of the level only depends on the present phase.

In the remainder of the article, we suppose $(J_n)$ to be a Markov chain on $E = \{1,2\}$ with transition matrix $\mathbf{P} \coloneqq \begin{pmatrix}
\alpha_\epsilon & 1-\alpha_\epsilon \\ 
\beta_\epsilon & 1-\beta_\epsilon
\end{pmatrix} $.
Phase $2$ means that the chain $L_{n}$ has recently moved to the left.
This chain is ergodic with its unique invariant probability vector being $\pi = (\pi_1,\pi_2) \coloneqq \frac{1}{1-\alpha_\epsilon+\beta_\epsilon}(\beta_\epsilon,1-\alpha_\epsilon)$.

There is a version of the law of large numbers (see \cite{Asmussen03} chapter XI) for the level of a MAC: for any initial state, $\frac{L_n}{n}\tendsto{n}{\infty}\kappa^\prime$ a.s. where $\kappa^\prime$ is the \define{mean drift} defined as
\[
	\kappa^\prime = \sum_{i=1}^2 \sum_{j=1}^2 \pi_i \mathbf{P}_{i,j} \sum_{m\in\Z} m P\left(L_1 = m \mid J_0 = i, J_1=j\right).
\]
In particular if $\kappa^\prime > 0$ then $L_n \tendsto{n}{\infty} +\infty$ a.s..

The MACs defined in the following sections have the property that the distribution of the step $L_1 - L_0 = L_1$ given $J_1$ is independent of the initial phase $J_0$. In this case, $\kappa^\prime$ can be simplified as
\[
	\kappa^\prime = \pi_1 \sum_{m\in\Z} m P\left(L_1 = m \mid J_1=1\right) + \pi_2 \sum_{m\in\Z} m P\left(L_1 = m \mid J_1=2\right).
\]

\subsubsection{Definition of $L_n$}
\label{subsec:L_n}

Fix a numbering of the alphabet $\A = \{\alpha_0, \dots, \alpha_{|A|-1}\}$, and define the local update map $\psi$ of $T_\epsilon$ to be the following:

\[
	\psi\left( a_{-1}a_0a_1 ; u\right) = \begin{cases}
		T(a_{-1}a_0a_1) & \text{if } u < 1-\epsilon\\
		\alpha_i & \text{if } u \in \left[ 1-\epsilon+\frac{i}{|\A|}\epsilon, 1-\epsilon+\frac{i+1}{|\A|}\epsilon \right)
	\end{cases}.
\]

For $i\in\Z$ and $t\in\Z^-$, define the random variable $E_i^t$ with value in $\A\cup\{\bot\}$ as 
\[
	E_i^t = \begin{cases}
		\bot & \text{if } U_i^t < 1-\epsilon\\
		(\alpha_i)_{\F} & \text{if } U_i^t \in \left[ 1-\epsilon+\frac{i}{|\A|}\epsilon, 1-\epsilon+\frac{i+1}{|\A|}\epsilon \right)
	\end{cases}
\]
such that $E_i^t = \alpha\in\F$ if there is an error that gives the symbol $\alpha$ on the $\F$ layer at $(i,t)$, and $E_i^t=\bot$ when there is no error.

For $0 \leq s < ak$, we can define the event $A_i^t(s)$ (having an arrow $\nearrow_s$ at $(i,t)$) by 
\[
	A_i^t(s) = \bigsqcup_{m=0}^s \left[ \left(E_{i-m}^{t-m} = \nearrow_{s-m} \right) \cap \bigcap_{n=0}^{m-1} E_{i-n}^{t-n} = \bot \right]
\]
which has probability 
\[
	P\left(A_i^t(s) \right) = \sum_{m=0}^s \frac{\epsilon}{(a+1)k} (1-\epsilon)^m \underset{\epsilon\to0}{\sim} \frac{s+1}{(a+1)k}\epsilon.
\]

This allows us to define three disjoint events for a cell $(i,t)\in\Z^2$: $G_i^t$, $B_i^t$ and $O_i^t$. Which one occurs can be decided following the decision process described in Figure \ref{fig:GoodBadOther}.
These events will be used to determine the next move of $L_{n}$.
We will look at the segment $[L_{n}-2k, L_{n}]$ at time $t_{n}$ and check whether it is synchronized due to an arrow that passes through the spacetime position $(L_{n}-2k, t_{n}-2k)$.
If so, this is a ``good'' event and $L_{n}$ can make a move to the right.
If there is an arrow at that position, but the segment is not synchronized (due to an error or the arrow disappearing by itself), this is a ``bad'' event and $L_{n}$ will move to the left.
If there is no arrow at $(L_{n}-2k, t_{n}-2k)$, we repeat the check at $(L_{n}-2k, t_{n}-2k-1)$, then at $(L_{n}-2k, t_{n}-2k-2)$, and so on up to $(L_{n}-2k, t_{n}-2k-H)$ for a suitably chosen $H > 0$.
If none of these coordinates contain arrows (which is rare when $H$ is large), the chain moves to the left.
\begin{figure}[!h]
    \centering
    \begin{tikzpicture}[node distance=2cm]
		\node (start) [startstop] {At $(i,t)\in\Z^2$};
		\node (errorPath) [decision, fill=green!30, right of=start, xshift=1cm] {Error in the $2k-1$ green cells};
		\node (Ait) [decision, below of=errorPath] {$A_i^t(s)$};
		\node (s) [decision, below of=Ait, yshift=0.5cm] {$s \leq (a-2)k$};
		\node (triangle) [decision, fill=yellow!30, below of=s] {Error in the yellow triangle};
		\node (other) [startstop, right of=Ait, xshift = 0.5cm] {$O_i^t$};
		\node (bad) [startstop, above left of=s, xshift = -2cm, yshift=-0.25cm] {$B_i^t$};
		\node (good) [startstop, right of=triangle, xshift = 0.5cm] {$G_i^t$};
		
		\draw [arrow] (start) -- (errorPath);
		
		\draw [arrow] (errorPath) -- node[anchor=west] {no} (Ait);
		\draw [arrow] (errorPath) -- node[anchor=south east] {yes} (bad);
		
		\draw [arrow] (Ait) -- node[anchor=south] {no} (other);
		\draw [arrow] (Ait) -- node[anchor=west] {yes} (s);
		
		\draw [arrow] (s) -- node[anchor=south west] {no} (bad);
		\draw [arrow] (s) -- node[anchor=west] {yes} (triangle);
		
		\draw [arrow] (triangle) -- node[anchor=north east] {yes} (bad);
		\draw [arrow] (triangle) -- node[anchor=south] {no} (good);
		
	\end{tikzpicture}
	\hspace{0.5cm}
    \includegraphics[scale=0.55]{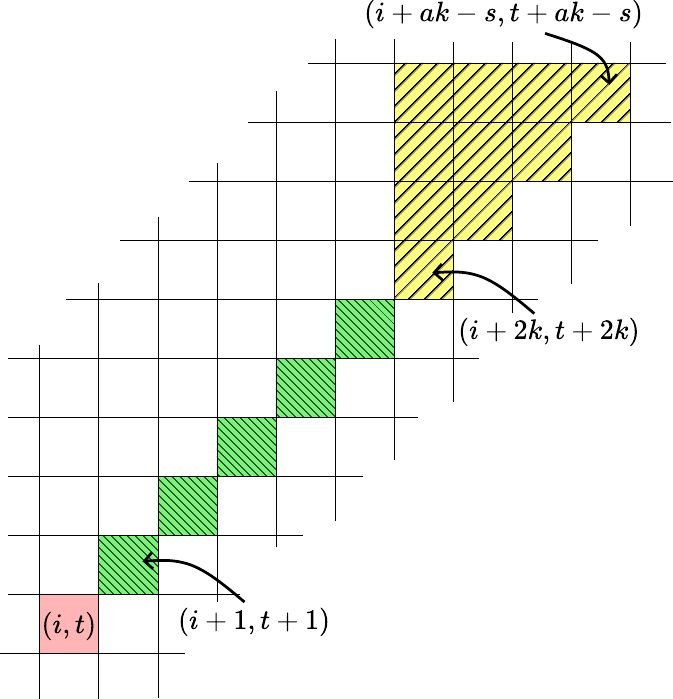}
    \caption{Left: the decision process. Right: its illustration. The green cells are at position $(i+m,t+m)$ with $0 < m < 2k$. The size of the yellow triangle depends on the nature $s$ of the arrow $\nearrow_s$ seen at $(i,t)$.}
    \label{fig:GoodBadOther}
\end{figure}

One more formally defines the events as follows.
\begin{itemize}
	\item The good case $G_i^t$, when there is a young arrow ($s\leq (a-2)k$) that defines a large synchronized zone, defined by
	\[
		G_i^t = \bigcup_{s=0}^{(a-2)k} \left[A_i^t(s) \cap \bigcap_{\substack{1 \leq j\leq (a-2)k-s\\ 1 \leq n\leq (a-2)k-s \\ n-j \geq t-i}} \left( E_{i+2k+j}^{t+2k+n} = \bot \right) \right]  \cap \bigcap_{m=1}^{2k-1} \left( E_{i+m}^{t+m} = \bot \right).
	\]
	The probability of this event is 
    \[
    		p_{good}^{\epsilon} \coloneqq \sum_{s=0}^{(a-2)k}P\left(A_i^t(s) \right) (1-\epsilon)^{2k-1+\frac{((a-2)k-s)^2}{2}} \underset{\epsilon\to0}{\sim} \frac{((a-2)k+1)((a-2)k+2)}{2(a+1)k}\epsilon \underset{a\to\infty}{\sim} \frac{ak\epsilon}{2}.
    \]
	\item The bad case $B_i^t$, when there is an arrow with a type too large or a error on its path:
	\[
		B_i^t = \left( \bigcup_{s=(a-2)k+1}^{ak-1} A_i^t(s) \right) \cup \bigcup_{s=0}^{(a-2)k} \left[A_i^t(s) \cap \bigcup_{\substack{1 \leq j\leq (a-2)k-s\\ 1 \leq n\leq (a-2)k-s \\ n-j \geq t-i}} \left( E_{i+2k+j}^{t+2k+n} \neq \bot \right)  \right] \cup \bigcup_{m=1}^{2k-1} \left( E_{i+m}^{t+m} \neq \bot \right).
	\]
	The probability of this event is
	\begin{align*}
    	p_{bad}^{\epsilon} &\coloneqq 1-\left(1-\epsilon\right)^{2k-1}\left(1-\sum_{s=(a-2)k+1}^{ak-1}P\left(A_i^t(s) \right) - \sum_{s=0}^{(a-2)k}P\left(A_i^t(s) \right)\left(1-(1-\epsilon)^{\frac{((a-2)k-s)^2}{2}} \right) \right) \\
    	&\underset{\epsilon\to0}{\sim} \left(2k-1+\frac{ak(ak+1) - ((a-2)k+1)((a-2)k+2)}{2(a+1)k}\right)\epsilon \underset{a\to\infty}{\sim} (4k-3)\epsilon.
    \end{align*}
	\item The other case $O_i^t = \overline{G_i^t \cup B_i^t}$, when there is no arrow and no error in the path:
	\[
		O_i^t = \bigcap_{s=0}^{ak-1} \overline{A_i^t(s)} \cap \bigcap_{m=1}^{2k-1} \left( E_{i+m}^{t+m} = \bot \right).
	\]
	Its probability is then
    \[
    p_{other}^{\epsilon}\coloneqq1-p_{bad}^{\epsilon}-p_{good}^{\epsilon}=1-s(\epsilon)
    \]
    with $s(\epsilon)\tendsto{\epsilon}00$.
  \end{itemize}

  The following property is crucial.

  \begin{lem}
    \label{lem:independent-good-bad-other}
    The event $O_i^t$ is independent of $O_i^{t'}$, $G_i^{t'}$ and $B_i^{t'}$ for all $t' \neq t$.
  \end{lem}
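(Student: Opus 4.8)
The plan is to exhibit, for each cell $(i,t)$, an explicit finite set $D_i^t \subset \Z^2$ of coordinates such that the event $O_i^t$ is measurable with respect to the $\sigma$-algebra generated by $(E_j^u)_{(j,u) \in D_i^t}$, and then show that $D_i^t \cap D_i^{t'} = \emptyset$ whenever $t' \neq t$ (with the same first coordinate $i$). Since the $E_j^u$ are mutually independent, this immediately gives independence of $O_i^t$ from any event measurable with respect to $(E_j^u)_{(j,u) \in D_i^{t'}}$, in particular from $O_i^{t'}$, $G_i^{t'}$ and $B_i^{t'}$.

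First I would read off $D_i^t$ from the definition of $O_i^t$ together with the definitions of the events $A_i^t(s)$. The event $O_i^t = \bigcap_{s=0}^{ak-1} \overline{A_i^t(s)} \cap \bigcap_{m=1}^{2k-1}(E_{i+m}^{t+m} = \bot)$. Expanding the arrow events, $A_i^t(s)$ depends only on the variables $E_{i-n}^{t-n}$ for $0 \leq n \leq s \leq ak-1$, i.e. on the diagonal segment going down-left from $(i,t)$ of length at most $ak$. The second intersection depends on the diagonal segment going up-right from $(i,t)$, namely $E_{i+m}^{t+m}$ for $1 \leq m \leq 2k-1$. Hence $O_i^t$ is measurable with respect to the variables indexed by
\[
  D_i^t = \{(i-n, t-n) \mid 0 \leq n \leq ak-1\} \cup \{(i+m, t+m) \mid 1 \leq m \leq 2k-1\},
\]
which is contained in the single diagonal line $\{(i+r, t+r) \mid r \in \Z\}$, restricted to $-(ak-1) \leq r \leq 2k-1$.

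The key observation is then geometric: $D_i^t$ lies on the diagonal $\{(i+r, t+r)\}$ parametrized by $t + r$, and as $(j,u)$ ranges over $D_i^t$ the difference $j - i = u - t$ ranges over the integer interval $[-(ak-1), 2k-1]$, which has length $ak + 2k - 1 < (a+2)k$. Two such sets $D_i^t$ and $D_i^{t'}$ lie on the \emph{same} diagonal only if $t \equiv t' \pmod{1}$ in the trivial sense — more precisely, $(i+r, t+r) = (i+r', t'+r')$ forces $r = r'$ and then $t = t'$. So for $t \neq t'$ the two diagonals $\{(i+r,t+r) \mid r \in \Z\}$ and $\{(i+r,t'+r) \mid r \in \Z\}$ are disjoint, and therefore $D_i^t \cap D_i^{t'} = \emptyset$. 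This is in fact the whole point of centering both the ``up-right'' and ``down-left'' parts of the definition on the diagonal through $(i,t)$: the $E$-variables consulted for $O_i^t$ never leave that diagonal. I do not expect a genuine obstacle here; the only thing to be careful about is to check that the definition of $O_i^t$ (as opposed to $G_i^t$ or $B_i^t$, which also involve the off-diagonal ``yellow triangle'' cells $E_{i+2k+j}^{t+2k+n}$) really does stay on the diagonal — and it does, precisely because in the ``other'' case there is no arrow, so the triangle condition is vacuous and those off-diagonal variables are not consulted. Having established $D_i^t \cap D_i^{t'} = \emptyset$, independence follows from the independence of the error field.
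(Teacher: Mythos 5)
Your approach is essentially the same as the paper's: locate the set of error-field variables on which each event depends, and deduce independence from disjointness of those index sets. There is one genuine slip in the logic, though. You define $D_i^{t'}$ as the dependency set of $O_i^{t'}$ (a diagonal segment) and then assert that $G_i^{t'}$ and $B_i^{t'}$ are measurable with respect to $(E_j^u)_{(j,u)\in D_i^{t'}}$, so that $D_i^t \cap D_i^{t'} = \emptyset$ alone delivers all three independences. That assertion is false: $G_i^{t'}$ and $B_i^{t'}$ also depend on the yellow-triangle variables $E_{i+2k+j}^{t'+2k+n}$ with $j,n \geq 1$, which lie off the diagonal and outside $D_i^{t'}$. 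You are clearly aware of these cells -- you mention them at the end -- but you use that observation only to justify that $O_i^t$ stays on its own diagonal; you never verify that $D_i^t$ avoids the triangle cells associated with $(i,t')$. As written, the conclusion "Having established $D_i^t \cap D_i^{t'} = \emptyset$, independence follows" does not follow for $G_i^{t'}$ and $B_i^{t'}$.

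The missing step is one line: every triangle variable of $(i,t')$ has first coordinate at least $i+2k+1$, while every variable indexed by $D_i^t$ has first coordinate at most $i+2k-1$, so the two index sets are disjoint regardless of $t'$. With that inserted, your argument matches the paper's proof of Lemma~\ref{lem:independent-good-bad-other}, which states explicitly that the other events depend both on the diagonal segment through $(i,t')$ and on the triangle cells, and that this whole collection is disjoint from the diagonal segment supporting $O_i^t$.
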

  
  \begin{proof}
  The event $O_i^t$ only depends on the values $E_{i+m}^{t+m}$ for $-ak \leq m \leq 2k-1$, while the other events depend on the values $E_{i+m}^{t'+m}$ for $-ak \leq m \leq 2k-1$ and $E_{i+2k+j}^{i+2k+n}$ for some $j,n \geq 1$.
  As these are disjoint sets of independent random variables, the events are independent.
  \end{proof}

\begin{rem}
The ratio $\frac{p_{good}^{\epsilon}}{s(\epsilon)}=\frac{p_{good}^{\epsilon}}{p_{good}^{\epsilon}+p_{bad}^{\epsilon}}$ often appear in the computations. Using the definitions of $p_{good}^\epsilon$ and $p_{bad}^\epsilon$, one can observe that when $\epsilon\to0$, the ratio has a limit $C_{a,k}$ which verifies: 
\begin{equation}
C_{a,k} \underset{a\to\infty}{=}1-\frac{8-\frac{6}{k}}{a}+o\left(\frac{1}{a}\right).
\label{eq:ratio}
\end{equation}
\end{rem}

\begin{figure}[h]
\centering{}\includegraphics[scale=0.6]{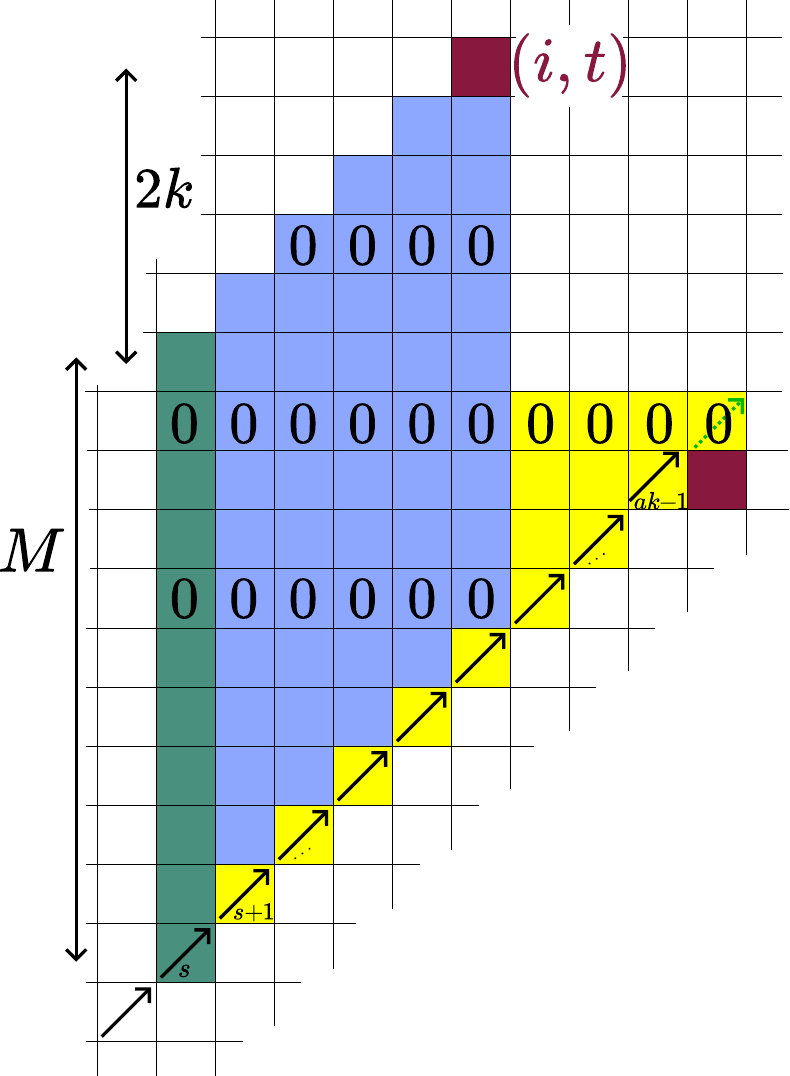} \caption{Suppose $L_{n}=i$ and $t_{n}=t$. In green, the column where the first arrow is looked for: it appears after $M-1$ ``other" cases (no errors on the $2k$ cells in diagonal, in blue). As the arrow is of type $s$ and there is no error on its path (yellow cells), it's a success: the arrow has created a synchronized zone of size greter than $2k$, which does not let information go through from the left. We then define $L_{n+1}=i+(a-2)k-s$ and $t_{n+1}=t-M+\left(a-2\right)k-s-1$.\label{fig:LeftBorder}}
\end{figure}

Figure \ref{fig:LeftBorder} illustrates a transition of the MAC $(L_{n},J_{n})$ which is defined with a random sequence of times $(t_{n})$ by induction, the base case being $L_{0}=0$, $J_{0}=1$, $t_{0}=0$.
Suppose that $L_{n}$, $J_{n}$ and $t_{n}$ have been defined.
If $J_n = 1$ (meaning that the chain has not recently moved to the left), then let $M \coloneqq \min \left\{ 0 \leq m \leq H \mid O_{L_n-2k}^{t_n-2k-m} \text{ does not occur}\right\}$ with $H = H(\epsilon,a)$ a \define{barrier}, to be defined later.
If the set is empty or $M < (a-2)k$ or $B_{L_n-2k}^{t_n-2k-M}$ occurs, we have a \textbf{failure}, and the chain moves to the left:
\begin{equation}
  \label{eq:Ln-failure}
  L_{n+1} = L_n - 4ak, \: J_{n+1} = 2 \text{ and } t_{n+1} = t_n-4ak.
\end{equation}
Otherwise, $G_{L_n-2k}^{t_n-2k-M}$ occurs with the arrow $A_{L_n-2k}^{t_n-2k-M}(s)$ for some $s \leq (a-2)k$, and we have a \textbf{success}.
The chain now moves to the right, onto the position where the arrow disappears naturally:
\[
  L_{n+1} = L_n + (a-2)k-s, \: J_{n+1} = 1 \text{ and } t_{n+1} = t_n - M + (a-2)k -s -1.
\]
Denote the probability of \textbf{success} by $\alpha_\epsilon$.

In the case $J_n = 2$, the process is a little more involved.
As in the first case, define $M_1 \coloneqq \min\left\{ 0 \leq m \leq H \mid O_{L_n-2k}^{t_n-2k-m} \text{ does not occur}\right\}$.
If we have a \textbf{failure}, then we use Equation \eqref{eq:Ln-failure} as before.

Suppose then that we have a \textbf{success} with some $s \leq (a-2)k$.
If $M_1 - (a-2)k +s +1 < H$, then the barrier has not been overcome and we call it a \textbf{jump}.
We then fix a new $M_2$ and iterate the last step.
More precisely, if the first $l > 0$ attempts have resulted in \textbf{jumps}, then let $M_{l+1} \coloneqq \min\left\{ m \leq H \mid O_{L_n-2k}^{t_n-2k-m-\sum_{r=1}^{l}M_r } \text{ does not occur}\right\}$.
If the set is empty or $B_{L_n-2k}^{t_n-2k-\sum_{r=1}^{l+1}M_r}$ occurs, we have a \textbf{failure} and apply Equation \eqref{eq:Ln-failure}.
Otherwise, if $\sum_{r=1}^{l}M_r - (a-2)k +s_l +1 < H$ (with $s_l$ the $s$ of the last success) we have a \textbf{jump}.
If we have neither a \textbf{failure} nor a \textbf{jump}, then the barrier $H$ has been overcome and the chain moves to the right:
\[
  L_{n+1} = L_n + (a-2)k-s_l, \: J_{n+1} = 1 \text{ and } t_{n+1} = t_n - \sum_{r=1}^{l}M_r + (a-2)k -s_l -1.
\]
Denote the probability of overcoming a barrier by $\beta_\epsilon$.

A transition of $L_{n}$ with $J_{n}=1$ (a check) is illustrated on Figure \ref{fig:LeftBorder}, and a trajectory of $\left(L_{n},J_{n}\right)$ on Figure \ref{fig:MAC}. A success is obtained when the arrow defining the synchronized zone the considered cell is part of is encountered, and this zone is sufficiently large. Otherwise, it is a failure. The barrier $H$ has no physical reality in the trajectory: it is only defined to ensure the mutual independence between the different jumps. Aiming for a large probability of success (encountering a good case before $H$ tries) while also having a great chance to overcome a barrier of size $H$, a good choice for $H$ is $H\coloneqq\frac{\ln(a)}{s(\epsilon)}$ (where $s\left(\epsilon\right)=1-p_{other}^{\epsilon}$).

\begin{figure}[!h]
\centering{}\includegraphics[scale=0.5]{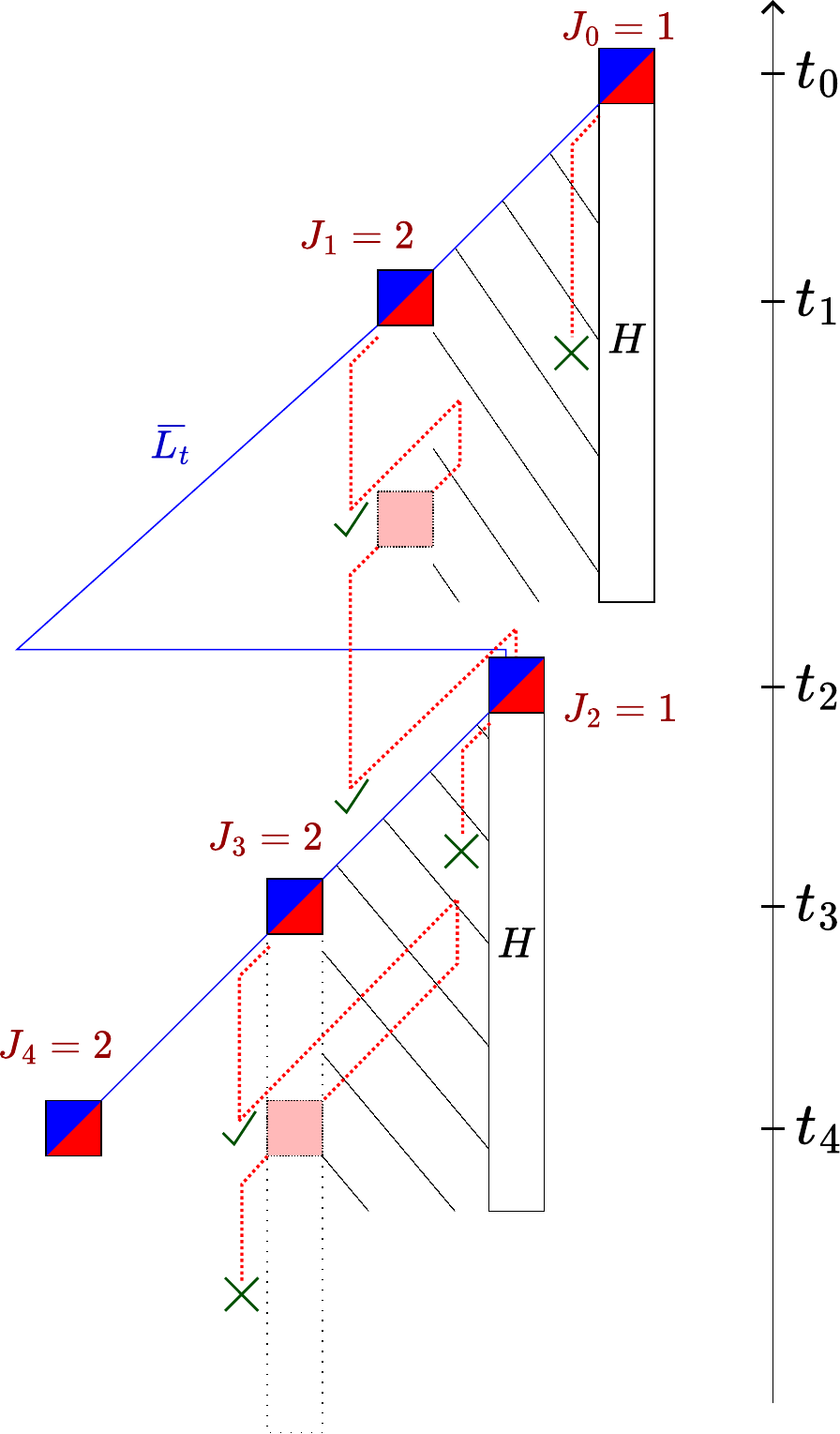} \caption{First steps of $(L_{n},J_{n})$ and $\overline{L_{t}}$ (in blue). The red-and-blue cells are at times $t_{n}$ in position $L_{n}$. The initial state is $(L_{0},J_{0})=(0,1)$. The first steps represented are: \protect \\
(1) The initial cell is not in a good synchronized zone: we encounter a bad case before $H$ checks, it's a failure. We represent a barrier of size $H$ and the new state is $(L_{1},2)$ with $L_{1}<L_{0}$.\protect \\
(2) The first result is a success (the synchronized zone is large enough), but before $H$ checks: it's a jump. The second one succeed after enough checks to overcome the barrier: the last success defines the new state $L_{2}\protect\geq L_{1}$ and $J_{2}=1$.\protect \\
(3) The check fails, so $L_{3}<L_{2}$ and a barrier is represented.\protect \\
(4) The first check is a jump, but the second is a failure: $L_{4}<L_{3}$ and a new barrier is represented. \label{fig:MAC}}
\end{figure}

The sequence $(J_{n})$ is an ergodic Markov chain on $\{1,2\}$ with unique invariant probability vector $\pi=(\pi_{1},\pi_{2})\coloneqq\frac{1}{1-\alpha_{\epsilon}+\beta_{\epsilon}}(\beta_{\epsilon},1-\alpha_{\epsilon})$. Moreover, the level change $L_{1}-L_{0}$ knowing $J_{1}$ is independent from $J_{0}$: it equals $-4ak$ if $J_{1}=2$, and $(a-2)k-s$ if $J_{1}=1$, with $s$ independent from $J_{0}$.

\subsection{Interpretation of $L_n$}

The sequence of spacetime positions $\left(L_{n},t_{n}\right)$ is used to represent points such that the information to its left on the $\G$-layer cannot affect the right half-configuration at $t=0$. After a failure, the sequence moves to the left with slope $1$, the maximum speed of information propagation in $T_\epsilon$ (Lemma \ref{lem:passageOfTime}). After a success, $L_{n}$ is at the border of a synchronized zone of size greater than $2k$, through which information cannot flow.
This is formalized in the following result.

\begin{prop}
  \label{prop:Ln-lt}
  For all $\omega\in\Omega$ and  $n\in\N$, 
  \[
    l_{-t_n+2ak} \geq L_n - 2ak.
  \]
\end{prop}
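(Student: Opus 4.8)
The plan is to argue by induction on $n$, using Lemma \ref{lem:passageOfTime} to handle the failure steps and a synchronization argument to handle the success steps. The base case $n = 0$ reads $l_{2ak} \geq -2ak$, which holds trivially since $l_t \geq -t$ always. For the inductive step, assume $l_{-t_n + 2ak} \geq L_n - 2ak$; we distinguish according to whether the transition from $(L_n, t_n)$ to $(L_{n+1}, t_{n+1})$ was a failure or a success.

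\textbf{Failure case.} Here $L_{n+1} = L_n - 4ak$ and $t_{n+1} = t_n - 4ak$, so $-t_{n+1} + 2ak = (-t_n + 2ak) + 4ak$. Applying Lemma \ref{lem:passageOfTime} with $s = 4ak$ gives $l_{-t_{n+1} + 2ak} = l_{(-t_n + 2ak) + 4ak} \geq l_{-t_n + 2ak} - 4ak \geq (L_n - 2ak) - 4ak = L_{n+1} - 2ak$, as desired. The same applies to any intermediate jumps followed by a failure, since the net vertical displacement is still bounded and Lemma \ref{lem:passageOfTime} only loses one cell per time step.

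\textbf{Success case.} This is the main obstacle and the only place the dynamics of $T_\epsilon$ enters. On a success, the good event $G_{L_n - 2k}^{t_n - 2k - M}$ (or $G_{L_n-2k}^{t_n-2k-\sum M_r}$) occurs with an arrow $\nearrow_s$ of type $s \leq (a-2)k$ and no errors on the relevant diagonal cells or in the yellow triangle, so the segment $[L_n - 2k, L_n + (a-2)k - s]$ on the $\F$-layer becomes synchronized at every time from when the arrow passes until time $t_{n+1} + 1$, after which the arrow dies naturally at position $L_{n+1} = L_n + (a-2)k - s$. Since this synchronized segment has length at least $2k > k$, every $k$ steps all of its cells simultaneously carry a $0$ on the $\F$-layer, which forces $0_\G$ on the entire segment on the $\G$-layer; I will need to check that within any window of $k$ consecutive time steps at least one such simultaneous erasure occurs, so that no $\G$-information can propagate across the segment between times $t_n - 2k - (\cdot)$ and $t_{n+1}$. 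Consequently, the map $\alpha \mapsto \Psi^{-t_{n+1}}_{0^+}(\alpha\beta, U)$ restricted to coordinates left of the synchronized barrier is constant for an appropriate suffix $\beta$, which is exactly the statement that $l_{-t_{n+1}} \geq L_{n+1} + 1 > L_{n+1}$ — in fact with room to spare. Combining this with the inductive hypothesis for the parts of the cone that do pass near $(L_n, t_n)$ (using that the chain, after jumps, stays within $4ak$ of $L_n$ and hence within the already-controlled region) and applying Lemma \ref{lem:passageOfTime} over the at most $O(ak)$ extra time steps to pass from $l_{-t_{n+1}}$ down to $l_{-t_{n+1} + 2ak}$, one gets $l_{-t_{n+1} + 2ak} \geq L_{n+1} - 2ak$.

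\textbf{Where the difficulty lies.} The bookkeeping of the slack $2ak$ is designed precisely so that all the $+1$-speed losses from Lemma \ref{lem:passageOfTime} incurred over the (bounded) vertical extent of one MAC step, plus the additive constant $2ak$ coming from reading off the cone two time steps worth of $ak$ below $t_n$, never exceed what the synchronized barrier buys us. The genuinely nontrivial point is verifying that a synchronized $\F$-segment of length $\geq 2k$ really does block $\G$-information for the local rule $T$: one must trace through the definition of $T$ to confirm that whenever a contiguous run of cells all hold $0$ on the $\F$-layer, $T$ writes $0_\G$ on all of them regardless of their $\G$-neighbors, and that the synchronization (same value mod $k$, advancing by $+1$ each step on the plateau) guarantees a full row of $0$'s every $k$ steps; then a short argument shows that a radius-$1$ CA cannot transmit information across $k$ consecutive erased columns within $k$ time steps, hence not across the whole lifetime of the barrier. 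The rest is the routine induction above.
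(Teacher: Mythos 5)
The failure case you handle is correct: $L_{n+1}=L_n-4ak$, $t_{n+1}=t_n-4ak$, so $-t_{n+1}+2ak=(-t_n+2ak)+4ak$ and Lemma~\ref{lem:passageOfTime} directly gives the claim. The gap is in the success case. You assert that the synchronized barrier yields $l_{-t_{n+1}}\geq L_{n+1}+1$, i.e.\ that the dependence cone of $(0,0)$ at time $t_{n+1}$ does not extend left of $L_{n+1}$. But what the barrier argument actually gives is only that information to the left of the arrow at time $t_{n+1}$ cannot influence the half-configuration $[L_n,\infty)$ at time $t_n$. To transfer this to a statement about $(0,0)$, you would need to know that the dependence cone of $(0,0)$ passes entirely through $[L_n,\infty)$ at time $t_n$. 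Your inductive hypothesis $l_{-t_n+2ak}\geq L_n-2ak$ only constrains the cone at the single time $t_n-2ak$; it says nothing about where the cone sits at time $t_n$ itself, nor at the intermediate times $t_{n+1}<t<t_n$, and in particular does not rule out the cone skirting around the barrier. When you write ``combining this with the inductive hypothesis for the parts of the cone that do pass near $(L_n,t_n)$,'' you are glossing over precisely this.

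The paper fixes this by strengthening the induction. It introduces the auxiliary quantity $l^n_{-t}$, the left border at time $t$ of the dependence cone of the half-configuration $[L_n,\infty)$ at time $t_n$ (formally $l^n_s$ is defined via $\Psi^{s+t_n}_{L_n^+}$ with shifted noise), and proves the stronger statement $l_{-t}\geq l^n_{-t}$ for all $t\leq t_n-2ak$. This asserts that the entire past cone of $(0,0)$ is contained in the cone of $(L_n,t_n)$, which is exactly what lets the barrier argument apply: in the success case one shows the cone of $[L_n,\infty)$ at $t_n$ is contained in the cone of $[L_{n+1},\infty)$ at $t_{n+1}$ (information to the left is blocked by the forced $0_\G$-erasures, and arrows cannot cross either), so the inclusions chain. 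The proposition then follows by one application of Lemma~\ref{lem:passageOfTime}. Your proposal identifies the right ingredients and the $2ak$ bookkeeping is fine, but the induction is set up at the wrong level: you must carry along some version of $l^n_{-t}$, or an equivalent statement controlling the cone at all past times relative to $(L_n,t_n)$, for the success step to close.
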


\begin{proof}
  For $s\geq -t_n$ define 
  \[
    l_{s}^{n}\coloneqq
    \min\left\{ m\geq L_{n}-s-t_{n}\mid\exists\beta\in\A^{\N},\,
      \begin{array}{ccl}
        \A^{h\left(m,n\right)} & \to & \A^{\N}\\
        \alpha & \mapsto & \Psi_{L_{n}^{+}}^{s+t_{n}}\left(\alpha\beta,U^{+t_{n}}\right)
      \end{array}
      \text{ is not constant}\right\} 
  \]
  where $h\left(m,n\right)=m+s+t_{n}-L_{n}+1$ and $\left(U^{+t_{n}}\right)^{t}=U^{t+t_{n}}$. It is the left border of the dependence cone at time $-s$ of the semi-configuration on $[L_n, +\infty)$ at time $t_n$.

  For a given $\omega \in \Omega$, let us prove the following proposition by induction on $n\in\N$:
  \begin{equation}
    \label{eq:ln-lnt-bound}
    \forall t \leq t_n -2ak, \quad l_{-t} \geq l^n_{-t}
  \end{equation}
  Intuitively, this states that any information at time $t \leq t_n -2ak$ that cannot influence the half-configuration on $[L_n, \infty)$ at time $t$, also cannot influence the half-configuration on $[0, \infty)$ at time $0$.
  This implies the original claim, as information can propagate at speed at most $1$ (Lemma \ref{lem:passageOfTime}).

  The base case $n=0$ is immediate by definition, $l_{t} = l^0_{t}$.
  Suppose now that \eqref{eq:ln-lnt-bound} is true for a fixed $n\in\N$. By $t_{n+1} < t_n$, the inequality $l_{-t} \geq l^n_{-t}$ is true for all $t\leq t_{n+1}-2ak$.

  If $J_{n+1} = 2$, the last step is a failure: $L_{n+1} = L_n -4ak$ and $t_{n+1} = t_n-4ak$. By Lemma \ref{lem:passageOfTime}, 
  \[
    l^n_{t_{n+1}} \geq L_n -4ak = L_{n+1}.
  \]
  Therefore the value of the configuration on $[L_n, +\infty)$ at $t_n$ is determined by the value of the configuration on $[L_{n+1}, +\infty)$ at $t_{n+1}$, which gives by definition $l^n_{-t} \geq l^{n+1}_{-t}$ for $t\leq t_{n+1}-2ak$ and the result.

  If $J_{n+1} = 1$, the last step is a success: the cell at $L_n$ is at the right end of a zone of size greater than $2k$ of synchronized values on the $\F$-layer.
  This zone is created by an arrow that dies at the position $(J_{n+1}, t_{n+1})$.
  Any information on the $\G$-layer that is to the left of the arrow at time $t_{n+1}$ cannot influence the value of the configuration on $[L_n; +\infty)$ at time $t_n$, as it would have to pass through a region whose cells are forced to take a value $0_{\G}$ every $k$ steps, the last of which happens at time $t_{n+1}+1$.
  Likewise, no arrows on the $\F$-layer cross this region.
  Hence the value of the configuration on $[L_n, +\infty)$ at $t_n$ is determined by the value of the configuration on $[L_{n+1}, +\infty)$ at $t_{n+1}$, which gives the result. Figure \ref{fig:Ln-lt} illustrates the right shift of the left border of the dependence cone after a success. 
\end{proof}

\begin{figure}[!h]
\centering{}
\begin{tikzpicture}[scale=0.5]

  \pgfmathsetmacro{\height}{10}

  \draw[fill=red!30] (0,0) rectangle ++(1,\height+8);
  
  \draw[fill=green!30] (1,1)
  \foreach \x in {1,...,4}{
    -| ++(1,1)
  }
  |- ++(-2,\height+6)
  \foreach \x in {1,...,2}{
    |- ++(-1,-1)
  }
  -- cycle;

  \draw[fill=yellow!30] (5,5)
  \foreach \x in {1,...,6}{
    -| ++(1,1)
  }
  -| cycle;

  \node [anchor=west] at (0.25,0.25) {${\scriptstyle s}$};
  %\node [anchor=west] at (1.15,1.25) {${\scriptstyle s+1}$};
  %\node [anchor=west] at (8.15,8.25) {${\scriptstyle ak-1}$};
  \foreach \x in {-2,...,10}{
    \node at (\x+0.5,\x+0.5) {$\nearrow$};
  }

  \foreach \x in {0,...,3}{
    \node at (\x+0.5,3.5) {$0$};
  }
  \foreach \x in {0,...,7}{
    \node at (\x+0.5,7.5) {$0$};
  }
  \foreach \x in {0,...,4}{
    \node at (\x+0.5,11.5) {$0$};
  }
  \foreach \x in {0,...,4}{
    \node at (\x+0.5,15.5) {$0$};
  }
  \foreach \x in {2,3,4}{
    \node at (\x+0.5,19.5) {$0$};
  }

  \draw (14,-5.5) -- (14,20.5);
  \draw (13.75,-5.5) -- ++(0.5,0);
  \node [right] at (14.5,-5.5) {$t_{n+1}-2ak$};
  \draw (13.75,10.5) -- ++(0.5,0);
  \node [right] at (14.5,10.5) {$t_{n+1}$};
  \draw (13.75,20.5) -- ++(0.5,0);
  \node [right] at (14.5,20.5) {$t_n$};
  
  \draw [thick, dotted, ->] (-4.5,-5.5) -- ++(16.5,16.5);
  \draw [thick, dotted, ->] (-6.5,-5.5) -- ++(8.5,8.5);
  \draw [thick, dotted, ->] (-6.5,1.5) -- ++(9.5,9.5);
  \draw [thick, dotted, ->] (-6.5,6.5) -- ++(8.5,8.5);

  \fill (4.5,20.5) circle (0.15cm);
  \node [left] at (4.5,20.5) {$L_n$};
  \draw[dashed] (4.5,20.5) -- (13.75,20.5);

  \fill (11.5,10.5) circle (0.15cm);
  \node [below right] at (11.5,10.5) {$L_{n+1}$};
  \draw[dashed] (11.5,10.5) -- (13.75,10.5);
  
  \fill (-4.5,-5.5) circle (0.15cm);
  \draw[dashed] (-4.5,-5.5) -- (13.75,-5.5);
  
\end{tikzpicture}
\caption{Illustration of a success in $L_{n}$ (not to scale). The colored area is synchronized and blocks all information flow on the $\G$-layer. The colors correspond roughly to Figure \ref{fig:GoodBadOther}; the red strip consists of ``other'' events followed by a ``good'' event (with $\nearrow_s$). In dots the theoretical information flow at speed $1$. For the configuration at $t_{n+1}-2ak$, only the information to the right of the dot can influence the half-configuration right of $(L_n, t_n)$, passing to the right of the synchronized zone. They are included in the cells that can influence the half-configuration right of $(L_{n+1}, t_{n+1})$. \label{fig:Ln-lt}}
\end{figure}
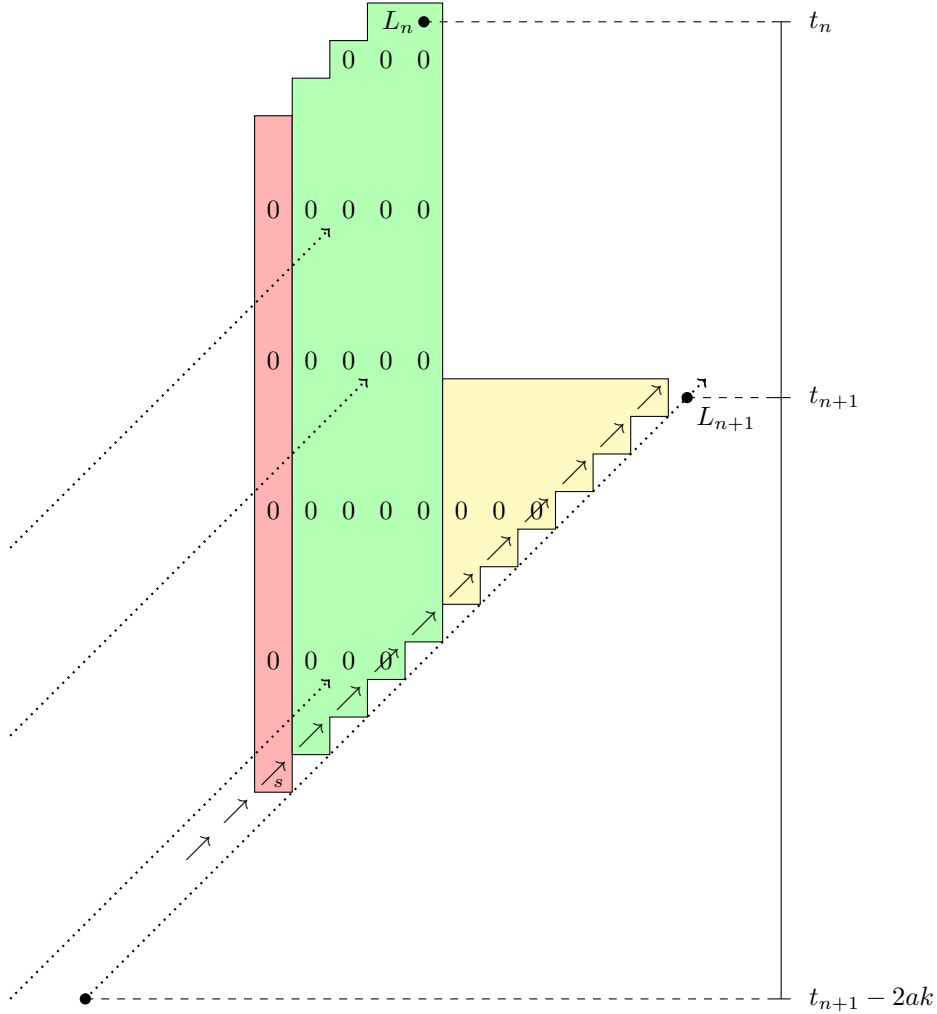

To more easily represent the bound of $l_t$ by $L_n$, we define its ``hull'' as $\overline{L_{t}} \coloneqq L_n - (t_n - t)$, where $n$ is such that $t_{n+1} < t \leq t_n$. The following corollary is a direct application of the last proposition and Lemma \ref{lem:passageOfTime}:
\begin{cor}
For a given $\omega\in\Omega$ we have $l_{-t+2ak} \geq \overline{L_t} - 2ak$.
\end{cor}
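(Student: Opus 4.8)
The plan is to deduce this corollary directly from Proposition \ref{prop:Ln-lt} and the finite-speed-of-propagation Lemma \ref{lem:passageOfTime}; no new ideas are needed, only index bookkeeping.

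First I would fix $\omega\in\Omega$ and an integer $t$, and let $n\in\N$ be the unique index with $t_{n+1} < t \leq t_n$, so that by definition $\overline{L_t} = L_n - (t_n - t)$. I would then set $s \coloneqq t_n - t$ and observe that $s \geq 0$ (since $t \leq t_n$) and $-t_n + 2ak \geq 0$ (since $t_n \leq t_0 = 0$), so both quantities are natural numbers and Lemma \ref{lem:passageOfTime} applies with base time $-t_n + 2ak$ and offset $s$.

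Next, Proposition \ref{prop:Ln-lt} gives $l_{-t_n + 2ak} \geq L_n - 2ak$. Writing $-t + 2ak = (-t_n + 2ak) + s$ and invoking the inequality $l_{t'+s} \geq l_{t'} - s$ of Lemma \ref{lem:passageOfTime} with $t' = -t_n + 2ak$, I would conclude
\[
l_{-t+2ak} = l_{(-t_n+2ak)+s} \geq l_{-t_n+2ak} - s \geq (L_n - 2ak) - (t_n - t) = \overline{L_t} - 2ak,
\]
which is exactly the asserted bound. There is no genuine obstacle here; the only subtlety is that $t$ need not coincide with one of the chain times $t_n$, which is precisely why one interpolates between consecutive $t_n$ via the speed-one bound of Lemma \ref{lem:passageOfTime}.
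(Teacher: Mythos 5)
Your proof is correct and is exactly what the paper intends: the corollary is stated in the paper as an immediate consequence of Proposition \ref{prop:Ln-lt} and Lemma \ref{lem:passageOfTime}, and you supply precisely that deduction, with the interpolation step $l_{(-t_n+2ak)+s} \geq l_{-t_n+2ak} - s$ bridging between the chain time $t_n$ and the arbitrary time $t$.
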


\begin{cor}
For a given $\omega\in\Omega$, if $L_n \tendsto{n}{\+\infty}+\infty$, then $l_t \tendsto{t}{+\infty}+\infty$.
\end{cor}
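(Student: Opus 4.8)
The plan is to deduce the corollary from the preceding one, which asserts $l_{-t+2ak}\ge\overline{L_t}-2ak$ for every $\omega$ and every $t\le 0$. Since the hull $\overline{L_t}$ is built directly from the values $L_n$, it suffices to prove that $L_n\to+\infty$ forces $\overline{L_t}\to+\infty$ as $t\to-\infty$; then, setting $s=-t+2ak$, the inequality becomes $l_s\ge\overline{L_{2ak-s}}-2ak$ with a right-hand side tending to $+\infty$, which is exactly $l_t\to+\infty$ as $t\to+\infty$ (no monotonicity of $l$ is needed, only a divergent lower bound).

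First I would record two elementary facts about the time sequence $(t_n)$ generated alongside $(L_n,J_n)$. It is strictly decreasing: a failure lowers it by $4ak$, and a success (reached directly, or after a run of jumps) lowers it by $\sum_r M_r-(a-2)k+s_l+1\ge 1$, since by construction each $M_r$ contributing to a success satisfies $M_r\ge(a-2)k$. Hence $t_n\to-\infty$, so for each integer $t\le 0$ there is a unique index $n(t)$ with $t_{n(t)+1}<t\le t_{n(t)}$, it satisfies $n(t)\to+\infty$ as $t\to-\infty$, and by definition $\overline{L_t}=L_{n(t)}-(t_{n(t)}-t)$ with $0\le t_{n(t)}-t<t_{n(t)}-t_{n(t)+1}$.

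The second fact — the one that needs a genuine argument — is that the time increments are uniformly bounded: there is a constant $D$, depending only on the fixed parameters $a,k,\epsilon$ (through the barrier $H$), with $t_n-t_{n+1}\le D$ for all $n$. A failure contributes exactly $4ak$. For a success the increment is $\sum_{r=1}^{l}M_r-(a-2)k+s_l+1$ with each $M_r\le H$; moreover each of the attempts $1,\dots,l-1$ is a jump and therefore satisfies $\sum_{r'=1}^{r}M_{r'}-(a-2)k+s_r+1<H$, so the partial sum $\sum_{r'=1}^{l-1}M_{r'}$ stays below $H+(a-2)k$, and adding the last term $M_l\le H$ bounds the whole increment by $2H+(a-2)k+1$. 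Thus $D\coloneqq\max(4ak,\,2H+ak+1)$ works. I expect this bookkeeping around the multi-jump transitions of phase $2$ to be the main obstacle; the rest is routine.

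With these facts in place the conclusion is immediate. Fix $N\in\N$; since $L_n\to+\infty$, pick $n_0$ with $L_n\ge N+D$ for all $n\ge n_0$. Then for every $t\le t_{n_0}$ we have $n(t)\ge n_0$, hence
\[
\overline{L_t}=L_{n(t)}-(t_{n(t)}-t)>L_{n(t)}-D\ge N,
\]
using $0\le t_{n(t)}-t<t_{n(t)}-t_{n(t)+1}\le D$. Therefore $\overline{L_t}\to+\infty$ as $t\to-\infty$, and the preceding corollary then yields $l_t\to+\infty$ as $t\to+\infty$.
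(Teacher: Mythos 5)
Your proof is correct and follows the same route as the paper's short argument: the paper simply asserts that $(t_n)$ is decreasing with $t_n - t_{n+1} \leq 2H$ and then applies the preceding corollary; you supply the bookkeeping justifying both facts (getting the slightly more careful bound $D = \max(4ak,\,2H+ak+1)$, which is what actually comes out of the multi-jump case, though any uniform bound suffices). One small imprecision: it is not true that \emph{each} $M_r$ in a phase-$2$ run satisfies $M_r \geq (a-2)k$ — the failure test $M < (a-2)k$ is only applied to $M_1$ (and to the single $M$ in phase $1$) — but since $M_1 \geq (a-2)k$ and the remaining $M_r \geq 0$, your conclusion $\sum_r M_r - (a-2)k + s_l + 1 \geq 1$ still holds.
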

\begin{proof}
By construction, $(t_n)$ is a decreasing sequence such that $t_n - t_{n+1} \leq 2H$ for all $n$. Therefore for any $t_{n+1} < t \leq t_n$, we have $\overline{L_t} \geq L_n - 2H$
and $\overline{L_t} \tendsto{t}{-\infty}+\infty$. The previous corollary then gives the result.
\end{proof}

\subsection{Mean drift of the left border}

By the properties verified by the MAC $(L_n,J_n)$, its mean drift is
\begin{align*}
	\kappa^\prime &= \pi_1 \sum_{m\in\Z} m P\left(L_1 = m \mid J_1=1\right) + \pi_2 \sum_{m\in\Z} m P\left(L_1 = m \mid J_1=2\right)\\
		&= \pi_1 \sum_{s=0}^{(a-2)k}((a-2)k-s)  P\left(L_1 = ((a-2)k-s) \mid J_1=1\right) - \pi_{2}4ak
\end{align*}
Knowing that $J_1=1$, the step to the right is $(a-2)k-s$ if $\nearrow_s$ is the arrow involved in the last success (independently of $J_0 = 1$ or $2$). As it is a success, the arrow can only be $\nearrow_s$ with $s \leq (a-2)k$, and the probability is independent of the actual cell the event $A_i^t(s)$ occurred: 
\[
	P\left(L_1 = ((a-2)k-s) \mid J_1=1\right) = \frac{P\left(A_0^0(s)\right)}{P\left(\bigcup_{l=0}^{(a-2)k}A_0^0(l)\right)}.
\]

Therefore the mean drift is strictly positive if $\beta_{\epsilon}$ and $\alpha_{\epsilon}$ verify the following inequality
\[
	\pi_{1} \sum_{s=0}^{(a-2)k}(ak-s-2k) \frac{P\left(A_0^0(s)\right)}{P\left(\bigcup_{l=0}^{(a-2)k}A_0^0(l)\right)} > \pi_{2}4ak
\]
which becomes when we multiply both side by $1-\alpha_\epsilon + \beta_\epsilon$:
\begin{equation}
\underset{\text{Section \ref{sec:beta}}}{\underbrace{\beta_{\epsilon}}} \underset{\text{Section \ref{sec:rightStep}}}{\underbrace{\sum_{s=0}^{(a-2)k}(ak-s-2k)\frac{P\left(A_0^0(s)\right)}{P\left(\bigcup_{l=0}^{(a-2)k}A_0^0(l)\right)}}} > \underset{\text{Section \ref{sec:alpha}}}{\underbrace{\left(1-\alpha_{\epsilon}\right)}}4ak.
\label{eq:driftLeftBorder}
\end{equation}

\subsection{Mean step to the right}
\label{sec:rightStep}

We already computed $P\left(A_{i}^{t}(s)\right)=\sum_{m=0}^{s}\frac{\epsilon}{(a+1)k}(1-\epsilon)^{m}\underset{\epsilon\to0}{\sim}\frac{s+1}{(a+1)k}\epsilon$ for all $\left(i,t\right)$. As the $A_{0}^{0}\left(l\right)$ are disjoints for $l\in\left\llbracket 0,\left(a-2\right)k\right\rrbracket $, we can show that
\[
	\frac{P\left(A_{0}^{0}(s)\right)}{P\left(\bigcup_{l=0}^{(a-2)k}A_{0}^{0}(l)\right)}\tendsto{\epsilon}0\frac{\left(s+1\right)}{\sum_{l=0}^{\left(a-2\right)k}\left(l+1\right)}=\frac{2\left(s+1\right)}{\left(\left(a-2\right)k+1\right)\left(\left(a-2\right)k+2\right)}
\]
and so 
\[
	\sum_{s=0}^{(a-2)k}(ak-s-2k)  \frac{P\left(A_0^0(s)\right)}{P\left(\bigcup_{l=0}^{(a-2)k}A_0^0(l)\right)}  \tendsto{\epsilon}{0} \frac{(a-2)k}{3}.
\]

Thus inequality (\ref{eq:driftLeftBorder}) is verified for $\epsilon$ small enough if 
\[
	\liminf_{\epsilon\to0}\beta_{\epsilon}>\frac{12a}{a-2}\limsup_{\epsilon\to0}\left(1-\alpha_{\epsilon}\right).
\]

\subsection{Computation of $\alpha_{\epsilon}$, the probability of success}
\label{sec:alpha}

Recall that the number $\alpha_{\epsilon}$ is the probability of success when determining the direction of the next step of $L_{n}$ with $J_{n} = 1$ (see Section \ref{subsec:L_n}), i.e. seeing a good case before a bad one and between $ak$ and $H$ timesteps.
By Lemma \ref{lem:independent-good-bad-other}, we have
\[
	\alpha_{\epsilon} = \sum_{n=ak}^{H}(p_{other}^{\epsilon})^{n-2k}p_{good}^\epsilon = \frac{p_{good}^{\epsilon}}{1-p_{other}^{\epsilon}}\left(1-(p_{other}^{\epsilon})^{H-2k}\right)(p_{other}^\epsilon)^{(a-2)k}
\]
As $H=\frac{\ln(a)}{s(\epsilon)}$, one has $(p_{other}^{\epsilon})^{H}=\exp\left(\ln(a)\frac{\ln(1-s(\epsilon))}{s(\epsilon)}\right)\tendsto{\epsilon}0\frac{1}{a}$, $(p_{other}^{\epsilon})^{-2k} \tendsto{\epsilon}{0}1$, $(p_{other}^{\epsilon})^{(a-2)k} \tendsto{\epsilon}{0}1$ and
\[
	\alpha_{\epsilon}\tendsto{\epsilon}0 C_{a,k} \left(1-\frac{1}{a}\right)\eqqcolon\alpha
\]

Thus inequality (\ref{eq:driftLeftBorder}) is verified for $\epsilon$ small enough if $\liminf_{\epsilon\to0}\beta_{\epsilon}>\frac{12a}{a-2}(1-\alpha).$
Moreover, by equation (\ref{eq:ratio}) we have the expansion $\alpha\underset{a\to\infty}{=}1-\frac{9-\frac{6}{k}}{a}+o\left(\frac{1}{a}\right)$ so to conclude one only needs that for $a$ large enough
\[
	\liminf_{\epsilon\to0}\beta_{\epsilon} \geq \frac{109}{a}.
\]

\subsection{Distribution of the jumps}

Suppose that the MAC is in state $J_n = 2$ at step $n > 0$, so that there is a barrier of height $H$ on the right.
Now we iterate the check of the $J_n = 1$ case.
Each check has a probability $\alpha_\epsilon$ of \textbf{success}, that is, seeing a ``good'' case $G_i^t$ before a ``bad'' case $B_i^t$.
The \define{size} of the \textbf{success} is the number of ``other'' cases $O_i^t$ before the first ``good'' case.
We keep repeating the check until either a \textbf{failure} occurs or the sum of the sizes of the \textbf{success}es so far is more than $H$; before that, each \textbf{success} is a \textbf{jump}.
If Figure \ref{fig:LeftBorder} depicted a \textbf{jump}, the new cell would be at position $(i,t-n+2k)$, so its size would be $n-2k$.
% With a barrier on its right, we iterate the previous check (but ignore the level change, see steps 2 and 4 on Figure \ref{fig:MAC}). Each check has a probability $\alpha_\epsilon$ of success. The \define{jump} is the difference in time between each check. Because we ignore the level change, it corresponds to the number of ``other'' cases encountered before a ``good'' one.

We now analyze the size distribution of a \textbf{success}.
Denote by $N$ the size of a \textbf{success}.
Supposing that a \textbf{success} actually occurs, we find for $(a-2)k\leq d \leq H-2k$ that
\[
	P\left(N=d\mid\text{success}\right) = \frac{(p_{other}^{\epsilon})^{d} \, p_{good}^{\epsilon}}{\sum_{l=ak}^{H}(p_{other}^{\epsilon})^{l-2k} \, p_{good}^{\epsilon}} = (p_{other}^{\epsilon})^{d-(a-2)k}  \frac{s(\epsilon)}{1-(p_{other}^{\epsilon})^{H-ak}}
\]
and thus for $\frac{H}{i}\geq (a-2)k$,
\begin{align*}
P\left(N\geq\frac{H}{i}\mid\text{success}\right) & =\sum_{d=\frac{H}{i}}^{H}P\left(N=d\mid\text{success}\right)\\
	& = \frac{s(\epsilon)}{1-(p_{other}^{\epsilon})^{H-ak}} \sum_{d=\frac{H}{i}}^{H}(p_{other}^{\epsilon})^{d-(a-2)k}\\
	& = (p_{other}^{\epsilon})^{\frac{H}{i}-(a-2)k} \frac{1-(p_{other}^{\epsilon})^{\frac{i-1}{i}H}}{1-(p_{other}^{\epsilon})^{H-ak}}
\end{align*}

Using the previous computation of $(p_{other}^{\epsilon})^{H}$, we can deduce that 
\[
P\left(N\geq\frac{H}{i}\mid\text{success}\right) \tendsto{\epsilon}0 \left(\frac{1}{a}\right)^{1/i} \frac{1-\left(\frac{1}{a}\right)^{\frac{i-1}{i}}}{1-\frac{1}{a}}.
\]

\subsection{Computation of $\beta_{\epsilon}$, the probability of overcoming a barrier}
\label{sec:beta}

Recall that $\beta_{\epsilon}$ is the probability of overcoming a barrier. Let us fix $m \geq 1$.
For $i \geq 1$, let $A_{i}$ be the event that we have $i$ \textbf{success}es in a row and each has size at least $H/i$.
If $A_{i}$ occurs, then we realize $i$ \textbf{success}es in a row, which are all large enough so that added together, they overcome a barrier $H$.

Let $N_{l}$ be the size of the $l^{th}$ jump.
The random variables $(N_l)_{1 \leq l \leq i}$ are independent and identically distributed given $i$ \textbf{success}es in a row. Finally, define $\mathbf{A}_m = \bigcup_{i=1}^{m}A_{i}$. Then
\begin{align}
\beta_{\epsilon} & \geq P\left(\text{overcoming the barrier in at most }m\text{ jumps}\right) \notag \\
	& \geq P(\mathbf{A}_m) \notag \\
	& =\sum_{n=1}^{m}\left(-1\right)^{n+1}\sum_{1\leq i_{1}<\dots<i_{n}\leq m}P\left(A_{i_{1}}\cap\dots\cap A_{i_{n}}\right).\label{eq:beta_epsilon}
\end{align}
With the convention $i_{0}=0$, we have
\begin{align*}
P\left(A_{i_{1}}\cap\dots\cap A_{i_{n}}\right) & = P\left(\left(\bigcap_{j=1}^{n}\bigcap_{l=i_{j-1}+1}^{i_{j}}N_{l}\geq\frac{H}{i_{j}}\right) \cap i_{n}\text{ \textbf{success}es in a row}\right)\\
	& = \left(\prod_{j=1}^{n}\prod_{l=i_{j-1}+1}^{i_{j}}P\left(N\geq\frac{H}{i_{j}}\mid\text{\textbf{success}}\right)\right) \alpha_{\epsilon}^{i_{n}}\\
	& =\prod_{j=1}^{n}P\left(N\geq\frac{H}{i_{j}}\mid\text{\textbf{success}}\right)^{i_{j}-i_{j-1}} \alpha_{\epsilon}^{i_{n}}.
\end{align*}

Finally, using the computation of the previous section (supposing that $m\leq \frac{H}{ak}$, which is true for $\epsilon$ close enough to $0$) gives
\[
	P\left(A_{i_{1}}\cap\dots\cap A_{i_{n}}\right) \tendsto{\epsilon}0 \prod_{j=1}^{n} \left(\frac{1}{a}\right)^{\frac{i_{j}-i_{j-1}}{i_{j}}}  \underset{\tendsto a{\infty}1}{\underbrace{\left(\frac{1-\left(\frac{1}{a}\right)^{\frac{i_j-1}{i_j}}}{1-\frac{1}{a}} \right)^{i_{j}-i_{j-1}}}} \underset{\tendsto a{\infty}1}{\underbrace{\alpha^{i_{n}}}}.
\]
So when $n\geq2$, this product is equivalent to 
\[
	\prod_{j=1}^{n} \left(\frac{1}{a}\right)^{\frac{i_{j}-i_{j-1}}{i_{j}}} = \left(\frac{1}{a}\right)^{1 + \sum_{j=2}^n \frac{i_{j}-i_{j-1}}{i_{j}}} 
\] 
and so negligible with respect to $\frac{1}{a}$, while in the case $n=1$, $P(A_{i})\underset{a\to\infty}{\sim}\frac{1}{a}$. 

Thus, the limit when $\epsilon\to0$ of the sum in (\ref{eq:beta_epsilon}) is equivalent to $\frac{m}{a}$ when $a\to +\infty$. Taking $m$ greater than $109$ and $a$ big enough, the following holds:
\begin{equation}
\liminf_{\epsilon\to0}\beta_{\epsilon}\geq \lim_{\epsilon\to0} P(\mathbf{A}_m) \geq \frac{109}{a} \label{eq:conclusionDerive1}
\end{equation}
Thus, equation (\ref{eq:driftLeftBorder}) is verified: the level $L_n$ tends to $+\infty$ a.s..

\subsection{The right border}
\label{subsec:Bordure-Droite} 

We can use analogous definitions as for the left border to define a MAC $(R_n,K_n)$ on $\Z\times\{1,2\}$ bounding $r_t$, with the same ideas of synchronized zones and barriers. This time, we prove that the mean drift $\kappa^\prime$ is strictly negative to deduce that $R_n \tendsto{n}{\infty}-\infty$.
We skip some of the details that are analogous to the case of $(L_n, J_n)$, and reuse some of the notation with possibly different definitions.

\subsubsection{Definition of $R_n$}

The error variables $E_{i}^{t}$ are the same as in the definition of $L_{n}$. For the right border, the type $s$ of the arrow $\nearrow_s$ is less important than its age $l$ (the number of iterations since its creation by an error). For $0\leq l<ak$, the event $F_{i}^{t}\left(l\right)$ ``having an arrow with age $l$ at $\left(i,t\right)$" is defined by
\[
	F_{i}^{t}\left(l\right)\coloneqq\left(\bigcup_{s=0}^{ak-l-1}\left(E_{i-l}^{t-l}=\nearrow_{s}\right)\right)\cap\left(\bigcap_{n=0}^{l-1}\left(E_{i-n}^{t-n}=\bot\right)\right)
\]
with probability
\[
	P\left(F_{i}^{t}\left(l\right)\right)=\sum_{s=0}^{ak-l-1}\frac{\epsilon}{\left(a+1\right)k}\left(1-\epsilon\right)^{l}=\frac{ak-l}{\left(a+1\right)k}\epsilon\left(1-\epsilon\right)^{l}.
\]
We reuse the notations $O_{i}^{t}$, $B_{i}^{t}$ and $G_{i}^{t}$ for the other, bad and good cases, here defined by the decision process of Figure \ref{fig:GoodBadOtherDroite}.

\begin{figure}[!h]
\centering{}\begin{tikzpicture}[node distance=2cm, startstop/.style={rectangle, rounded corners,  text centered,  draw=black,  fill=red!30}, decision/.style={rectangle, text width=2.5cm,  text centered,  draw=black}, arrow/.style={thick,->,>=stealth}]

		\node (start) [startstop] {At $(i,t)\in\Z^2$};
		\node (errorPath) [decision, fill=green!30, right of=start, xshift=1cm] {Error in the $2k+1$ green cells};
		\node (Ait) [decision, below of=errorPath] {$F_i^t(l)$};
		\node (s) [decision, below of=Ait, yshift=0.5cm] {$l > 2k$};
		\node (triangle) [decision, fill=yellow!30, below of=s] {Error in the yellow zone};
		\node (other) [startstop, right of=Ait, xshift = 0.5cm] {$O_i^t$};
		\node (bad) [startstop, above left of=s, xshift = -2cm, yshift=-0.25cm] {$B_i^t$};
		\node (good) [startstop, right of=triangle, xshift = 0.5cm] {$G_i^t$};
		
		\draw [arrow] (start) -- (errorPath);
		
		\draw [arrow] (errorPath) -- node[anchor=west] {no} (Ait);
		\draw [arrow] (errorPath) -- node[anchor=south east] {yes} (bad);
		
		\draw [arrow] (Ait) -- node[anchor=south] {no} (other);
		\draw [arrow] (Ait) -- node[anchor=west] {yes} (s);
		
		\draw [arrow] (s) -- node[anchor=south west] {no} (bad);
		\draw [arrow] (s) -- node[anchor=west] {yes} (triangle);
		
		\draw [arrow] (triangle) -- node[anchor=north east] {yes} (bad);
		\draw [arrow] (triangle) -- node[anchor=south] {no} (good);
\end{tikzpicture}
\hspace{0.5cm}
	 \includegraphics[scale=0.55]{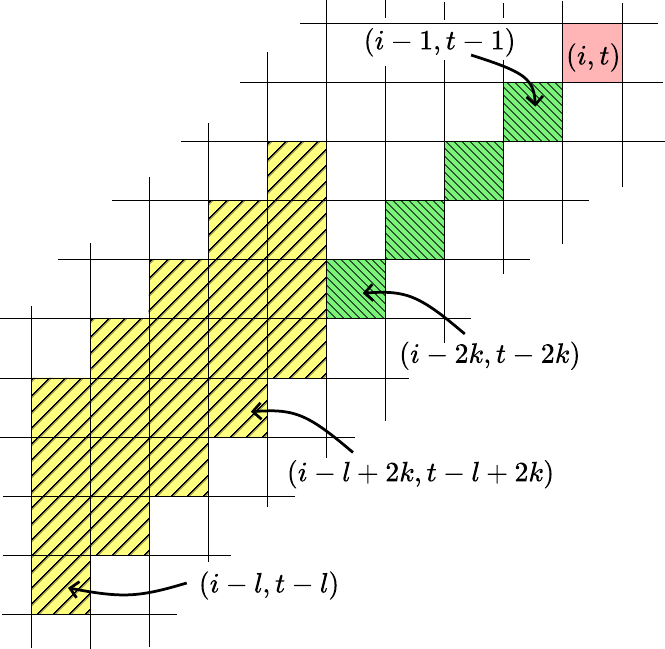} \caption{Left: the decision process. Right: its illustration. The green cells are in position $(i+m,t+m)$ with $0<m<2k$. The yellow zone is of height $2k+1$.\label{fig:GoodBadOtherDroite}}
\end{figure}

We can then compute the respective probabilities of these events:

\[
	P\left(G_{i}^{t}\right)=p_{good}^{\epsilon}\coloneqq\sum_{l=2k+1}^{ak-1}P\left(F_{i}^{t}\left(l\right)\right)\left(1-\epsilon\right)^{\left(l-2k+1\right)2k}\underset{\epsilon\to0}{\sim}\frac{(a-2)k((a-2)k-1)}{2(a+1)k}\epsilon\underset{a\to\infty}{\sim}\frac{ak\epsilon}{2}
\]
\[
	P\left(B_{i}^{t}\right)=p_{bad}^{\epsilon}\coloneqq1-\left(1-\epsilon\right)^{2k+1}+\sum_{l=2k+1}^{ak-1}P\left(F_{i}^{t}\left(l\right)\right)\left(1-\left(1-\epsilon\right)^{\left(l-2k+1\right)2k}\right)\underset{\epsilon\to0}{\sim}(2k+1)\epsilon.
\]
and $P\left(O_{i}^{t}\right)=p_{other}^{\epsilon}=1-p_{good}^{\epsilon}-p_{bad}^{\epsilon}=1-s\left(\epsilon\right)$ with $s\left(\epsilon\right)\tendsto{\epsilon}00$.
We can also compute the limit
\[
	\frac{p_{good}^{\epsilon}}{p_{good}^{\epsilon}+p_{bad}^{\epsilon}}\tendsto{\epsilon}0 D_{a,k} \underset{a\to\infty}{=}1-\frac{4+\frac{2}{k}}{a}+o\left(\frac{1}{a}\right)
      \]
      
The barrier size is again defined to be $H=\frac{\ln a}{s\left(\epsilon\right)}$. For $\left(i,t\right)\in\Z^{2}$, denote by
\[
	CI\left(i,t\right)=\left\{ \left(i+c,t-l\right)\mid0\leq c<2k,\,c\leq l<4k-c\right\} 
\]
the light blue zone illustrated on Figure \ref{fig:RightBorder}. The MAC $\left(R_{n},K_{n}\right)$ and its associated random time sequence $\left(t_{n}^{\prime}\right)$ is then defined analogously to $\left(L_{n},J_{n}\right)$.
The base case is $R_{0}=0$, $K_{0}=1$, $t_{0}^{\prime}=0$.
For all $n\in\N$, in the case $K_n = 1$ we define as follows.
	\begin{itemize}
		\item Fix $M\coloneqq\min\left\{ 0\leq m\leq H\mid O_{R_{n}+2k}^{t_{n}^{\prime}-2k-m}\text{ does not hold}\right\} $. 
		\item If the set is empty, or $\bigcup_{j,s\in CI\left(R_{n},t_{n}^{\prime}\right)}E_{j}^{s}\neq\bot$, or $B_{R_{n}+2k}^{t_{n}-2k-M}$ holds, we have a \textbf{failure}: 
                  \begin{equation}
                    \label{eq:Rn-failure}
			R_{n+1}=R_{n}+4ak,\:K_{n+1}=2\text{ and }t_{n+1}^{\prime}=t_{n}^{\prime}-4ak.
		\end{equation}
		\item Otherwise, if $G_{R_{n}+2k}^{t_{n}-2k-M}$ holds with $F_{R_{n}+2k}^{t_{n}-2k-M}(l)$, we have a \textbf{success}: 
		\[
			R_{n+1}=R_{n}+4k+1-l,\:K_{n+1}=1\text{ and }t_{n+1}^{\prime}=t_{n}^{\prime}-M-l.
		\]
		Denote by $\alpha_{\epsilon}$ the probability of \textbf{success}.
              \end{itemize}
In the case $K_n = 2$, we define as follows.
	\begin{itemize}
		\item Fix $M_{1}\coloneqq\min\left\{ 0\leq m\leq H\mid O_{R_{n}+2k}^{t_{n}^{\prime}-2k-m}\text{ does not hold}\right\} $ as in the case $K_n = 1$.
		\item If we have a \textbf{failure}, we apply Equation \ref{eq:Rn-failure}.
              \item If we have a \textbf{success} but $M_{1}+l<H$, then it's only a \textbf{jump}.
                Then we fix a new $M_2$ and iterate the last step.
                In general, fix $M_{j+1}\coloneqq\min\left\{ m\leq H\mid O_{R_{n}+2k}^{t_{n}^{\prime}-\left(4j+2\right)k-m-\sum_{r=1}^{j}M_{r}}\text{ is not verified}\right\} $.
                If we have enough \textbf{success}es in a row with $\sum_{r=1}^{l}M_{r}+4jk+l_{j}\geq H$ (with $l_{j}$ the $l$ of the last success), then the barrier $H$ is overcome:
		\[
			R_{n+1}=R_{n}+4k+1-l_{j},\:K_{n+1}=1\text{ and }t_{n+1}^{\prime}=t_{n}^{\prime}-\sum_{r=1}^{l}M_{r}-4jk-l_{j}.
                      \]
                      If we have a \textbf{failure} before this, the entire process is a \textbf{failure} and we apply Equation \eqref{eq:Rn-failure}.
		Denote by $\beta_{\epsilon}$ the probability of overcoming a barrier.
	\end{itemize}
A step is illustrated on Figure \ref{fig:RightBorder}, while Figure \ref{fig:MAC-R} illustrates a trajectory of $(R_n,K_n)$.

\begin{figure}[h]
    \centering
    \includegraphics[scale=0.5]{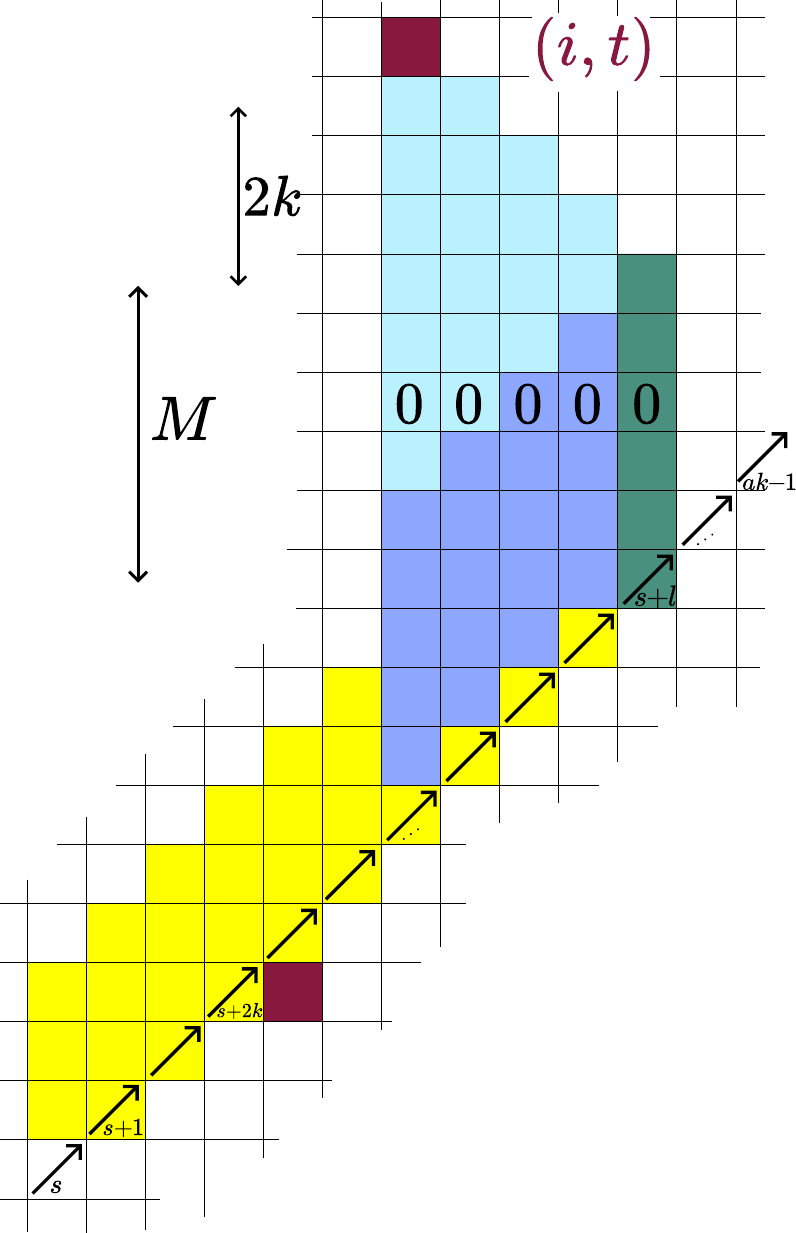}
    \caption{Suppose that $R_n=i$ and $t^\prime_n=t$. In green the column where we search for an arrow: here after $M-1$ ``other" cases (no errors in the $2k$ cells in diagonal, in dark blue) and no errors in the light blue area. As the arrow is of age $l$ and there is no errors on its path (yellow cells), it's a success. Then $R_{n+1} = i+4k+1-l$ and $t^\prime_{n+1}=t-M-l$.}
    \label{fig:RightBorder}
\end{figure}

\begin{figure}[!h]
\centering{}\includegraphics[scale=0.5]{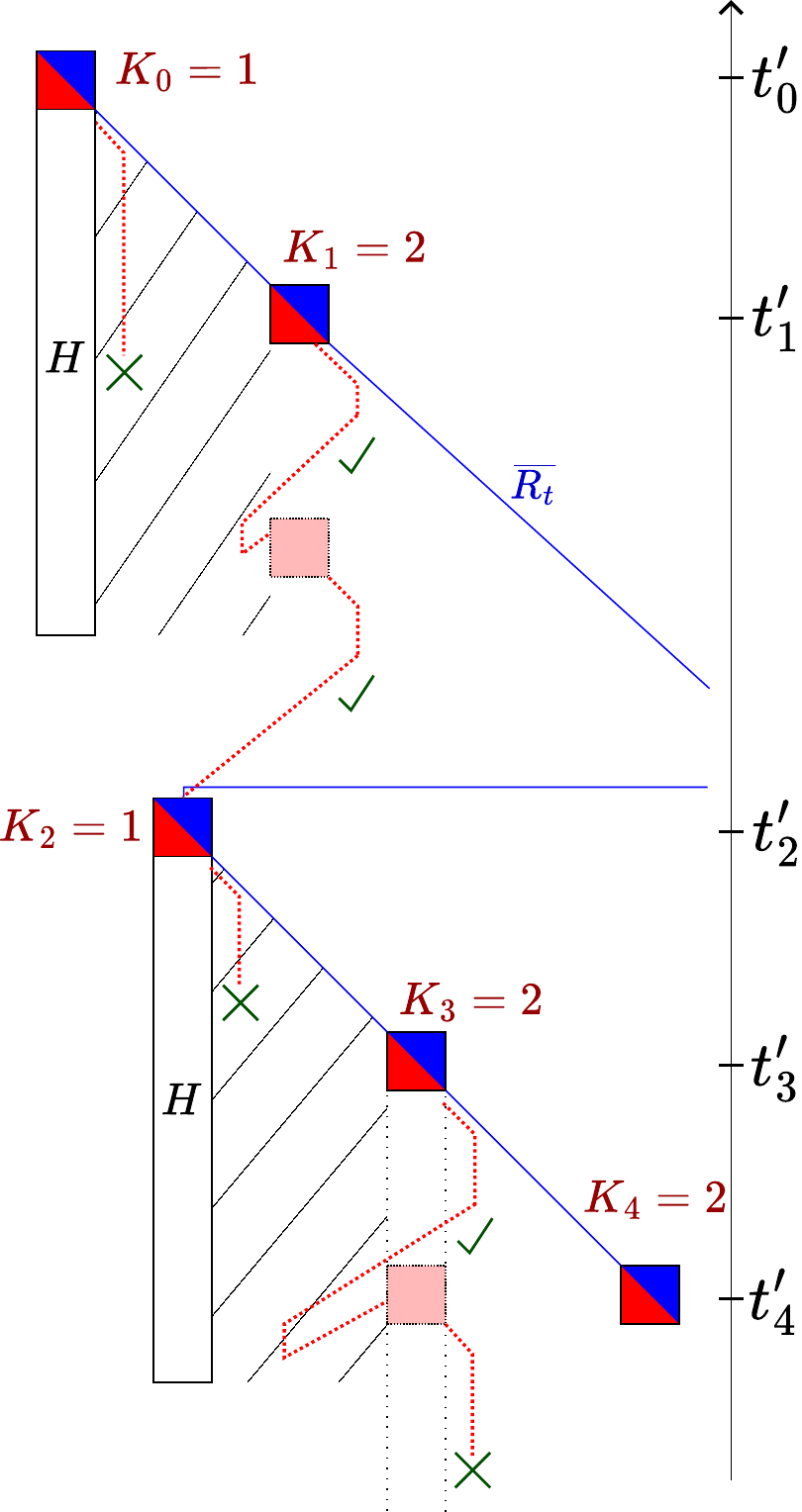} \caption{The first steps of $(R_{n},K_{n})$ and $\overline{R_{t}}$ (in blue, defined analogously to $\overline{L_{t}}$). The red-and-blue cells are at position $R_n$ at time $t^\prime_n$. The initial state is $(R_{0},K_{0})=(0,1)$. The first steps are: \protect \\
(1) The initial cell is not in a good synchronized zone: we encounter a bad case before $H$ checks, it's a failure. A barrier is represented and the new state is $(R_{1},2)$ with $R_{1}>R_{0}$.\protect \\
(2) The first check is a success (the cell is in a synchronized zone large enough), but before $H$ checks: it's a jump. The second check succeed after enough steps to overcome the barrier. The last success define the new level $R_{2}\protect\leq R_{1}$ and $K_{2}=1$.\protect \\
(3) It's a failure, $R_{3}>R_{2}$ and a barrier is represented.\protect \\
(4) The first check is a jump, bu the second one is a failure: $R_{4}<R_{3}$ and a new barrier is represented. \label{fig:MAC-R}}
\end{figure}

\subsubsection{Interpretation of $R_n$}

Analogously to the case of $L_n$, we use $\left(R_{n},t_{n}^{\prime}\right)$ to represent points where information must flow to its left to influence the left half-configuration at $t=0$. After a failure, the information can follow the natural slope of $1$ (Lemma \ref{lem:passageOfTime}). After a success, $R_n$ is on the border of a synchronized zone larger than $2k+1$ which is shifted at speed $1$ to the right, so the information cannot go through it.

We claim that the proof of Proposition \ref{prop:Ln-lt} still stands for the next proposition. Figure \ref{fig:Rn-rt} illustrates the left shift of the right border of the dependence cone after a success.
\begin{prop}
Let $\omega\in\Omega$. For all $n\in\N$, $r_{-t_{n}^{\prime}+2ak}\leq R_{n}+2ak$.
\end{prop}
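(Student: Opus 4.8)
The plan is to mirror the proof of Proposition \ref{prop:Ln-lt} almost verbatim, with left and right — and the inequalities $\geq$ and $\leq$ — interchanged. First I would introduce, for $s \geq -t'_n$, the quantity $r^n_s$ defined to be the right border, at time $-s$, of the dependence cone of the semi-configuration carried by $(-\infty, R_n]$ at time $t'_n$, namely
\[
  r^n_s \coloneqq \max\left\{ m \le R_n + s + t'_n \mid \exists \alpha\in\A^{\Z^-},\ \begin{array}{ccl} \A^{R_n + s + t'_n - m + 1} & \to & \A^{\Z^-} \\ \beta & \mapsto & \Psi^{s+t'_n}_{R_n^-}(\alpha\beta, U^{+t'_n}) \end{array} \text{ is not constant} \right\},
\]
with $(U^{+t'_n})^t = U^{t+t'_n}$, so that $r^0_s = r_s$ and, directly from the constraint in the maximum, $r^n_{-t'_n + 2ak} \le R_n + 2ak$. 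It then suffices to prove, by induction on $n$, that $r_{-t} \le r^n_{-t}$ for every $t \le t'_n - 2ak$; evaluating this at $t = t'_n - 2ak$ yields the proposition, the $2ak$ slack in the induction hypothesis reflecting the speed-$1$ bound of Lemma \ref{lem:passageOfTime}.

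For the induction, the base case $n = 0$ is the identity $r_s = r^0_s$. In the inductive step, since $t'_{n+1} < t'_n$ the bound $r_{-t} \le r^n_{-t}$ already holds for $t \le t'_{n+1} - 2ak$, so I only need $r^n_{-t} \le r^{n+1}_{-t}$ on that range, i.e. that the semi-configuration on $(-\infty, R_n]$ at time $t'_n$ is a function of the semi-configuration on $(-\infty, R_{n+1}]$ at time $t'_{n+1}$. If $K_{n+1} = 2$ (a \textbf{failure}), then $R_{n+1} = R_n + 4ak$ and $t'_{n+1} = t'_n - 4ak$, and Lemma \ref{lem:passageOfTime} gives $r^n_{-t'_{n+1}} \le R_n + 4ak = R_{n+1}$, which is precisely that functional dependence. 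If $K_{n+1} = 1$ (a \textbf{success}), then I would argue that $(R_n, t'_n)$ lies at the left edge of a block of the $\F$-layer of width more than $2k+1$ that is synchronized by an arrow which dies at $(R_{n+1}, t'_{n+1})$ and travels rigidly rightward at speed $1$; since $T$ resets the $\G$-layer to $0_\G$ inside this block once every $k < 2k+1$ steps and no $\F$-arrow crosses it, no $\G$-information sitting to the right of the arrow at time $t'_{n+1}$ can influence $(-\infty, R_n]$ at time $t'_n$. This again gives the required functional dependence and closes the induction; see Figure \ref{fig:Rn-rt}, which plays the role of Figure \ref{fig:Ln-lt}.

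The step that genuinely requires care — the expected main obstacle — is the geometry of the \textbf{success} case for the right-moving barrier. I must check that the synchronized block keeps width more than $2k+1$ throughout the interval $[t'_{n+1}, t'_n]$, that its right boundary advances at speed exactly $1$ so it never escapes the speed-$1$ light cone issued from $(R_{n+1}, t'_{n+1})$, and that the extra region $CI(R_n, t'_n)$ on which the \textbf{success} event is conditioned lies inside the part of the configuration already screened off by the block, so that no error or arrow leaks information around it. All three should follow directly from the definitions of $F^t_i(l)$, $G^t_i$ and $CI(i,t)$ together with the constraint $l > 2k$, in exactly the way the corresponding facts were immediate for $(L_n, J_n)$ in Proposition \ref{prop:Ln-lt}; once they are secured, the argument reproduces that proof line by line.
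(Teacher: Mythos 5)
Your plan follows the paper's own approach exactly: the paper simply says that the proof of Proposition \ref{prop:Ln-lt} carries over, and your definition of $r^n_s$, the induction on $n$, and the failure case ($K_{n+1}=2$, handled by Lemma \ref{lem:passageOfTime}) are all correct mirror images of it. The gap is in the success case, and it is a real one: you assert that the success is due to ``an arrow which dies at $(R_{n+1}, t'_{n+1})$.'' It does not die there. The success event identifies an arrow of \emph{age} $l > 2k$ at $(R_n+2k,\,t'_n-2k-M)$; unwinding the definitions $R_{n+1} = R_n + 4k+1-l$ and $t'_{n+1} = t'_n - M - l$, the arrow was created at $(R_{n+1}-2k-1,\, t'_{n+1}-2k)$, so at time $t'_{n+1}$ it is only $2k$ steps old, alive, and sitting at position $R_{n+1}-1$, one cell to the \emph{left} of $(R_{n+1},t'_{n+1})$; it then keeps travelling right (Figure \ref{fig:Rn-rt} draws it continuing well past that point). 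This is precisely the asymmetry between the two chains: $L_n$ is keyed on the arrow's \emph{type} $s$, which predicts when it will die, and the step lands at the death point; $R_n$ is keyed on the arrow's \emph{age} $l$, which measures the synchronized trail already laid down behind it, and the step lands just ahead of the still-living arrow $2k$ steps after its birth. Transcribing the $L_n$ argument ``almost verbatim,'' as you propose, imports the wrong geometry.

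Moreover, the arrow-dies claim contradicts the rest of your own sketch. You correctly list as a check that the block's ``right boundary advances at speed exactly $1$ so it never escapes the speed-$1$ light cone issued from $(R_{n+1}, t'_{n+1})$'' --- but that advancing boundary \emph{is} the arrow, and if it had died at $(R_{n+1},t'_{n+1})$ the right edge of the synchronized region would stall and the screening argument would not close. The correct picture, which your checklist implicitly needs, is: the arrow survives past $(R_{n+1},t'_{n+1})$ (guaranteed by the green, yellow and $CI$ conditioning in the success event), its synchronized trail has width at least $2k+1$ from time $t'_{n+1}$ onward, and since both the arrow and the speed-one cone issued from $(R_{n+1},t'_{n+1})$ move rightward at speed one, any $\G$-information entering from $[R_{n+1}+1,\infty)$ is pushed into the synchronized block and reset to $0_\G$ before it can cross the more than $k$ cells separating it from $(-\infty, R_n]$ at time $t'_n$. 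Once you replace the ``arrow dies'' reading with this one, the induction step goes through as you intend.
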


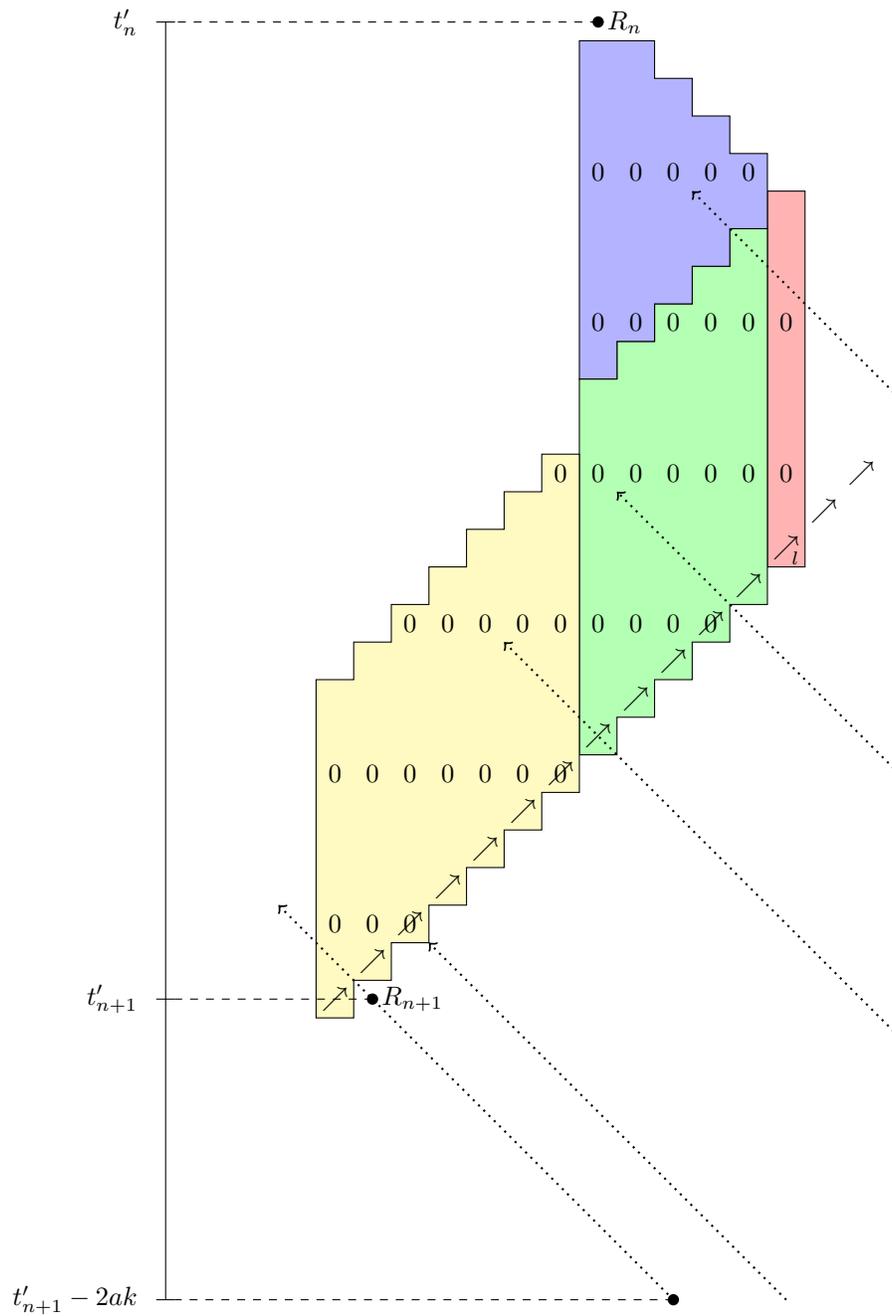
\begin{figure}[!h]
\centering{}
\begin{tikzpicture}[scale=0.5]

  \pgfmathsetmacro{\height}{10}

  \draw[fill=red!30] (5,5) rectangle ++(1,\height);
  
  \draw[fill=green!30] (0,0)
  \foreach \x in {0,...,3}{
    -| ++(1,1)
  }
  -| ++(1,\height)
  \foreach \x in {0,...,4}{
    -| ++(-1,-1)
  }
  -- cycle;

  \draw[fill=yellow!30] (0,0)
  \foreach \x in {0,...,6}{
    |- ++(-1,-1)
  }
  -- ++(0,9)
  \foreach \x in {0,...,5}{
    -| ++(1,1)
  }
  -| cycle;

  \draw[fill=blue!30] (0,\height)
  \foreach \x in {0,...,4}{
    -| ++(1,1)
  }
  \foreach \x in {0,...,3}{
    |- ++(-1,1)
  }
  -| cycle;

  \node [anchor=west] at (5.4,5.25) {${\scriptstyle l}$};
  %\node [anchor=west] at (1.15,1.25) {${\scriptstyle s+1}$};
  %\node [anchor=west] at (8.15,8.25) {${\scriptstyle ak-1}$};
  \foreach \x in {-7,...,7}{
    \node at (\x+0.5,\x+0.5) {$\nearrow$};
  }
  
  \foreach \x in {-7,...,-5}{
    \node at (\x+0.5,-4.5) {$0$};
  }
  \foreach \x in {-7,...,-1}{
    \node at (\x+0.5,-0.5) {$0$};
  }
  \foreach \x in {-5,...,3}{
    \node at (\x+0.5,3.5) {$0$};
  }
  \foreach \x in {-1,...,5}{
    \node at (\x+0.5,7.5) {$0$};
  }
  \foreach \x in {0,...,5}{
    \node at (\x+0.5,11.5) {$0$};
  }
  \foreach \x in {0,...,4}{
    \node at (\x+0.5,15.5) {$0$};
  }

  \draw (-11,-14.5) -- (-11,19.5);
  \draw (-10.75,-14.5) -- ++(-0.5,0);
  \node [left] at (-11.5,-14.5) {$t^{\prime}_{n+1}-2ak$};
  \draw (-10.75,-6.5) -- ++(-0.5,0);
  \node [left] at (-11.5,-6.5) {$t^{\prime}_{n+1}$};
  \draw (-10.75,19.5) -- ++(-0.5,0);
  \node [left] at (-11.5,19.5) {$t^{\prime}_n$};

  \draw [thick, dotted, ->] (2.5,-14.5) -- ++(-10.5,10.5);
  \draw [thick, dotted, ->] (5.5,-14.5) -- ++(-9.5,9.5);
  \draw [thick, dotted, ->] (8.5,-7.5) -- ++(-10.5,10.5);
  \draw [thick, dotted, ->] (8.5,-0.5) -- ++(-7.5,7.5);
  \draw [thick, dotted, ->] (8.5,9.5) -- ++(-5.5,5.5);
  
  \fill (0.5,19.5) circle (0.15cm);
  \node [right] at (0.5,19.5) {$R_n$};
  \draw[dashed] (0.5,19.5) -- (-10.75,19.5);

  \fill (-5.5,-6.5) circle (0.15cm);
  \node [right] at (-5.5,-6.5) {$R_{n+1}$};
  \draw[dashed] (-5.5,-6.5) -- (-10.75,-6.5);
  
  \fill (2.5,-14.5) circle (0.15cm);
  \draw[dashed] (2.5,-14.5) -- (-10.75,-14.5);
  
\end{tikzpicture}
\caption{Illustration of a success in $R_{n}$ (not to scale). The interpretation is analogous to Figure \ref{fig:Ln-lt}. \label{fig:Rn-rt}}
\end{figure}

\begin{cor}
\label{cor:rtInfty}For a fixed $\omega\in\Omega$, if $R_{n}\tendsto n{+\infty}-\infty$ then $r_{t}\tendsto t{+\infty}-\infty$. 
\end{cor}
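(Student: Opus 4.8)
The plan is to mirror the deduction of $l_t \tendsto{t}{+\infty}+\infty$ from $L_n \tendsto{n}{+\infty}+\infty$, with every inequality reversed and every shift taken in the opposite direction. First I would introduce the hull $\overline{R_t} \coloneqq R_n + (t_n' - t)$, where $n$ is the unique index with $t_{n+1}' < t \leq t_n'$; this is well defined because $(t_n')$ is, by construction, strictly decreasing and therefore diverges to $-\infty$. It traces the polygonal line through the vertices $(R_n, t_n')$, with slope $+1$ between consecutive vertices, and is meant to bound $r_t$ from above after the relevant time shift.

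Next I would record, as a direct consequence of the preceding Proposition ($r_{-t_n'+2ak} \leq R_n + 2ak$) together with the speed bound $r_{u+s} \leq r_u + s$ from Lemma \ref{lem:passageOfTime}, the inequality $r_{-t+2ak} \leq \overline{R_t} + 2ak$ valid for all $t$: for $t_{n+1}' < t \leq t_n'$, writing $-t+2ak = (-t_n'+2ak) + (t_n'-t)$ with $t_n' - t \geq 0$ gives
\[
r_{-t+2ak} \leq r_{-t_n'+2ak} + (t_n' - t) \leq R_n + 2ak + (t_n' - t) = \overline{R_t} + 2ak.
\]
Exactly as for $(t_n)$ in the $L_n$ case, the construction of $(R_n, K_n)$ forces $t_n' - t_{n+1}' \leq 2H$ for all $n$ (a \textbf{failure} lowers $t'$ by $4ak$, while a \textbf{success} or a chain of \textbf{jump}s lowers it by at most a bounded multiple of $H$, and $4ak < 2H$ once $\epsilon$ is small enough). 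Hence for $t_{n+1}' < t \leq t_n'$ we get $\overline{R_t} = R_n + (t_n'-t) \leq R_n + 2H$.

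Now suppose $R_n \tendsto{n}{+\infty} -\infty$ for the fixed $\omega$. As $t \to -\infty$ the index $n = n(t)$ satisfying $t_{n+1}' < t \leq t_n'$ tends to $\infty$, so $\overline{R_t} \leq R_n + 2H \tendsto{t}{-\infty} -\infty$, and combining this with the displayed inequality, $r_{-t+2ak} \leq \overline{R_t} + 2ak \tendsto{t}{-\infty} -\infty$. Substituting $u = -t+2ak$ yields $r_u \tendsto{u}{+\infty} -\infty$, which is the claim. The only genuinely non-mechanical point — and one the text already asserts — is that the proof of Proposition \ref{prop:Ln-lt} transfers to the right border: one must check that a \textbf{success} at step $n$ places $R_n$ at the boundary of an $\F$-synchronized segment of width more than $2k$, swept to the right at speed $1$ and crossed by no arrow, so that $\G$-layer information to the right of $(R_{n+1}, t_{n+1}')$ cannot reach the left half-configuration at $(R_n, t_n')$, and that after a \textbf{failure} only the speed-$1$ bound of Lemma \ref{lem:passageOfTime} is needed. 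Keeping the directions of all the inequalities and shifts consistent (max instead of min, $\leq$ instead of $\geq$, a right shift of the dependence-cone border instead of a left shift) is then the main thing to be vigilant about; no new idea is required.
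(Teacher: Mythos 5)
Your argument is correct and mirrors exactly the paper's (implicit) proof, which the authors carry out explicitly only for the left border: define the hull $\overline{R_t}$ with the slope sign flipped, combine the preceding proposition with the speed bound from Lemma~\ref{lem:passageOfTime} to get $r_{-t+2ak} \leq \overline{R_t} + 2ak$, then use the bounded time increments $t_n' - t_{n+1}' \leq 2H$ and $R_n \to -\infty$ to conclude. You have also correctly flagged the one non-mechanical point — that the argument of Proposition~\ref{prop:Ln-lt} transfers to the right border — which the paper likewise only asserts.
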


\subsection{Mean drift of the right border}
\label{subsec:Derive-Bordure-droite}

We can re-use the computations made for the left border with our new parameters:
\[
\alpha_{\epsilon}=\left(1-\epsilon\right)^{2k\left(2k+1\right)}\sum_{m=1}^{H}\left(p_{other}^{\epsilon}\right)^{m-1}p_{good}^{\epsilon}\tendsto{\epsilon}0 D_{a,k} \left(1-\frac{1}{a}\right)\eqqcolon\alpha
\]
and $\alpha\underset{a\to\infty}{=}1-\frac{5+\frac{2}{k}}{a}+o\left(\frac{1}{a}\right)$.

For the jump computations: if $N$ is the size of a jump,

\[
P\left(N=d\mid\text{\textbf{success}}\right)=\frac{\left(1-\epsilon\right)^{2k\left(2k+1\right)}(p_{other}^{\epsilon})^{d}\,p_{good}^{\epsilon}}{\left(1-\epsilon\right)^{2k\left(2k+1\right)}\sum_{m=1}^{H}(p_{other}^{\epsilon})^{m-1}\,p_{good}^{\epsilon}}
\]
and the same computations give
\[
P\left(N\geq\frac{H}{i}\mid\text{\textbf{success}}\right)\tendsto{\epsilon}0\left(\frac{1}{a}\right)^{1/i}\frac{1-\left(\frac{1}{a}\right)^{\frac{i-1}{i}}}{1-\frac{1}{a}}.
\]
The conclusion is then the same: for a fixed $m\in\N^{*}$, one can find a function $h:[0,1]\times\N\to[0,1]$ such that $\lim_{\epsilon\to0}h\left(\epsilon,a\right)\underset{a\to\infty}{\sim}\frac{m}{a}$ and $\beta_{\epsilon}\geq h\left(\epsilon,a\right)$. The mean drift of $\left(R_{n},K_{n}\right)$ is
\begin{align*}
\kappa^{\prime} & =\pi_{1}\sum_{m\in\Z}mP\left(R_{1}=m\mid K_{1}=1\right)+\pi_{2}\sum_{m\in\Z}mP\left(R_{1}=m\mid K_{1}=2\right)\\
 & =\pi_{1}\sum_{l=2k+1}^{ak-1}\left(4k+1-l\right)P\left(R_{1}=4k+1-l\mid K_{1}=1\right)+\pi_{2}4ak
\end{align*}
with $\pi=(\pi_{1},\pi_{2})\coloneqq\frac{1}{1-\alpha_{\epsilon}+\beta_{\epsilon}}(\beta_{\epsilon},1-\alpha_{\epsilon})$. The mean drift is then strictly negative if 

\begin{equation}
\beta_{\epsilon}\sum_{l=2k+1}^{ak-1}\left(l-4k-1\right)P\left(R_{1}=4k+1-l\mid K_{1}=1\right)>\left(1-\alpha_{\epsilon}\right)4ak.\label{eq:driftRightBorder}
\end{equation}
Observe that
\[
P\left(R_{1}=4k+1-l\mid K_{1}=1\right)=\frac{P\left(F_{0}^{0}\left(l\right)\right)}{P\left(\bigsqcup_{j=2k+1}^{ak-1}F_{0}^{0}\left(j\right)\right)}\tendsto{\epsilon}0\frac{2\left(ak-l\right)}{\left(\left(a-2\right)k-1\right)\left(a-2\right)k}
\]
and thus
\begin{align*}
\lim_{\epsilon\to0}\sum_{l=2k+1}^{ak-1}\left(l-4k-1\right)P\left(R_{1}=4k+1-l\mid K_{1}=1\right) & =\frac{2\sum_{l=2k+1}^{ak-1}\left(l-4k-1\right)\left(ak-l\right)}{\left(\left(a-2\right)k-1\right)\left(a-2\right)k}\\
 & \underset{a\to\infty}{\sim}\frac{ak}{3}.
\end{align*}
Finally, a sufficient condition is
\begin{equation}
\liminf_{\epsilon\to0}\beta_{\epsilon}>\frac{\left(1-\alpha\right)4ak}{\lim_{\epsilon\to0}\sum_{l=2k+1}^{ak-1}\left(l-4k-1\right)P\left(R_{1}=4k+1-l\mid K_{1}=1\right)}.\label{eq:conclusionDerive2}
\end{equation}
As $a\to\infty$, the right hand side is equivalent to $\frac{60+\frac{24}{k}}{a}<\frac{61}{a}$ (for $k>24$). The condition is then verified by choosing $m=61$: for $a$ large enough, $\kappa^{\prime}<0$ for $\epsilon$ small enough, thus $R_{n}\tendsto n{\infty}-\infty$ almost surely.

\subsection{Combining the left and right borders}

\begin{prop}\label{prop:Croisement p.s.}
If $L_n \tendsto{n}{\+\infty}+\infty$ and $R_n\tendsto{n}{+\infty}-\infty$ a.s., then $p_t(T_\epsilon)\tendsto{t}{+\infty}1$.
\end{prop}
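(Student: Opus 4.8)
The idea is purely one of assembly: combine Lemma~\ref{lem:emptyDependenceCone} with the corollaries controlling $l_t$ and $r_t$ by $L_n$ and $R_n$, and then upgrade the resulting almost-sure statement to a statement about the numbers $p_t(T_\epsilon)$.

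First I would fix the probability-one event $\Omega_0 = \{L_n \to +\infty\} \cap \{R_n \to -\infty\}$, which has full measure by hypothesis. For each $\omega \in \Omega_0$, the corollary following Proposition~\ref{prop:Ln-lt} gives $l_t(\omega) \to +\infty$ and Corollary~\ref{cor:rtInfty} gives $r_t(\omega) \to -\infty$; hence there is a finite $T(\omega)$ with $l_t(\omega) \geq 1 > -1 \geq r_t(\omega)$, and in particular $l_t(\omega) > r_t(\omega)$, for all $t \geq T(\omega)$ (the case $l_t(\omega) = +\infty$ is trivially covered). By Lemma~\ref{lem:emptyDependenceCone}, the map $x \mapsto \Psi^t(x; U(\omega))_0$ is then constant for every $t \geq T(\omega)$. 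Writing $A_t = \{\omega \mid x \mapsto \Psi^t(x; U^{-t}, \dots, U^0)_0 \text{ is constant}\}$, so that $p_t(T_\epsilon) = P(A_t)$, we have just shown $\Omega_0 \subseteq \bigcup_{T\in\N} \bigcap_{t \geq T} A_t$, hence $P\bigl(\bigcup_{T} \bigcap_{t \geq T} A_t\bigr) = 1$.

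It remains to deduce $P(A_t) \to 1$. The cleanest way is to note that the events $A_t$ are nested, $A_t \subseteq A_{t+1}$. Indeed, from the defining recursion one gets by a short induction the identity $\Psi^{t+1}(x; U^{-t-1}, U^{-t}, \dots, U^0) = \Psi^t\bigl(\Psi(x; U^{-t-1}); U^{-t}, \dots, U^0\bigr)$; so if $z \mapsto \Psi^t(z; U^{-t}, \dots, U^0)_0$ is constant over all $z \in \A^\Z$, it is a fortiori constant over the image of $x \mapsto \Psi(x; U^{-t-1})$, which says precisely $\omega \in A_{t+1}$. Given this monotonicity, $\bigcup_{T} \bigcap_{t \geq T} A_t = \bigcup_t A_t$, so $P(\bigcup_t A_t) = 1$, and by continuity of the measure $p_t(T_\epsilon) = P(A_t) \uparrow 1$. (If one prefers to avoid the monotonicity observation: for each $T$ and each $t \geq T$ one has $\bigcap_{s \geq T} A_s \subseteq A_t$, whence $\liminf_t P(A_t) \geq \sup_T P\bigl(\bigcap_{s \geq T} A_s\bigr) = 1$.)

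I do not expect a genuine obstacle in this proposition: all the real work sits in the earlier sections (the mean-drift computations forcing $L_n \to +\infty$ and $R_n \to -\infty$, and the geometric lemmas bounding $l_t, r_t$). The only point requiring minimal care is the soft measure-theoretic step above, turning ``almost surely the dependence cone eventually empties'' into ``$p_t(T_\epsilon) \to 1$'', for which either the nestedness of $(A_t)$ or the $\liminf$ bound suffices. Together with Proposition~3.3 of \cite{MST19}, this yields the uniform ergodicity asserted in Theorem~\ref{thm:TransitionDePhase}(1).
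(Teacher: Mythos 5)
Your proof is correct and follows essentially the same route as the paper: fix a full-measure set where $L_n \to +\infty$ and $R_n \to -\infty$, deduce via the corollaries that $l_t > r_t$ eventually, apply Lemma~\ref{lem:emptyDependenceCone}, and conclude $p_t \to 1$. The one place you go further than the paper is in spelling out why ``almost surely eventually constant'' yields $P(A_t) \to 1$ (via the nestedness $A_t \subseteq A_{t+1}$, or equivalently the $\liminf$ bound); the paper's terse ``Therefore'' implicitly leans on the same observation, so your extra care there is well placed rather than a deviation.
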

\begin{proof}
Fix $\omega\in\Omega$ such that $L_n \tendsto{n}{\+\infty}+\infty$ and $R_n\tendsto{n}{+\infty}-\infty$. This implies that $l_t \tendsto{t}{+\infty}+\infty$ and $r_t \tendsto{t}{+\infty}-\infty$. So there exists a $T(\omega) < \infty$ such that $l_T > r_T$.

Then almost surely, there exists $T < \infty$ such that $x\mapsto \Psi^{T} \left(x ; U(\omega)\right)_0$ is constant by Lemma \ref{lem:emptyDependenceCone}. Therefore, $P\left(x\mapsto \Psi^t\left(x ; U(\omega)\right)_0 \text{ is constant}\right) \tendsto{t}{+\infty}1$.
\end{proof}

We can then finish the proof of Theorem \ref{thm:TransitionDePhase}. First fix a $k$ large enough so that it verifies the conditions stated in Section \ref{sec:errorsGLayer}. Then, fix $a$ large enough so that Equations \ref{eq:conclusionDerive1} and \ref{eq:conclusionDerive2} are verified. This means that for $\epsilon$ small enough, Equations \ref{eq:driftLeftBorder} and \ref{eq:driftRightBorder} are verified, and thus $L_{n}\tendsto n{+\infty}+\infty$ and $R_{n}\tendsto n{+\infty}-\infty$ almost surely. In other words, there exists an error rate $\epsilon_{1}$ such that for all $\epsilon\leq\epsilon_1$, the hypotheses of Proposition \ref{prop:Croisement p.s.} are verified and $T_{\epsilon}$ is ergodic. With our choice of $k$, $T_{\epsilon_2}$ is not ergodic.

\section{Conclusions}

This work shows that the ergodicity of a perturbed cellular automata with positive rate can have more than one phase transition depending on the value of the noise. An interesting direction to continue this work would be to understand which sets can be obtained as 
\[\{\epsilon\in[0,1]:\textrm{$F_\epsilon$ is ergodic}\}\]
where $F_\epsilon$ is a perturbation of a cellular automaton $F$ by a uniform noise of rate $\epsilon$.
Our construction is hardly adaptable, in part because we do not understand fully the invariant measures of the perturbation of the G\'acs cellular automaton.
In particular, we do not know how many phase transitions for ergodicity the perturbation of G\'acs cellular automaton realizes, nor do we know that our CA $T$ only realizes two.

\printbibliography{}

\end{document}